\documentclass[twocolumn]{IEEEtran}

\usepackage{amsmath,epsfig,amssymb,verbatim,amsopn,cite,multirow}
\usepackage{amsthm}
\usepackage{balance}
\usepackage{multirow}
\usepackage{algorithm} 
\usepackage{algpseudocode} 
\usepackage[usenames,dvipsnames]{color}
\usepackage[all]{xy}  
\usepackage{url}
\usepackage{amsfonts}
\usepackage{amssymb}
\usepackage{epsfig}
\usepackage{epstopdf}
\usepackage{bm}
\usepackage{balance}
\usepackage{graphicx}
\usepackage{subcaption}
\usepackage{footnote}
\usepackage{cancel}
\usepackage{lipsum}
\usepackage{mathtools}
\usepackage{cuted}
\usepackage{multicol}
\usepackage{stfloats}
\usepackage[normalem]{ulem}

\usepackage[nodisplayskipstretch]{setspace}

\newtheorem{Remark}{Remark}

\newtheorem{proposition}{Proposition}

\setlength{\skip\footins}{10pt}

\newcommand{\qa}{{\bf a}}
\newcommand{\qb}{{\bf b}}

\newcommand{\qg}{{\bf g}}
\newcommand{\qh}{{\bf h}}

\newcommand{\qp}{{\bf p}}
\newcommand{\qq}{{\bf q}}

\newcommand{\qs}{{\bf s}}

\newcommand{\qu}{{\bf u}}
\newcommand{\qv}{{\bf v}}
\newcommand{\qw}{{\bf w}}
\newcommand{\qx}{{\bf x}}

\newcommand{\qA}{{\bf A}}
\newcommand{\qB}{{\bf B}}
\newcommand{\qC}{{\bf C}}
\newcommand{\qD}{{\bf D}}

\newcommand{\qG}{{\bf G}}
\newcommand{\qH}{{\bf H}}
\newcommand{\qI}{{\bf I}}

\newcommand{\qM}{{\bf M}}

\newcommand{\qR}{{\bf R}}

\newcommand{\qV}{{\bf V}}

\newcommand{\LZF}{\mathtt{LZF}}

\newcommand{\MRT}{\mathtt{MRT}}

\newcommand{\CZF}{\mathtt{CZF}}

\newcommand{\Ex}{\mathbb{E}}

\newcommand{\Kn}{\mathcal{K_N}}
\newcommand{\Kf}{\mathcal{K_F}}
\newcommand{\US}{\mathcal{U}_s}
\newcommand{\IS}{\mathcal{I}_s}

\newcommand{\Sa}{\mathcal{S}}

\newcommand{\SknMRT}{\bar{\mathcal{S}}_k^{\MRT}}
\newcommand{\SniMRT}{\bar{\mathcal{S}}_i^{\MRT}}
\newcommand{\SjfMRT}{\Tilde{\mathcal{S}}_j^{\MRT}}
\newcommand{\SkfMRT}{\Tilde{\mathcal{S}}_k^{\MRT}}

\newcommand{\SknZF}{\bar{\mathcal{S}}_k^{\CZF}}
\newcommand{\SinZF}{\bar{\mathcal{S}}_i^{\CZF}}
\newcommand{\SjfZF}{\Tilde{\mathcal{S}}_j^{\CZF}}
\newcommand{\SkfZF}{\Tilde{\mathcal{S}}_k^{\CZF}}
\newcommand{\SniZF}{\bar{\mathcal{S}}_i^{\CZF}}

\newcommand{\SknLZF}{\bar{\mathcal{S}}_k^{\LZF}}
\newcommand{\SinLZF}{\bar{\mathcal{S}}_i^{\LZF}}
\newcommand{\SjfLZF}{\Tilde{\mathcal{S}}_j^{\LZF}}
\newcommand{\SkfLZF}{\Tilde{\mathcal{S}}_k^{\LZF}}
\newcommand{\SniLZF}{\bar{\mathcal{S}}_i^{\LZF}}

\newcommand{\BXiN}{\bar{\boldsymbol{\xi}}}
\newcommand{\BXiF}{\tilde{\boldsymbol{\xi}}}

\newcommand{\BXiNMRT}{\bar{\boldsymbol{\xi}}^{\MRT}}
\newcommand{\BXiFMRT}{\tilde{\boldsymbol{\xi}}^{\MRT}}

\newcommand{\BEtaN}{\bar{\boldsymbol{\eta}}}
\newcommand{\BEtaF}{\tilde{\boldsymbol{\eta}}}

\newcommand{\Sin}{\text{sin}}
\newcommand{\Cos}{\text{cos}}

\newcommand{\OO}{\mathcal{O}}

\newcommand{\betakks}{\bar{\Psi}_{kk}^s}
\newcommand{\betakis}{\bar{\Psi}_{ki}^s}

\newcommand{\betakispcj}{\big(\bar{\Psi}_{ki}^{s'}\big)^*}
\newcommand{\Tcks}{\Tilde{\Psi}_{k}^s}

\newcommand{\Taks}{\Tilde{a}_{k}^{s}}
\newcommand{\Talphakjssp}{\Tilde{\Xi}_{kj}^{ss'}}
\newcommand{\Tbkjssp}{\Tilde{b}_{kj}^{ss'}}
\newcommand{\Tdkjssp}{\Tilde{d}_{kj}^{ss'}}
\newcommand{\balphakissp}{\bar{\Xi}_{ki}^{ss'}}

\newcommand{\betasi}{\bar{\eta}_{si}}
\newcommand{\betask}{\bar{\eta}_{sk}}
\newcommand{\betaspi}{\bar{\eta}_{s'i}}
\newcommand{\Tetasj}{\Tilde{\eta}_{sj}}

\newcommand{\Tetask}{\Tilde{\eta}_{sk}}
\newcommand{\Tetaspj}{\Tilde{\eta}_{s'j}}

\newcommand{\betakCZF}{\bar{\eta}_{k}^{\CZF}}
\newcommand{\betaiCZF}{\bar{\eta}_{i}^{\CZF}}
\newcommand{\TetakCZF}{\Tilde{\eta}_k^{\CZF}}
\newcommand{\TetajCZF}{\Tilde{\eta}_j^{\CZF}}

\newcommand{\Txisj}{\Tilde{\xi}_{sj}}
\newcommand{\Txispj}{\Tilde{\xi}_{s'j}}
\newcommand{\bxisi}{\bar{\xi}_{si}}
\newcommand{\bxispi}{\bar{\xi}_{s'i}}

\newcommand{\bxisk}{\bar{\xi}_{sk}}
\newcommand{\Txisk}{\Tilde{\xi}_{sk}}

\newcommand{\Bbpsi}{\bar{\boldsymbol{\psi}}}
\newcommand{\Btpsi}{\tilde{\boldsymbol{\psi}}}

\newcommand{\bpsi}{\bar{\psi}}
\newcommand{\bpsisk}{\bpsi_{sk}}
\newcommand{\bpsisi}{\bpsi_{si}}
\newcommand{\bpsispi}{\bpsi_{s'i}}

\newcommand{\tpsi}{\tilde{\psi}}
\newcommand{\tpsisk}{\tpsi_{sk}}
\newcommand{\tpsisj}{\tpsi_{sj}}
\newcommand{\tpsispj}{\tpsi_{s'j}}

\newcommand{\Ttaukjsspi}{\Tilde{\tau}_{kj}^{ss'}}
\newcommand{\btaukisspi}{\bar{\tau}_{ki}^{ss'}}

\newcommand{\tSEk}{\widetilde{\text{SE}}_{k}}
\newcommand{\bSEk}{\overline{\text{SE}}_{k}}

\newcommand{\tSEkCZF}{\tSEk^{\CZF}}
\newcommand{\tSEkMRT}{\tSEk^{\MRT}}
\newcommand{\tSEkLZF}{\tSEk^{\LZF}}

\newcommand{\bwsiMRT}{\bar{\qw}_{si}^{\MRT}}

\newcommand{\bSEkCZF}{\bSEk^{\CZF}}
\newcommand{\bSEkMRT}{\bSEk^{\MRT}}
\newcommand{\bSEkLZF}{\bSEk^{\LZF}}

\newcommand{\tMRT}{\Tilde{T}^{\MRT}}
\newcommand{\bMRT}{\bar{T}^{\MRT}}
\newcommand{\tCZF}{\Tilde{T}^{\CZF}}
\newcommand{\bCZF}{\bar{T}^{\CZF}}
\newcommand{\tLZF}{\Tilde{T}^{\LZF}}
\newcommand{\bLZF}{\bar{T}^{\LZF}}

\usepackage{tikz}
\newcommand{\tikzxmark}{%
\tikz[scale=0.15] {
    \draw[line width=0.6,line cap=round] (0,0) to [bend left=5] (1,1);
    \draw[line width=0.6,line cap=round] (0.1,0.85) to [bend right=3] (0.8,0.05);
}}

\newcommand{\tSEkth}{\tilde{\mathcal{R}}_{k}}
\newcommand{\bSEkth}{\bar{\mathcal{R}}_{k}}
\newcommand{\bSINRk}{\overline{\text{SINR}}_k}
\newcommand{\tSINRk}{\widetilde{\text{SINR}}_k}
\newcommand{\tSINRkMRT}{\tSINRk^{\MRT}}
\newcommand{\bSINRkMRT}{\bSINRk^{\MRT}}

\newcommand{\tr}{\mathtt{tr}}

\newcommand{\Set}{\mathcal{S}}

\newcommand{\bTeta}{\boldsymbol{\Theta}}

\newcommand{\tBUk}{\widetilde{\mathrm{BU}}_k}
\newcommand{\tDSk}{\widetilde{\mathrm{DS}}_k}

\newcommand{\tUIj}{\widetilde{\mathrm{UI}}_{j}}
\newcommand{\tbUIi}{\widetilde{\mathrm{UI}}_{i}}

\newcommand{\bDSk}{\overline{\mathrm{DS}}_k}
\newcommand{\btUIj}{\overline{\mathrm{UI}}_{j}}
\newcommand{\bUIi}{\overline{\mathrm{UI}}_{i}}

\newcommand{\diag}{\mathrm{diag}}
\newcommand{\trace}{\mathrm{tr}}

\DeclareMathOperator{\rank}{\mathrm{rank}}

\newcommand{\normLt}[1]{{\|{#1}\|}}

\title{\fontsize{0.83cm}{1cm}\selectfont RIS-Assisted XL-MIMO for Near-Field and Far-Field Communications}

\author{Xiaomin Cao, Mohammadali Mohammadi,~\IEEEmembership{Senior Member,~IEEE,}
Hien Quoc Ngo,~\IEEEmembership{Fellow,~IEEE,} \\
Hyundong Shin,~\IEEEmembership{Fellow,~IEEE,} and  Michail Matthaiou,~\IEEEmembership{Fellow,~IEEE}
\thanks{
This work was supported by the U.K. Engineering and Physical Sciences Research
Council (EPSRC) grant (EP/X04047X/2) for TITAN Telecoms Hub. The work of X. Cao was supported in part by the China Scholarship Council. The work of  H. Q. Ngo was supported by the U.K. Research and Innovation Future Leaders Fellowships under Grant MR/X010635/1, and a research grant from the Department for the Economy Northern Ireland under the US-Ireland R\&D Partnership Programme. The work of M. Matthaiou was supported by the European Research Council (ERC) under the European Union’s Horizon 2020 research and innovation programme (grant agreement No. 101001331). An earlier version of this paper was presented in part at the 2024 IEEE WCNC [DOI:  10.1109/WCNC57260.2024.10570871].
The associate editor coordinating the review of this article and approving it for publication was Qingqing Wu. (\textit{Corresponding authors: Michail Matthaiou; Hyundong Shin.})
}
\thanks{X. Cao, M. Mohammadi, H. Q. Ngo, and M. Matthaiou are with the Centre for Wireless Innovation (CWI), Queen's University Belfast, BT3 9DT Belfast, U.K. (email: \{xcao13, m.mohammadi, hien.ngo, m.matthaiou\} @qub.ac.uk.)}

\thanks{  H. Q. Ngo, M. Matthaiou, and H. Shin are affiliated with the Department of Electronic Engineering, Kyung Hee University, Yongin-si, Gyeonggi-do 17104, Republic of Korea (e-mail: hshin@khu.ac.kr).}

}

\allowdisplaybreaks

\usepackage[utf8]{inputenc}

\begin{document}
\bstctlcite{IEEEexample:BSTcontrol}
\maketitle
\begin{abstract}
    We consider a reconfigurable intelligent surface (RIS)-assisted extremely large-scale multiple-input multiple-output (XL-MIMO) downlink system, where an XL-MIMO array serves two groups of single-antennas users, namely near-field users (NFUEs) and far-field users (FFUEs). FFUEs are subject to blockage, and their communication is facilitated through the RIS. We consider three precoding schemes at the XL-MIMO array, namely central zero-forcing (CZF), local zero-forcing (LZF) and maximum ratio transmission (MRT).  Closed-form expressions for the spectral efficiency (SE) of all users are derived for MRT precoding, while statistical-form expressions are obtained for CZF and LZF processing. A heuristic visibility region (VR) selection algorithm is also introduced to help reduce the computational complexity of the precoding scheme. Furthermore, we devise a two-stage phase shifts design and power control algorithm to maximize the sum of weighted minimum SE of two groups of users with CZF, LZF and MRT precoding schemes. The simulation results indicate that, when equal priority is given to NFUEs and FFUEs, the proposed design improves the sum of the weighted minimum SE by $31.9\%$, $37.8\%$, and $119.2\%$ with CZF, LZF, and MRT, respectively, compared to the case with equal power allocation and random phase shifts design. CZF achieves the best performance, while LZF offers comparable results with lower complexity. When prioritizing NFUEs or FFUEs, LZF achieves strong performance for the prioritized group, whereas CZF ensures balanced performance between NFUEs and FFUEs.   
\end{abstract}


\begin{IEEEkeywords}
    Extremely large-scale multiple-input multiple-output (XL-MIMO), near-field communications, power control, reconfigurable intelligent surface (RIS), visibility region (VR).  
\end{IEEEkeywords}

\section{Introduction}
To meet the growing demands for high data rates and massive connectivity beyond fifth generation (5G) and sixth generation (6G) networks, extremely large-scale multiple input multiple output (XL-MIMO), as one of the key enablers, has received significant attention~\cite{Wang:Surveys:2024}. By increasing the number of antennas by at least an order of magnitude — ranging from several hundred to thousands—XL-MIMO represents a major advance over massive MIMO, enabling high spectral efficiency (SE), energy efficiency (EE) and reliable massive connectivity \cite{matthaiou2021road, Wang:Surveys:2024}. Moreover, significant strides have been made in XL-MIMO hardware implementations; for instance, according to \cite{Deng:JSAC:2023}, a reconfigurable holographic surface-aided platform is capable of supporting the real-time transmission of high-definition video. In \cite{Li:TWC:2024}, the authors concluded that modular XL-MIMO can provide better spatial resolution by availing of its larger array aperture, paving the way for practical large-scale deployments. Despite these impressive advances in XL-MIMO, the signal received by some users may be blocked by the presence of large objects remains unsolved.

We address this problem by introducing a reconfigurable intelligent surface (RIS) which is a planar array equipped with a large number of low-cost passive elements, and each element is able to induce a certain phase shift \cite{Wu:2021:TCM}. Recent breakthroughs have been made in RIS hardware advancements. For example, in \cite{Zijian:TCOM:2023}, the authors proposed the concept of active RISs and focused on signal verification through experimental measurements based on a fabricated active RIS element, which revealed that active RISs can achieve a significant sum-rate gain compared to passive RISs. The paper \cite{Mu:TWC:2022} studied a simultaneously transmitting and reflecting RIS-aided downlink communication system. In \cite{Nerini:TWC:2024}, the authors explored novel modeling, architecture designs, and optimization schemes for beyond diagonal RIS based on graph theory. Leveraging the capability of RISs to customize the wireless propagation environment, a promising approach is to integrate RISs with XL-MIMO systems. RIS-assisted XL-MIMO systems are expected to harness the combined benefits of both technologies, especially when direct base station (BS)-user links are obstructed \cite{Tor:J_STSP:2024,Lee:TCOM:2024,Zhang:letters:2024}. Specifically, the authors of \cite{Tor:J_STSP:2024}
considered the presence of a RIS to address the challenge of non-line-of-sight (NLoS) scenarios in XL-MIMO systems, while the authors of \cite{Lee:TCOM:2024} studied the channel estimation problem for extremely large-scale RIS (XL-RIS) assisted multi-user XL-MIMO systems with hybrid beamforming structures.

Although RIS-assisted XL-MIMO systems provide a promising solution to solve the blockage problem, the increased dimensionality of XL-MIMO will notably alter the electromagnetic (EM) characteristics. Specifically, near-field (NF) propagation dominates as the array size becomes comparable to or larger than the user distance. The distinction between NF and far-field (FF) users (FFUEs)  affects the choice of the channel model, which is determined based on the Rayleigh distance~\cite{Cui:CMG:2023}. Users closer than the Rayleigh distance are modeled using NF (spherical wave) channels, whereas others are modeled using FF (planar wave) models. The precoding design in our work is based on the channel matrix and does not explicitly depend on whether a user is in the NF or FF region. Once the channel is defined, the precoding scheme is applied accordingly. 

Motivated by the advantages of XL-MIMO, studies on NF behaviors \cite{Lu:TWC:2022,Björnson:JOP:2020,Dardari:2020:JSAC} highlight the limitations of conventional models, such as the uniform plane wave (UPW) model, advocating for a uniform spherical wave (USW) model \cite{Lu:TWC:2022}. Three fundamental properties should be taken into account when the receiver is in the NF region of the array, which are the distances to the elements, the effective areas of the array, and the losses due to polarization mismatch \cite{Björnson:JOP:2020,Dardari:2020:JSAC}. However, the extra-large aperture causes each user/scatterer to be observable within a limited visibility region (VR) \cite{Carvalho:IWC:2020}. Consequently, channel statistics vary along the array, leading to the non-stationarity of XL-MIMO. Nevertheless, the above works have not accounted for this non-stationarity. The main causes of spatial non-stationarity in NF channels are as follows:
(i) \textit{The inherent nonlinearity} in the signal space of NF line-of-sight (LoS) paths \cite{Zhi:JSAC:2024, Chen:TWC:2024};
(ii) \textit{Potential partial blockage} of NF LoS components \cite{Chen:PIMRC:2024};
(iii) \textit{Spatially varying scattering environments} across different subarrays \cite{Xu:TSP:2024}. In this study, our consideration of spatial non-stationarity is primarily motivated by the nonlinear phase and amplitude variations across the array, as well as the variation in wave incident angles \cite{Lu:TWC:2022}. To effectively capture and mitigate these effects, we propose a practical and lightweight VR selection algorithm that is well suited for addressing the challenges posed by spatial non-stationarity in NF XL-MIMO systems. 

The impact of VR introduced by XL-MIMO arrays was considered in \cite{Xiao:WCNC:2024,Amiri:Globecom:2018,Zhi:JSAC:2024, Xu:TSP:2024, Mari:TVT:2020,Ribe:EUSIPCO:2021,Yang:TVT:2020,Lu:TWC:2024,Li:TWC:2024}. In~\cite{Amiri:Globecom:2018}, drawing inspiration from coded random access, the authors proposed a decentralized receiver with very low complexity, where processing is executed locally per subarray. The authors introduced a VR detection algorithm and utilized the acquired VR information to design a low-complexity symbol detection scheme in an uplink multi-user scenario in \cite{Zhi:JSAC:2024}. The authors of \cite{Mari:TVT:2020} proposed four antenna selection approaches in XL-MIMO systems to maximize the total EE. Two novel precoding schemes were proposed to reduce the complexity, namely mean angle-based zero forcing and tensor zero forcing in \cite{Ribe:EUSIPCO:2021}. In addition, in \cite{Yang:TVT:2020}, the authors developed a greedy joint user and subarray scheduling algorithm based on statistical channel state information (CSI) to maximize the achievable SE in the system. In addition, the channel characteristics vary along with the array, leading to non-stationarity. Despite the non-stationarity problem that can be addressed by adopting VRs, the fair resource allocation in the RIS-assisted XL-MIMO system is still unsolved. To the authors' best knowledge, no existing system tackles the non-stationarity and power allocation problem simultaneously.

To fill this gap, we perform a comprehensive analysis of the RIS-aided XL-MIMO system, considering VR selection, phase shift design, and power control with three precoding designs. Although power allocation algorithms have been extensively studied in massive MIMO \cite{ngo16,Mohammadali:TCOM:2024}, little research has focused on XL-MIMO systems. In \cite{Souza:TVT:2021}, joint antenna selection and power allocation were optimized to maximize the downlink SE under different loading conditions. In \cite{Liu:TWC:2024}, a multi-agent reinforcement learning-based power control algorithm was proposed for cell-free XL-MIMO. 

Due to the large aperture size of XL-MIMO arrays and the use of high-frequency bands in future wireless networks, users are likely to reside in either the NF or FF regions. This calls for the adoption of appropriate channel models tailored to different user locations. Consequently, XL-MIMO systems differ significantly from conventional massive MIMO in terms of both system modeling and channel characteristics. In particular, the presence of NF users and the spatially non-stationary nature of XL-MIMO channels can have a pronounced impact on system design. Under such conditions, the extent to which the performance can be improved through optimized system design, such as RIS phase shifts optimization and adaptive power control, remains uncertain. To address this challenge, we adopt a comprehensive channel model that captures both NF and FF propagation effects. This gap motivates our study, with key contributions summarized as follows:
\begin{itemize}
    \item We consider three linear precoding schemes—maximum ratio transmission (MRT), local zero-forcing (LZF), and central zero-forcing (CZF)—for a RIS-assisted XL-MIMO downlink communication system with NF users (NFUEs) and FFUEs. Rigorous closed-form SE lower bounds are derived for the MRT precoding scheme under VR selection for both NFUEs and FFUEs. Additionally, statistical lower bounds on the SE are obtained for the LZF and CZF schemes with selected VRs. In addition, we also consider the interference within two groups, which is practical and innovative.
    
    \item We formulate a joint optimization problem that integrates VR selection, phase shifts design, and power control for MRT, LZF, and CZF precoding schemes to provide a uniformly good service within each group of users. Our objective is to maximize the sum of the weighted minimum SE of both NFUEs and FFUEs, while ensuring that all users meet their quality-of-service (QoS) requirements.
    \item The highly coupled non-convex optimization problem is transformed into a tractable form with decoupled optimization variables. To achieve this, we propose a heuristic algorithm for VR selection, which selects effective VRs for each user. Then, we devise a two-stage low-complexity algorithm by decoupling the non-convex problem into two subproblems. In the first stage, we propose an algorithm for optimized phase shifts determination, leveraging the penalty method and successive convex approximation (SCA). In the second stage, we introduce a SCA-based power control algorithm.
  
    \item Our numerical results reveal that $1)$ The optimized phase shifts and power allocation coefficients lead to a substantial improvement in the sum of the weighted minimum SE of NFUEs and FFUEs; $2)$ CZF is generally preferable while LZF achieves nearly equivalent performance with low complexity when giving equal priority to the NFUEs and FFUEs; $3)$ When different priorities are assigned to NFUEs and FFUEs, LZF delivers satisfactory performance for the prioritized users, while CZF achieves a balanced performance between NFUEs and FFUEs.
\end{itemize}

\begin{table*}
	\centering
	\caption{\label{tabel:Survey} Contrasting our contributions to the XL-MIMO literature}
	\vspace{-0.6em}
	\small
\begin{tabular}{|p{3cm}|p{1.5cm}|p{0.63cm}|p{0.63cm}|p{0.63cm}|p{0.63cm}|p{0.63cm}|p{0.63cm}
|p{0.63cm}|p{0.63cm}|p{0.63cm}|p{0.63cm}|p{0.63cm}|p{0.63cm}|}
	\hline
        \centering\textbf{Contributions} 
        &\centering \textbf{This paper}
        &\centering\cite{Xiao:WCNC:2024}
        &\centering\cite{Tor:J_STSP:2024}
        &\centering\cite{Lee:TCOM:2024}
        &\centering\cite{Zhang:letters:2024}
        &\centering\cite{Zhi:JSAC:2024}
        &\centering\cite{Xu:TSP:2024}
        &\centering\cite{Amiri:Globecom:2018}
        &\centering\cite{Mari:TVT:2020}
        &\centering\cite{Ribe:EUSIPCO:2021}
        &\centering\cite{Yang:TVT:2020}
        &\centering\cite{Souza:TVT:2021}
        &\centering\cite{Liu:TWC:2024}
        \cr
        \hline

        VR design     
        &\centering\checkmark
        &\centering\checkmark
        & \centering\tikzxmark
        & \centering\tikzxmark
        & \centering\tikzxmark        
        &\centering\checkmark
        &\centering\checkmark     
        &\centering\checkmark
        &\centering\checkmark 
        &\centering\checkmark
        &\centering\checkmark
        &\centering\checkmark
        & \centering\tikzxmark
        \cr
        \hline

      Power allocation         
        &\centering\checkmark
        &\centering\tikzxmark
        & \centering\tikzxmark
        &\centering\tikzxmark
        & \centering\tikzxmark        
        &\centering\tikzxmark
        &\centering\tikzxmark        
        &\centering\tikzxmark
        &\centering\tikzxmark 
        & \centering\tikzxmark
        &\centering\tikzxmark
        &\centering\checkmark
        &\centering\checkmark
        \cr

        \hline
        Mixed NF and FF users         
        &\centering\checkmark 
        &\centering\checkmark
        & \centering\tikzxmark
        &  \centering\tikzxmark
        & \centering\tikzxmark        
        &\centering\tikzxmark
        &\centering\tikzxmark          
        &\centering\tikzxmark 
        &\centering\tikzxmark 
        & \centering\tikzxmark
        &\centering\tikzxmark
        & \centering\tikzxmark 
        &  \centering\tikzxmark
        \cr
        \hline 

        RIS     
        &\centering\checkmark 
        &\centering\checkmark
        &\centering\checkmark
        & \centering\checkmark
        &\centering\checkmark        
        &\centering\tikzxmark
        &\centering\tikzxmark         
        & \centering\tikzxmark
        & \centering\tikzxmark
        & \centering\tikzxmark
        &\centering\tikzxmark
        & \centering\tikzxmark
        & \centering\tikzxmark
        \cr 
        \hline

\end{tabular}
\vspace{-1em}
\label{Contribution}
\end{table*}

A comparison of our contributions with the state of the art in the field of XL-MIMO is tabulated in Table~\ref{tabel:Survey}. 

\textit{Notation:} We use bold lowercase letters (uppercase) to denote vectors (matrices). The superscripts $(\cdot)^T$ and $(\cdot)^H$ stand for the transpose and Hermitian, respectively; $\lfloor x\rfloor$ denotes the largest integer less than or equal to $x$; $[\qA]_{(:,j)}$ denotes the $j$th column of matrix $\qA$; $\mathbf{0}_M$ and $\mathbf{0}_{M\times M}$ denote the all-zero vector of size $M$ and all-zero matrix of size $M\times M$, respectively; $\diag(\qa)$ denotes a diagonal matrix with each diagonal element being the corresponding element in $\qa$ and $\diag(\qA_1,\ldots,\qA_n)$ denotes a block-diagonal matrix with the square matrices $\qA_1,\ldots,\qA_n$ on the diagonal. The trace and Kronecker product operations are denoted by $\tr(\cdot)$ and $\otimes$, respectively; $\| \qA\|_{\ast} = \sum_i\sigma_i(\qA)$ and $\| \qA\|_2 = \sigma_1(\qA)$ are the nuclear norm and spectral norm, respectively, while $\sigma_i(\qA)$ denotes the $i$th largest singular value of matrix $\qA$; $\qu( \qA)$ is the eigenvector of $\qA$ corresponding to the largest eigenvalue. Moreover, $\mathcal{CN}(0,\sigma^2)$ denotes a complex circularly symmetric Gaussian random variable (RV) with variance $\sigma^2$; $\mathbb{E}\{\cdot\}$ denotes the statistical expectation; $\delta_{ss'}$ denotes the Kronecker delta, with $\delta_{ss'}$ 1 if $s=s'$, and 0 otherwise. We use $|\mathcal{A}|$ to denote the cardinality of the set $\mathcal{A}$.

\section{System Model} \label{Sec:model}
\begin{figure}[t]
    \centering
    \includegraphics[width=70mm]{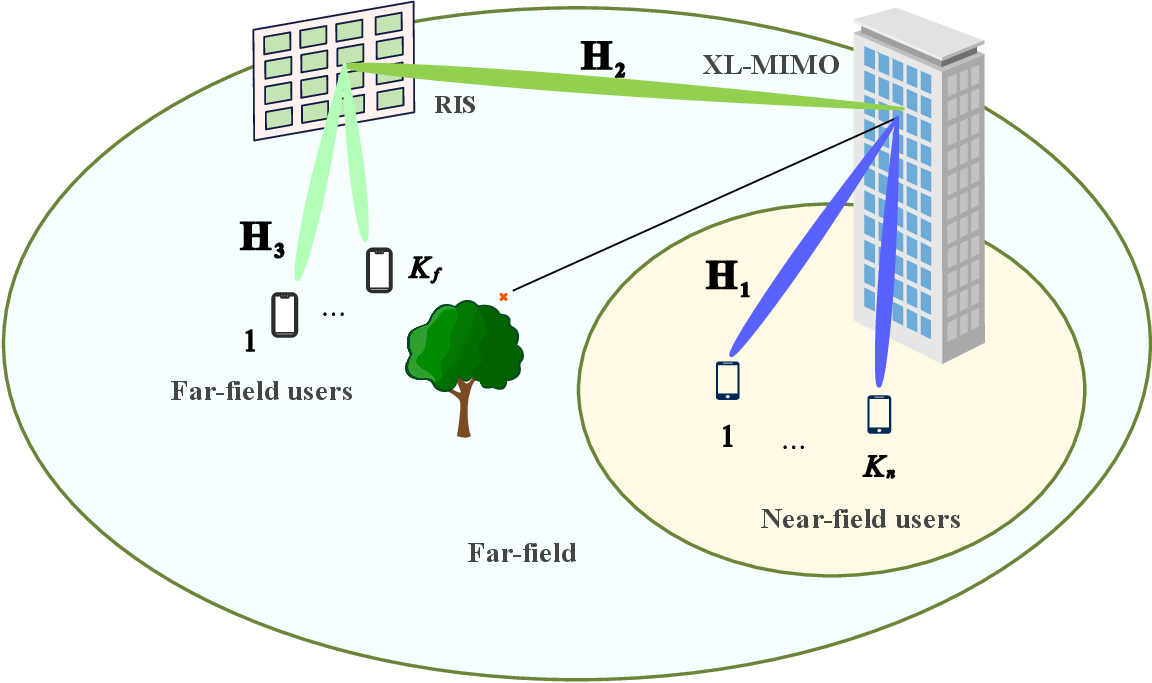}
    \caption{Illustration of RIS-assisted XL-MIMO system.}
    \vspace{-1em}
            \label{fig:systemModel}
\end{figure}

We consider a downlink communication system in which an XL-MIMO array with $M$ antennas serves $K \!=\! K_n \!+\! K_f$ users, where $K_n$ users are in the NF region, referred to as NFUEs, and $K_f$ users are in the FF region, referred to as FFUEs. For simplicity of notation, we define the sets $\Kn\triangleq\{1,\ldots,K_n\}$, $\Kf\triangleq\{1,\ldots,K_f\}$ to represent the sets of NFUEs, FFUEs respectively. Transmission to FFUEs is blocked by obstacles, and then a passive RIS is deployed in the system, whereas paths involving multiple interactions with the RIS—such as signals reflecting off the RIS, then another surface, and back to the RIS—are ignored. The number of reflecting units at the RIS is $N\!=\!N_1 N_2$, where $N_1$ and $N_2$ are the numbers of elements in rows and columns, respectively. For the XL-MIMO array, we consider a uniform planar array (UPA) with $M \!=\! M_x M_y$ antennas, while $M_x$ and $M_y$ denote the number of antennas along the $x$- and $y$-axis, respectively. Furthermore, to address NF spatial non-stationary in XL-MIMO systems, the array is divided into $S$ subarrays, where each subarray contains $M^{\ast}$ antenna elements, that is, $ M^{\ast}=M/S$. We set $M^{\ast} \!=\!M_x^{\ast}M_y^{\ast}$,  where $M_x^{\ast}$ and $M_y^{\ast}$ are the number of antennas along the $x$ and $y$ directions, respectively. We assume all users are equipped with a single antenna and, to quantify the theoretical upper bound on system performance, we consider perfect knowledge of CSI for all involved channels~\cite{Zhi:JSAC:2024,Lu:TWC:2022,Björnson:JOP:2020}.

\subsection {Near-Field Channel Model of XL-MIMO}
We consider a discrete aperture array model with hypothetical isotropic antenna elements. The effective area of each element is given by $\sqrt{A}\times\sqrt{A}$, where $A = e_a\frac{\lambda^2}{4\pi}$ \cite{Lu:TWC:2022}, $\lambda$ is the wavelength and $0 < e_a \leq 1$ denotes the aperture efficiency of each antenna element. Without loss of generality and for simplicity, we assume $e_a=1$. The inter-element distance is $d = \frac{\lambda}{2}$, while the occupation ratio of the array is defined as $\beta=\frac{A}{d^2} \leq 1$, representing the fraction of the total UPA area occupied by the antenna elements. For convenience, we assume that $M_x$ and $M_y$ are odd numbers, and the location of central point of the $(m_x,m_y)$th element is $\qs_{m_x, m_y} = [m_xd,m_yd,0]^T$, where $m_x = 0,\pm1,\ldots,\pm\frac{M_x-1}{2}$, $m_y = 0,\pm1,\ldots,\pm\frac{M_y-1}{2}$, respectively. We assume that $\qu_k=[u_{k,x},u_{k,y},u_{k,z}]^T$ is the location of the $k$th NFUE. Then, the distance between the $k$th NFUE and the center of the $(m_x,m_y)$th element is 
\begin{align}
    r_{m_x,m_y,k} 
    &= \sqrt{(u_{k,x}-m_x d)^2+(u_{k,y}-m_y d)^2+u_{k,z}^2}.
\end{align}
The channel from the $(m_x, m_y)$th element to the $k$th NFUE, which considers free-space path loss, effective aperture, and polarization mismatch, can be expressed as \cite[Eq.~(11)]{Liu:JOP:2023}, \cite{Zhi:JSAC:2024}
\begin{align}    \label{eq:NF channel model}
    &\bar{g}_{m_x,m_y,k} = \sqrt{\eta_{m_x,m_y,k}}e^{-j\frac{2\pi}{\lambda}r_{m_x,m_y,k}} \nonumber \\
    &\hspace{2em}+ \sum\nolimits_{l=1}^{L_k} \Tilde{\beta}_{k,l}\sqrt{\eta_{m_x,m_y,l}} e^{-j\frac{2\pi}{\lambda}d_{m_x,m_y,l}} ,~\forall m_x,m_y,
\end{align}
where $\eta_{m_x,m_y,k}$ is the power gain of the channel, which can be written as \cite[Eq.~(9)]{Zhi:JSAC:2024} 
\begin{equation}
    \eta_{m_x,m_y,k} \!\!{\ \approx \ } \!\!\frac{A}{4\pi}\frac{u_{k,z}\left((u_{k,x} \!-\! m_xd)^2 \!+\! u_{k,z}^2\right)}{\big[\!(u_{k,x} - m_xd)^2 \!+\! (u_{k,y}\!-\! m_yd)^2 \!+\! u_{k,z}^2\!\big]^{\!\frac{5}{2}}},
\end{equation}
and $\eta_{m_x,m_y,l}$ can be obtained as $\eta_{m_x,m_y,k}$ in a similar manner. We assume that the scatterer $l$ is located at $\qp_{k,l} = [p_{k,l,x},p_{k,l,y},p_{k,l,z}]^T$, and $d_{m,l} = \| \qp_{k,l} - \qs_{m_x,m_y}\|$. Moreover, in~\eqref{eq:NF channel model}, $L_k$ is the number of dominant single bounce scatterers visible from NFUE $k$, $\Tilde{\beta}_{k,l}$ includes the impact of the random reflection coefficients of the $l$th scatterer and channel gain from the scatterer $l$ to NFUE $k$. It is assumed that the random phase of $\Tilde{\beta}_{k,l}$ is independent and identically distributed and uniformly distributed in $(-\pi, \pi]$ \cite{{Liu:JOP:2023}}.
    
Note that we consider that the channels between XL-MIMO and NFUEs are LoS channels in this paper, since they have more impact on the received signal of NFUEs.

\begin{table*}[!t]
\small
  \caption{List of main symbols and the physical meanings}
  \label{tab:symbols}
  \centering
  \renewcommand{\arraystretch}{1.15}
  \begin{tabular}{|c|l||c|l|}
    \hline
    $M$                         & Number of antennas at the XL-MIMO
    & $N$                     & Number of elements at
     \\ \hline
    $K_n$         & Number of NFUEs
    & 
    $K_f$     & Number of FFUEs
    \\ \hline 
    $P_n$      & Maximum transmit power for NFUEs
    & $P_f$     & Maximum transmit power for FFUEs      \\ \hline
    
    $\bar{\qH}_1$             & Channel between XL-MIMO and all NFUEs 
    & $\boldsymbol{\Theta}$             & Phase shifts matrix of RIS \\ \hline
    $\bar{\qg}_{k}$         & Channel between XL-MIMO and NFUE $k$
    & $\bar{\qg}_{sk}$         & Channel between $s$th subarray and NFUE $k$ \\ \hline
    $\Tilde{\qG}$         & The channel between RIS ans all $K_f$ FFUEs
    & $\Tilde{\qh}_k$         & Channel between RIS and FFUE $k$\\   \hline
    $\qH_2$         & Channel from XL-MIMO to RIS
    & $\qH_{2,s}$   & Channel from $s$th subarray of XL-MIMO to RIS      \\ \hline
    $\bar{\qw}_{sk}$             & Precoding vector at subarray $s$ to NFUE $k$
    &$\Tilde{\qw}_{sj}$             & Precoding vector at subarray $s$ to FFUE $j$\\ \hline
       
    $\bar{\qD}_{sk}$         & Indicator of subarray $s$'s active antennas for NFUE $k$
    &$\Tilde{\qD}_{sk}$         & Indicator of subarray $s$'s active antennas for FFUE $k$ \\ \hline 
    $\bar{\boldsymbol{\eta}}_{k}$         & Power coefficients allocated to NFUE $k$
    & $\Tilde{\boldsymbol{\eta}}_{j}$         & Power coefficients allocated to FFUE $j$\\ \hline
    $\bar{\eta}_{sk}$         & Power coefficients at subarray $s$  allocated to NFUE $k$
    & $\Tilde{\eta}_{sj}$         & Power coefficients at subarray $s$  for FFUE $j$\\ \hline    
  \end{tabular}
  \vspace{-1em}
\end{table*}

Let $\bar{\qg}_k = [\bar{\qg}_{1k}, \ldots, \bar{\qg}_{Sk}]^T \in \mathbb{C}^{M \times 1}$ denote the overall channel vector between the XL-MIMO array and NFUE $k$, where $\bar{\qg}_{sk} = [\bar{g}_{(s-1)M^{\ast}+1,k}, \ldots, \bar{g}_{s M^{\ast},k}]^T \in \mathbb{C}^{M^{\ast} \times 1}$ represents the channel between the $s$th subarray of the XL-MIMO system and NFUE $k$, while $\bar{g}_{m,k}$, can be obtained from \eqref{eq:NF channel model} by calculating the coordinates $(m_x,m_y)$ of the $m$th element via the following expressions:
\begin{subequations} \label{eq:coordinates}
\begin{align}
    &m_x = -0.5(M_x-1) + \bmod (m-1, M_x), \\
    &m_y = 0.5(M_y-1) - \Big\lfloor \frac{m-1}{M_x} \Big\rfloor.
\end{align}    
\end{subequations}
Moreover, the channel between the XL-MIMO array and all NFUEs is expressed as $\bar{\qH}_1^H \!=\! [\bar{\qg}_1,\ldots,\bar{\qg}_{K_n}]^H \in \mathbb{C}^{K_n \times M}$.  

\vspace{-1em}
\subsection {Far-Field Channel Model of XL-MIMO}
The channels from the XL-MIMO to the RIS\footnote{We consider a RIS that is not extremely large. Therefore, it can be modeled as a single transmitter, and spatial non-stationarities do not arise, unlike in the case of XL-RISs, where such issues must be considered \cite{Lee:TCOM:2024}.} and from the RIS to FFUE $k$ are denoted by $\qH_2 \in \mathbb{C}^{N \times M}$ and $\qh_k^H \in \mathbb{C}^{1 \times N}$, respectively. Furthermore, the cascaded channel from the XL-MIMO to FFUE $k$ through the RIS is expressed as $\Tilde{\qg}_k^H = \qh_k^H\bTeta\qH_2 \in \mathbb{C}^{1\times M}$, where $\bTeta = \diag(e^{j\theta_1},\ldots,e^{j\theta_N})$, with $\theta_n \in [0,2\pi)$ representing the phase shift of $n$th reflecting element, and we set $\boldsymbol{\theta} = [\theta_1,\ldots,\theta_N]$. The cascaded channels of all $K_f$ FFUEs are given by $\Tilde{\qG}^H = [(\Tilde{\qg}_1^H)^T, \ldots, (\Tilde{\qg}_{K_f}^H)^T]^H\in\mathbb{C}^{K_f\times M}$. Since the RIS is often installed on the facade of a tall building, both LoS and NLoS transmission paths would exist in the XL-MIMO-to-RIS link. Thus, we employ the Ricean fading model for the link from the XL-MIMO to the RIS as 
\begin{align}
    \qH_2 = \alpha_2\bar{\qH}_2 + \beta_2\Tilde{\qH}_2,
\end{align}
where $\alpha_2 = \sqrt{\frac{ \iota\zeta}{\iota + 1}}$, $\beta_2 = \sqrt{\frac{\zeta }{\iota + 1}}$, while $\zeta$ is the path-loss coefficient corresponding to the link from XL-MIMO array to the RIS; $\iota$ denotes the corresponding Ricean factor; 
$\bar{\qH}_2$ denotes the LoS channel matrix which can be expressed as
\begin{align}
    \Bar{\qH}_2 &= \qb_N(\varphi^a,\varphi^e)\qa_M^H(\phi^a,\phi^e),
\end{align}  
where $\varphi^a$ and $\varphi^e$ represent the azimuth and elevation angles of arrival (AoAs) at the RIS from the XL-MIMO, while $\phi^{a}$ and $\phi^{e}$ are the corresponding azimuth and elevation angles of departure (AoDs) from the XL-MIMO to the RIS. Furthermore, $\qa_M(\phi^a,\phi^e)$ and $\qb_N(\varphi^a,\varphi^e)$ are the array response vectors from the XL-MIMO to the RIS, where
\begin{subequations}
\begin{align}
    &\qa_M(\phi^a,\phi^e) = [e^{j2\pi m_x \frac{d \Cos(\phi^e) \Sin(\phi^a)}{\lambda}}]_{m_x\in\mathcal{L}(M_x)}^T \nonumber \\ 
    &\hspace{2em} \otimes[e^{j2\pi m_y \frac{d \Sin(\phi^e)}{\lambda}}]_{m_y\in\mathcal{L}(M_y)}^T, \\
   &\qb_N(\varphi^a,\varphi^e) = [e^{j 2\pi n_1 \frac{d_R \Cos(\varphi^e)\Sin(\varphi^a)}{\lambda}}]_{n_1\in\mathcal{L}(N_1)}^T
    \nonumber \\ 
   &\hspace{2em} \otimes[e^{j2\pi n_2\frac{d_R \Sin(\varphi^e)}{\lambda}}]_{n_2\in\mathcal{L}(N_2)}^T,\label{eq:bN} 
\end{align}    
\end{subequations}
where $d_R = \frac{\lambda}{2}$ is the RIS element spacing  and $\mathcal{L}(n) = \{0,1,\ldots,n-1\}$.
Moreover, $\Tilde{\qH}_2$ is the NLoS channel matrix whose components are independent and identically distributed (i.i.d.) complex Gaussian RVs with zero mean and unit variance. To facilitate notation, let $\qH_{2,s}\in \mathbb{C}^{N \times M^{\ast}}$ denote the channel between the $s$th subarray of XL-MIMO and the RIS, which is a submatrix of $\qH_2$, given by $\qH_{2,s}=\qH_2(:, r_s)$, where $r_s$ is the antenna index set corresponding to $s$th subarray, defined as $r_s \triangleq \{(s-1) M^{\ast}+1, sM^{\ast}\}$.

We assume that the links between the RIS and FFUEs are LoS dominant.\footnote{In practice, RISs are intentionally placed on building facades/walls to ensure favorable LoS propagation between the RIS and blocked FFUEs by maximizing reflection gain. We note that the gap between the data rate achieved by a purely LoS channel and that achieved by a LoS-dominant channel is negligible, as shown in \cite{Zhi:2022:JSAC, Mohammadi:TC:2024}.} Thus, the channel between the RIS and FFUE $k$ can be written as \cite{Mohammadi:TC:2024}
\begin{align}
    \qh_k^H = \sqrt{\varsigma_k} \qb_N^H (\vartheta_{k}^a,\vartheta_{k}^e),~\forall k\in\Kf,
\end{align} 
where $\varsigma_k$ is the path-loss coefficient; $\vartheta_{k}^a$ and $\vartheta_{k}^e$ represent the azimuth and elevation AoDs reflected by the RIS to the $k$th FFUE, while the array response vector $\qb_N(\vartheta_{k}^a,\vartheta_{k}^e)$ is given by~\eqref{eq:bN}, where $\varphi^a$ and $\varphi^e$ are replaced with $\vartheta_{k}^a$ and $\vartheta_{k}^e$, respectively.

It is worth mentioning that a LoS or near-LoS connection between the RIS and the XL-MIMO BS is essential to ensure a strong and stable BS–RIS link. This condition allows the maximum number of RIS elements to effectively receive and reflect the incident signals with high efficiency. Additionally, FFUEs should be positioned within the RIS reflection region to avail of the beamforming gain. To enhance system performance, orientation strategies should aim to introduce spatial separation—or angular diversity—along the RIS–user paths. This reduces channel correlation and increases degrees of freedom, enabling more effective multi-user transmission.
\vspace{-1em}
\subsection{Received Signal}
The transmit signal from the $s$th subarray is given by
\begin{align}~\label{eq:xs}
    \qx_s \!\!=\!\! \!\!\sum_{k\in\Kn}\!\!\!\!\sqrt{P_n\betask}\bar{\qD}_{sk}\bar{\qw}_{sk} \Bar{s}_k
    \!+\!\!\sum_{j\in\Kf}\!\!\!\!\sqrt{P_f\Tetasj}\Tilde{\qD}_{sj}\Tilde{\qw}_{sj} \Tilde{s}_j,
\end{align}
where $P_n$ and $P_f$ denote the transmit power for NFUEs and FFUEs; $\bar{\eta}_{sk}$, $\Tetasj$ are the power control coefficients at the $s$th subarray  allocated to the $k$th NFUE and the $j$th FFUE, respectively; $\bar{\qw}_{sk}\in\mathbb{C}^{M^{\ast}\times 1}$ and $\Tilde{\qw}_{sj}\in\mathbb{C}^{M^{\ast}\times 1}$ are the precoding vectors at the $s$th subarray associated with  NFUE $k$ and  FFUE $j$, respectively; $\bar{s}_k$ and $\Tilde{s}_j$ are the  uncorrelated symbols intended for  NFUE $k$ and FFUE $j$, with $\Ex\{|\bar{s}_k|^2\}\!=\!\Ex\{|\Tilde{s}_j|^2\}\!=\!1$. Moreover, for the sake of generalization, we define a diagonal matrix $\qD_{sk}$ with binary entries on its main diagonal to indicate the active antenna elements in subarray $s$ assigned to user $k$. Since we consider subarray-level activation, $\bar{\qD}_{sk} = \qI_{M^\ast} (\Tilde{\qD}_{sj} = \qI_{M^\ast})$ implies that NFUE $k$ (FFUE $j$) is served by  subarray $s$, otherwise, $\bar{\qD}_{sk}= \mathbf{0}_{M^\ast} (\Tilde{\qD}_{sj}= \mathbf{0}_{M^\ast})$.\footnote{To simplify the subsequent analysis, we do not include the signal propagation through sidelobes.}

The transmitted power satisfies the power constraint $\sum\nolimits_{s\in\Set} \Ex\{ \normLt{\qx_s}^2\} \leq P$, where $P$ is the maximum total transmit power at the XL-MIMO array. Using~\eqref{eq:xs}, we can express the power constraint as
\begin{align} \label{eq:simplified power control}
    \sum\nolimits_{s\in\Set}\!\!\Big( P_n {\sum\nolimits_{k\in\Kn}\!\! \bar{\eta}_{sk} \bar{c}_{sk}} + P_f {\sum\nolimits_{j\in\Kf}\!\!\Tetasj \Tilde{c}_{sj}}\Big) \leq P,
\end{align}
where $\bar{c}_{sk} = \tr(\bar{\qD}_{sk}\bar{\qw}_{sk}(\bar{\qD}_{sk}\bar{\qw}_{sk})^H)$, $\Tilde{c}_{sj} = \Ex \{ \tr(\Tilde{\qD}_{sj}\Tilde{\qw}_{sj}(\Tilde{\qD}_{sj}\Tilde{\qw}_{sj})^H)\}$.

The signal received at NFUE $k$ can be expressed as:
\begin{align} \label{eq:receive signa:Near}
    \bar{y}_{k} 
    &=\sum\nolimits_{s\in\Set}\!   
    \Big(\!\!\sqrt{P_n}\! \sum\nolimits_{k\in\Kn}\sqrt{\betask}\bar{\qg}_{sk}^H\bar{\qD}_{sk}\bar{\qw}_{sk} \bar{s}_k \nonumber \\   
    &\hspace{2em}+\!\!\sqrt{P_f}\sum\nolimits_{j\in\Kf}\!\!\sqrt{\Tetasj}\bar{\qg}_{sk}^H\Tilde{\qD}_{sj}\Tilde{\qw}_{sj} \Tilde{s}_j\Big)
    \!+ \!n_k, 
\end{align}
where $\Set$ is the set of all subarrays, $n_k\sim\mathcal{CN}(0,1)$ is the additive white Gaussian noise (AWGN) at the $k$th NFUE. 

For FFUE $k$, the received signal can be expressed as 
\begin{align} \label{eq:receive signa:Far}
    \Tilde{y}_{k} 
    &=\sum\nolimits_{s\in\Set}\!   
    \Big(\!\sqrt{P_f}\sum\nolimits_{k\in\Kf}\!\!\sqrt{\Tilde{\eta}_{sk}}\Tilde{\qg}_{sk}^H\Tilde{\qD}_{sk}\Tilde{\qw}_{sk}\Tilde{s}_k \nonumber \\   
    &\hspace{2em}+\!\!\sqrt{P_n}\sum\nolimits_{i\in\Kn}\!\!\sqrt{\betasi}\Tilde{\qg}_{sk}^H\bar
    {\qD}_{si}\bar{\qw}_{si} \bar{s}_i\Big)
    \!+ \!n_k, 
\end{align}
where $\Tilde{\qg}_{sk}^H = \qh_k^H\bTeta\qH_{2,s}$ denotes the channel between the $s$th subarray and FFUE $k$.

\section{Performance Analysis} \label{Sec:performance analysis}
For NFUE $k$, the received signal in \eqref{eq:receive signa:Near} can be written as
\begin{equation} \label{eq:reformulated:received}
    \bar{y}_{k} = \bDSk  \bar{s}_k + \! \sum\nolimits_{i \in\Kn\setminus k} \! \! \bUIi   \bar{s}_i + \!  \sum\nolimits_{j \in \Kf} \! \! \btUIj   \Tilde{s}_j + n_k,
\end{equation}
where 
\begin{subequations}~\label{eq:FFU:NF}
\begin{align}
    \bDSk &\triangleq \sqrt{P_n} \; \sum\nolimits_{s\in\Set}\sqrt{\betask}\bar{\qg}_{sk}^{H}\Bar{\qD}_{sk}\bar{\qw}_{sk}, ~\label{eq:FFU:NF:DS}\\
    \bUIi &\triangleq \sqrt{P_n} \ {\sum\nolimits_{s\in\Set}\sqrt{\betasi}\bar{\qg}_{sk}^{H}\Bar{\qD}_{si}\bar{\qw}_{si}}, ~\label{eq:FFU:NF:BU}\\
    \btUIj &\triangleq \sqrt{P_f} \ {\sum\nolimits_{s\in\Set}\sqrt{\Tetasj} \bar{\qg}_{sk}^{H}\Tilde{\qD}_{sj}\Tilde{\qw}_{sj}}, 
\end{align}
\end{subequations}
correspond to the desired signal ($\bDSk$), the intra-group interference caused by NFUEs ($\bUIi$) and the inter-group interference caused by FFUEs ($\btUIj$), respectively. Since $\bar{s}_k$ is independent of $\bar{s}_i$ for any $i \neq k$, the first term and the second term of \eqref{eq:reformulated:received} are uncorrelated. Similarly, the third term of \eqref{eq:reformulated:received} is uncorrelated with the first term of \eqref{eq:reformulated:received}. 
Using the arguments from \cite{ngo16} that uncorrelated Gaussian noise represents the worst case, an achievable downlink SE of  NFUE $k$ can be obtained as $\bSEk = \log_2(1+\overline{\text{SINR}}_{k})$, where
\begin{equation} \label{eq:simplified SINR expression of NFUE}
   \overline{\text{SINR}}_{k}= \frac{\vert\bDSk\vert^2}{ \sum\nolimits_{i \in\Kn\setminus k} \!\vert\bUIi\vert^2 \!+\! \sum\nolimits_{j\in\Kf} \!\Ex\big\{\big\vert\btUIj\big\vert^2\big\} \!+\! \sigma_k^2}. 
\end{equation}
Note that the SE of the NFUEs is influenced by the interference caused by signals intended for FFUEs, thereby being a function of the RIS phase shift matrix.
For FFUE $k$, we can rewrite the received signal $\Tilde{y}_{k}$ in \eqref{eq:receive signa:Far} as
\begin{align} \label{eq:received signal expression of FFUE with VR}
    \Tilde{y}_{k} &= \tDSk  \Tilde{s}_k + \tBUk  \Tilde{s}_k + \sum\nolimits_{j \in \Kf\setminus k} \tUIj   \Tilde{s}_j \nonumber \\
    &\hspace{2em}+\sum\nolimits_{i\in \Kn} \tbUIi   \bar{s}_i +  n_k,
\end{align}
where 
\begin{subequations}~\label{eq:FFU:UF}
   \begin{align}
    \tDSk &\triangleq \sqrt{P_f}  \mathbb{E}\left\{\sum\nolimits_{s\in\Set}\sqrt{\Tilde{\eta}_{sk}}\Tilde{\qg}_{sk}^H\Tilde{\qD}_{sk}\Tilde{\qw}_{sk}\right\},~\label{eq:DS} \\
    \tBUk &\triangleq \sqrt{P_f}\Big( \sum\nolimits_{s\in\Set}\sqrt{\Tilde{\eta}_{sk}}\Tilde{\qg}_{sk}^H\Tilde{\qD}_{sk}\Tilde{\qw}_{sk} \nonumber \\
    &\hspace{3em}- \mathbb{E}\left\{\sum\nolimits_{s\in\Set}\sqrt{\Tilde{\eta}_{sk}}\Tilde{\qg}_{sk}^H\Tilde{\qD}_{sk}\Tilde{\qw}_{sk}\right\} \Big),~\label{eq:BU} \\
    \tUIj &\triangleq \sqrt{P_f} \ {\sum\nolimits_{s\in\Set}\sqrt{\Tetasj}\Tilde{\qg}_{sk}^H\Tilde{\qD}_{sj}\Tilde{\qw}_{sj}},~\label{eq:UIj} \\
    \tbUIi &\triangleq \sqrt{P_n} \ {\sum\nolimits_{s\in\Set}\sqrt{\betasi}\Tilde{\qg}_{sk}^H\bar{\qD}_{si}\bar{\qw}_{si}}, ~\label{eq:ULI}
   \end{align} 
\end{subequations}
correspond to the desired signal ($\tDSk$), the beamforming gain uncertainty ($\tBUk$), the intra-group interference caused by the other FFUEs ($\tUIj$) and the inter-group interference caused by NFUEs ($\tbUIi$), respectively. Since $\Tilde{s}_k$ is independent of $\tDSk$ and $\tBUk$, the first term and the second term of \eqref{eq:received signal expression of FFUE with VR} are uncorrelated. Similarly, the third and fourth terms are uncorrelated with the first term of \eqref{eq:received signal expression of FFUE with VR}. According to \cite{ngo16}, the achievable SE at FFUE $k$ is obtained as $\tSEk=\log_2(1+\widetilde{\text{SINR}}_{k})$, where 
\begin{align} \label{eq:simplified SINR expression of FFUE}
    &\widetilde{\text{SINR}}_{k}\!\! =\!\!  \frac{\vert\tDSk\vert^2}{ \Ex\{ \vert\tBUk\vert^2\}\ \!+\!\! \! \sum\limits_{j \in \Kf \setminus k} \!\! \!\!\!\!\Ex\{\vert\tUIj\vert^2\} \!\!+\!\! \sum\limits_{i\in \Kn} \!\!\!\! \Ex\{\vert\tbUIi\vert^2\} \! + \! \sigma_k^2}.
\end{align}

Next, we derive the VR-based SE expressions for different precoding schemes, which are $\MRT$, $\CZF$, and $\LZF$ and assume that VR selection for NFUEs and FFUEs has already been performed on the XL-MIMO array.\footnote{In the context of massive MIMO, it is well known that linear precoding designs can achieve fairly close results to optimal precoding. Therefore, we apply linear precoding to avoid high computational complexity \cite{ngo16, emil17}.}  Consequently, each VR-based precoding scheme assigns a specific VR to each NFUE and FFUE. In Subsection~\ref{subsec:VR}, we introduce our proposed VR selection algorithm. From this point on, we will use the superscripts $\MRT$, $\CZF$, and $\LZF$ to denote the corresponding precoding schemes and their associated parameters.

\subsection{VR-Based MRT Precoding Scheme}
As the channels between the $s$th subarray and the $k$th NFUE (FFUE) are denoted as $\bar{\qg}_{sk}^H \in \mathbb{C}^{1 \times M^{\ast}}$ ($\Tilde{\qg}_{sk}^H \in \mathbb{C}^{1\times M^{\ast}}$), the VR-based MRT precoding vectors in the subarray $s$ for the $k$th NFUE and the $k$th FFUE are given by 
\begin{align}~\label{eq:MRT}
   \bar{\qw}_{sk}^{\MRT} = \bar{\qg}_{sk},\quad\Tilde{\qw}_{sk}^{\MRT} = \Tilde{\qg}_{sk}. 
\end{align}

\begin{proposition}~\label{Prop:MRT:NF} 
The $\text{SE}$ of the $k$th NFUE with VR-based MRT precoding scheme can be represented in closed form as 
$\bSEkMRT = \log_2(1+\bSINRkMRT)$, where
\begin{align}   \label{eq:FSINR:NF:VR}
    &\bSINRkMRT= \frac{P_n\Big(  \sum\nolimits_{s \in \SknMRT} \sqrt{\betask^{\MRT}} \betakks\Big)^2}{ \bar{\Delta}^{\MRT}+ \sigma_k^2},  
\end{align}
with $\bar{\Delta}^{\MRT} \triangleq P_n\!\sum\nolimits_{i \in\Kn\setminus k} \!\Big\vert  \sum\nolimits_{s \in \SniMRT}\!\!\sqrt{\betasi^{\MRT}}  \betakis \Big\vert^2 +\! P_f \sum\nolimits_{j\in\Kf}\!\!  {\sum\nolimits_{s \in \SjfMRT}\!\!\sum\nolimits_{s'\in \SjfMRT}\!\!\sqrt{\Tetasj^{\MRT} \Tetaspj^{\MRT}} \Talphakjssp}$,
where $\betakks \triangleq \bar{\qg}_{sk}^{H} \bar{\qg}_{sk}$, $\betakis \triangleq \bar{\qg}_{sk}^{H} \bar{\qg}_{si}$, $\forall i \neq k$; $\SknMRT$, $\SniMRT$ and $\SjfMRT$ are the VRs chosen by NFUEs $k$, $i$ and FFUE $j$ with MRT precoding scheme, respectively; $ \Talphakjssp \triangleq \Re\Big\{\bar{\qg}_{sk}^{H}(\alpha_2^2\bar{\qH}_{2,s}^H \qM_{jj} \bar{\qH}_{2,s'} +  \delta_{ss'}\varsigma_k\beta_2^2 N \qI_{M^{\ast}})\bar{\qg}_{s'k}\Big\}$, where $\qM_{jj} \triangleq \bTeta^H \qh_j\qh_j^H\bTeta$, while $\qH_{2,s'}\in \mathbb{C}^{N \times M^{\ast}} $ denotes the channel between the $s'$th subarray of XL-MIMO and the RIS, which is a submatrix of $\qH_2 \in \mathbb{C}^{N \times M}$, and we have $\qH_{2,s'}=\qH_2(:,r_{s'})$, where $r_{s'}=\{(s'-1) M^{\ast}+1:s' M^{\ast}\}$.
\end{proposition}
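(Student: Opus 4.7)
The proof is a direct computation starting from the already-established SINR expression in \eqref{eq:simplified SINR expression of NFUE}, specialized to the MRT precoder $\bar{\qw}_{sk}^{\MRT} \!=\! \bar{\qg}_{sk}$ and $\tilde{\qw}_{sj}^{\MRT} \!=\! \tilde{\qg}_{sj}$ from \eqref{eq:MRT}, together with the VR activation rule: $\bar{\qD}_{sk}\!=\!\qI_{M^\ast}$ for $s\!\in\!\SknMRT$ and $\bar{\qD}_{sk}\!=\!\mathbf{0}_{M^\ast}$ otherwise (analogously for $\tilde{\qD}_{sj}$). Since the NFUE channels $\bar{\qg}_{sk}$ are deterministic LoS quantities, only the inter-group interference through $\qH_2$ carries randomness; the desired-signal and intra-group interference pieces are then purely algebraic substitutions.

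\textbf{Step 1 (numerator).} I substitute the MRT precoder into \eqref{eq:FFU:NF:DS}; because the active-subarray indicators reduce the sum to $s\!\in\!\SknMRT$ and $\bar{\qg}_{sk}^H\bar{\qg}_{sk}=\betakks$, I obtain $\bDSk = \sqrt{P_n}\sum_{s\in\SknMRT}\sqrt{\betask^{\MRT}}\,\betakks$, whose squared magnitude gives the numerator of \eqref{eq:FSINR:NF:VR}. The intra-group term in \eqref{eq:FFU:NF:BU} is handled identically, yielding $|\bUIi|^2 = P_n\bigl|\sum_{s\in\SniMRT}\sqrt{\betasi^{\MRT}}\,\betakis\bigr|^2$, which matches the first piece of $\bar{\Delta}^{\MRT}$.

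\textbf{Step 2 (inter-group term, the main obstacle).} Plugging the MRT precoder and $\tilde{\qg}_{sj}\!=\!\qH_{2,s}^H\bTeta^H\qh_j$ into \eqref{eq:FFU:NF}, the inter-group contribution expands as
\begin{align*}
\Ex\bigl\{|\btUIj|^2\bigr\}
\!=\! P_f\!\!\!\sum_{s,s'\in\SjfMRT}\!\!\!\sqrt{\Tetasj^{\MRT}\Tetaspj^{\MRT}}\,
\bar{\qg}_{sk}^H\Ex\bigl\{\qH_{2,s}^H\qM_{jj}\qH_{2,s'}\bigr\}\bar{\qg}_{s'k},
\end{align*}
with $\qM_{jj}=\bTeta^H\qh_j\qh_j^H\bTeta$. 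The key step is to evaluate the inner expectation by splitting $\qH_{2,s}=\alpha_2\bar{\qH}_{2,s}+\beta_2\tilde{\qH}_{2,s}$; the cross terms vanish by $\Ex\{\tilde{\qH}_{2,s}\}=\mathbf{0}$, the LoS$\times$LoS term yields $\alpha_2^2\bar{\qH}_{2,s}^H\qM_{jj}\bar{\qH}_{2,s'}$, and the NLoS$\times$NLoS term is where the real work sits. Using that the entries of $\tilde{\qH}_2$ are i.i.d.\ $\mathcal{CN}(0,1)$ and that disjoint subarrays pick disjoint columns, a standard index calculation gives $\Ex\{\tilde{\qH}_{2,s}^H\qM_{jj}\tilde{\qH}_{2,s'}\}=\delta_{ss'}\tr(\qM_{jj})\qI_{M^\ast}$, and $\tr(\qM_{jj})=\qh_j^H\bTeta\bTeta^H\qh_j=\varsigma_j N$ since $\bTeta$ is diagonal unitary and $\qb_N^H\qb_N=N$. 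Assembling these pieces reproduces the bracketed quantity in the definition of $\Talphakjssp$.

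\textbf{Step 3 (real-part reduction and assembly).} Because $\Ex\{|\btUIj|^2\}$ is real, the double sum over $(s,s')\in\SjfMRT\times\SjfMRT$ must be real; pairing each $(s,s')$ with $(s',s)$ and invoking Hermitian symmetry of the inner matrix lets me replace each summand by its real part, which justifies writing $\Talphakjssp = \Re\{\bar{\qg}_{sk}^H(\alpha_2^2\bar{\qH}_{2,s}^H\qM_{jj}\bar{\qH}_{2,s'}+\delta_{ss'}\varsigma_j\beta_2^2 N\qI_{M^\ast})\bar{\qg}_{s'k}\}$. Collecting the numerator from Step 1, the intra-group interference, and the inter-group interference from Step 2 into \eqref{eq:simplified SINR expression of NFUE} yields exactly the expression in \eqref{eq:FSINR:NF:VR}, completing the proof. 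The only delicate part is the i.i.d.\ calculation in Step 2, where care is needed with the fact that different subarrays index disjoint columns of the single random matrix $\tilde{\qH}_2$, so independence across $s\neq s'$ is preserved and forces the Kronecker $\delta_{ss'}$ in the NLoS contribution.
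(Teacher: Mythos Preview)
Your proposal is correct and follows essentially the same route as the paper's proof: substitute the MRT precoder, restrict sums to the active VR subsets, and evaluate $\Ex\{\qH_{2,s}^H\qM_{jj}\qH_{2,s'}\}$ by splitting $\qH_{2,s}$ into its LoS and NLoS parts, using that the NLoS columns across distinct subarrays are independent to obtain the $\delta_{ss'}$ factor. Your Step~3 (the real-part reduction via the $(s,s')\leftrightarrow(s',s)$ pairing) makes explicit something the paper leaves implicit in the definition of $\Talphakjssp$, and your computation $\tr(\qM_{jj})=\varsigma_j N$ is the correct index (the paper writes $\varsigma_k$ here, which appears to be a typo).
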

\begin{proof}
    See Appendix~\ref{Prop:MRT:NF:proof}. 
\end{proof}
\begin{proposition}~\label{Prop:MRT:FF}
The SE of  FFUE $k$ with VR-based MRT precoding scheme is given by $\tSEkMRT = \log_2(1 + \tSINRkMRT )$, where
\begin{align} \label{eq:FSINR:FF:VR}
    &\tSINRkMRT =
    \frac{P_f\Big(\sum\nolimits_{s \in \SkfMRT} \!\!\sqrt{\Tilde{\eta}_{sk}^{\MRT}} \Tcks \Big)^2}{  \tilde{\Delta}^{\MRT}+ \sigma_k^2},
\end{align}
where $\SkfMRT$ is the VR chosen by  FFUE $k$, $\Tcks\triangleq \qh_k^H\bTeta (\alpha_2^2 \bar{\qH}_{2,s} \bar{\qH}_{2,s}^H + \beta_2^2 M^{\ast} \qI_N) \bTeta^H\qh_k$, and 
\begin{align}
 \tilde{\Delta}^{\MRT} &\triangleq P_f \! \!\sum\nolimits_{s \in \SkfMRT} \! \!\Tilde{\eta}_{sk}^{\MRT} \Taks\nonumber\\
 &+ \!\!P_f\sum\nolimits_{j\in\Kf\setminus k} \!  \sum\nolimits_{s \in \SjfMRT} \!\sum\nolimits_{s' \in \SjfMRT} \!\! \sqrt{\Tetasj^{\MRT}\Tetaspj^{\MRT}} \Tbkjssp \nonumber\\
 &+P_n \sum\nolimits_{i\in\Kn} \!\! \sum\nolimits_{s\in \SniMRT} \!\!\sum\nolimits_{s' \in \SniMRT} \!\! \sqrt{\bar{\eta}_{si}^{\MRT}\bar{\eta}_{s'i}^{\MRT}} \balphakissp,   
\end{align}
where $\balphakissp\triangleq\Re\big\{\bar{\qg}_{s'i}^T \qB_{ki}^{ss'} \bar{\qg}_{si}^{\ast}\big\}$, $ \qB_{ki}^{ss'} = \alpha_2^2\bar{\qH}_{2,s'}^H\qM_{kk}\bar{\qH}_{2,s} + \delta_{ss'}\varsigma_k \beta_2^2 N\qI_{M^{\ast}}$, with $\qM_{kk}\! \triangleq \!\bTeta^H \qh_k\qh_k^H\bTeta$, and 
\begin{align*} 
    \Taks  &\triangleq  
    \Re\Big\{\qh_k^H\bTeta \big(\varsigma_k \alpha_2^2 \beta_2^2 N \bar{\qH}_{2,s}\bar{\qH}_{2,s}^H \!+\! (\alpha_2^2 \beta_2^2 M^{\ast} \nonumber \\  &\hspace{-1em}\times\qb_N^H(\varphi^a,\varphi^e)\qM_{kk} \qb_N(\varphi^a,\varphi^e) \!+\! \varsigma_k \beta_2^4 NM^{\ast} )\qI_N\big) \bTeta^H \qh_k\Big\},\\
    \Tbkjssp&\triangleq \Re\Big\{\qh_k^H\bTeta \big(\delta_{ss'}\qC_{sj} + (1 - \delta_{ss'})\qD_j^{ss'}\big) \bTeta^H\qh_k\Big\},
\end{align*}
where $\qD_j^{ss'} \triangleq (\alpha_2^2 \bar{\qH}_{2,s} \bar{\qH}_{2,s}^H \!+\! \beta_2^2 M^{\ast} \qI_N) \qM_{jj} (\alpha_2^2 \bar{\qH}_{2,s'} \bar{\qH}_{2,s'}^H  \!+\! \beta_2^2 M^{\ast} \qI_N)$ and
\begin{align}  \label{eq:C_sj}
    &\qC_{sj} 
    \triangleq \alpha_2^4 \bar{\qH}_{2,s}\bar{\qH}_{2,s}^H \qM_{jj} \bar{\qH}_{2,s}\bar{\qH}_{2,s}^H + \beta_2^2\Big(\alpha_2^2  M^{\ast} \bar{\qH}_{2,s}\bar{\qH}_{2,s}^H \qM_{jj} \nonumber \\
    &\hspace{0em} + \alpha_2^2  \trace(\qM_{jj}) \bar{\qH}_{2,s}\bar{\qH}_{2,s}^H  + \alpha_2^2  \trace(\bar{\qH}_{2,s}^H\qM_{jj} \bar{\qH}_{2,s}) \nonumber \\ 
    &\hspace{0em} \!+\! \alpha_2^2  M^{\ast} \qM_{jj} \bar{\qH}_{2,s}\bar{\qH}_{2,s}^H \!+\! \beta_2^2(M^{\ast})^2 \qM_{jj} \!+\! \beta_2^2 M^{\ast}\trace(\qM_{jj}) \qI_N\Big).
\end{align}
\end{proposition}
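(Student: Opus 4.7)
The plan is to start from the SE expression $\tSEkMRT = \log_2(1 + \tSINRkMRT)$ with $\tSINRkMRT$ given by~\eqref{eq:simplified SINR expression of FFUE}, substitute the MRT precoder $\Tilde{\qw}_{sk}^{\MRT} = \Tilde{\qg}_{sk} = \qH_{2,s}^H\bTeta^H\qh_k$ into each of the four terms in~\eqref{eq:FFU:UF}, and exploit the Ricean decomposition $\qH_{2,s} = \alpha_2\bar{\qH}_{2,s} + \beta_2\tilde{\qH}_{2,s}$. Two structural facts drive the calculation: (i) the NLoS blocks $\tilde{\qH}_{2,s}$ across subarrays are mutually independent since they correspond to disjoint column subsets of $\tilde{\qH}_2$; and (ii) $\bTeta$ is diagonal unitary, so $\|\bTeta^H\qh_k\|^2 = \|\qh_k\|^2 = \varsigma_k N$, and the rank-one LoS factorization $\bar{\qH}_2 = \qb_N(\varphi^a,\varphi^e)\qa_M^H(\phi^a,\phi^e)$ yields $\bar{\qH}_{2,s}\bar{\qH}_{2,s}^H = M^{\ast}\qb_N(\varphi^a,\varphi^e)\qb_N^H(\varphi^a,\varphi^e)$.

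For the numerator, I would compute $\Ex\{\Tilde{\qg}_{sk}^H\Tilde{\qg}_{sk}\} = \qh_k^H\bTeta\,\Ex\{\qH_{2,s}\qH_{2,s}^H\}\,\bTeta^H\qh_k$; the cross terms vanish due to the zero mean of $\tilde{\qH}_{2,s}$, and using $\Ex\{\tilde{\qH}_{2,s}\tilde{\qH}_{2,s}^H\} = M^{\ast}\qI_N$ gives $\Ex\{\qH_{2,s}\qH_{2,s}^H\} = \alpha_2^2\bar{\qH}_{2,s}\bar{\qH}_{2,s}^H + \beta_2^2 M^{\ast}\qI_N$, which is exactly $\Tcks$. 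The NFUE inter-group interference $\Ex\{|\tbUIi|^2\}$ is analogous: with the NF channel $\bar{\qg}_{si}$ deterministic, it reduces to $\sum_{s,s'}\sqrt{\bar{\eta}_{si}^{\MRT}\bar{\eta}_{s'i}^{\MRT}}\,\bar{\qg}_{s'i}^H\Ex\{\qH_{2,s'}^H\qM_{kk}\qH_{2,s}\}\bar{\qg}_{si}$; invoking $\Ex\{\tilde{\qH}_{2,s'}^H\qA\tilde{\qH}_{2,s}\} = \delta_{ss'}\trace(\qA)\qI_{M^{\ast}}$ together with $\trace(\qM_{kk}) = \|\bTeta^H\qh_k\|^2 = \varsigma_k N$ produces $\qB_{ki}^{ss'}$, recovering $\balphakissp$ after taking the real part.

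The beamforming-uncertainty $\Ex\{|\tBUk|^2\}$ and the intra-group FFUE interference $\Ex\{|\tUIj|^2\}$ both involve fourth-order moments. For $\Ex\{|\tUIj|^2\}$, expansion yields $\sum_{s,s'}\sqrt{\Tilde{\eta}_{sj}^{\MRT}\Tilde{\eta}_{s'j}^{\MRT}}\,\qh_k^H\bTeta\,\Ex\{\qH_{2,s}\qH_{2,s}^H\qM_{jj}\qH_{2,s'}\qH_{2,s'}^H\}\,\bTeta^H\qh_k$. For $s\neq s'$, independence of $\qH_{2,s}$ and $\qH_{2,s'}$ factors the expectation as $\Ex\{\qH_{2,s}\qH_{2,s}^H\}\qM_{jj}\Ex\{\qH_{2,s'}\qH_{2,s'}^H\}$, directly giving $\qD_j^{ss'}$. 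For $s=s'$, applying Isserlis' theorem to the Ricean expansion of $\Ex\{\qH_{2,s}\qH_{2,s}^H\qM_{jj}\qH_{2,s}\qH_{2,s}^H\}$ and invoking the Gaussian identity $\Ex\{\tilde{\qH}_{2,s}\tilde{\qH}_{2,s}^H\qA\tilde{\qH}_{2,s}\tilde{\qH}_{2,s}^H\} = M^{\ast}\trace(\qA)\qI_N + (M^{\ast})^2\qA$ (together with analogous mixed identities for terms combining LoS and NLoS factors) yields the six-summand structure of $\qC_{sj}$ in~\eqref{eq:C_sj}. Finally, $\Ex\{|\tBUk|^2\}$ reduces to the diagonal sum $P_f\sum_s \Tilde{\eta}_{sk}^{\MRT}\mathrm{Var}(\Tilde{\qg}_{sk}^H\Tilde{\qg}_{sk})$, since cross-subarray covariances vanish by independence and zero mean; the variance is obtained by subtracting $|\Tcks|^2$ from the $s=s'$ Isserlis expansion with $\qh_k$ in place of $\qh_j$, and the rank-one identity $\bar{\qH}_{2,s}\bar{\qH}_{2,s}^H = M^{\ast}\qb_N\qb_N^H$ collapses the result into the compact form $\Taks$.

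The main obstacle is the bookkeeping in this fourth-order Gaussian moment calculation for $\qC_{sj}$ and $\Taks$. Expanding $(\alpha_2\bar{\qH}_{2,s}+\beta_2\tilde{\qH}_{2,s})$ four times produces sixteen cross terms, of which only those containing an even number of NLoS factors survive; each surviving contribution must then be matched, via Wick-type identities for complex Gaussian matrices, to the correct piece of the stated compact form. This assembly—rather than any individual step—is the most delicate portion of the proof, and the rank-one factorization of $\bar{\qH}_{2,s}$ is precisely what lets several superficially different terms be merged into the three-term packaging of $\Taks$ and the six-term packaging of $\qC_{sj}$.
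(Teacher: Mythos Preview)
Your proposal is correct and follows essentially the same route as the paper's proof: substitute the MRT precoder into~\eqref{eq:FFU:UF}, use the Ricean decomposition of $\qH_{2,s}$ together with the independence of the NLoS blocks across subarrays, and compute the fourth-order Gaussian moment $\Ex\{\qH_{2,s}\qH_{2,s}^H\qM_{jj}\qH_{2,s'}\qH_{2,s'}^H\}$ by distinguishing the $s=s'$ and $s\neq s'$ cases. The paper's appendix is in fact slightly less explicit than your outline on the $\qC_{sj}$ and $\Taks$ bookkeeping—it simply states the result of the fourth-order expansion—whereas you correctly identify the Wick/Isserlis expansion and the rank-one identity $\bar{\qH}_{2,s}\bar{\qH}_{2,s}^H = M^{\ast}\qb_N\qb_N^H$ as the tools that collapse the sixteen cross terms into the compact forms given in the proposition.
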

\begin{proof}
    See Appendix~\ref{Prop:MRT:FF:proof}.
\end{proof}
\subsection{VR-Based CZF Precoding Scheme}
In order to mitigate the intra-group interference caused by signals transmitted to other NFUEs (FFUEs), the CZF precoding method is employed.\footnote{While CZF and LZF effectively handle intra-group interference, they overlook inter-group interference. Although optimal precoding can address this issue, it comes with high computational cost and limited gains. To reduce complexity and manage inter-group interference, we propose using protective MRT (PMRT) or protective ZF (PZF) techniques to shield each user group from interference by others.} For NFUE $k$, the precoding vector of the subarray $s$ is given by
\begin{align}~\label{eq:NF:CZF}
\bar{\qw}_{sk}^{\CZF} = \big[\bar{\qH}_1(\bar{\qH}_1^{H}\bar{\qH}_1)^{-1}\big]_{(r_s,k)} \in \mathbb{C}^{M^{\ast} \times 1}. 
\end{align}
For FFUEs, we set the precoding vector of the subarray $s$ towards FFUE $k$ as
\begin{align}~\label{eq:FF:CZF}
    \Tilde{\qw}_{sk}^{\CZF} = \big[\Tilde{\qG} (\Tilde{\qG}^H\Tilde{\qG})^{-1}\big]_{(r_s,k)} \in \mathbb{C}^{M^{\ast} \times 1}.
\end{align}
Since the VR-based CZF scheme shares similarities with centralized cell-free massive MIMO systems, to eliminate intra-group interference, it is sensible to set the power control coefficients as $\betakCZF\triangleq\bar{\eta}_{1k}^{\CZF}=\ldots=\bar{\eta}_{Sk}^{\CZF}$ for any NFUE $k$~\cite{2024:Mohammadi:survey}. For FFUEs, it follows similarly.

\begin{proposition}~\label{Prop:CZF:NF}
The $\text{SE}$ of NFUE $k$ with VR-based CZF precoding scheme is given by $\bSEkCZF = \log_2\big(1 +\bSINRk^{\CZF}\big)$, where
 \normalsize
\begin{align}   \label{eq:FSINR:NF:VR, ZF}
    &\bSINRk^{\CZF}= \frac{P_n\betakCZF \big(\sum_{s\in \SknZF}\bar{\qg}_{sk}^{H}\bar{\qw}_{sk}^{\CZF}\big)^2 }{
    \bar{\Delta}^{\CZF}+ \sigma_k^2}, 
\end{align}
where
\begin{align}
     \bar{\Delta}^{\CZF}&\triangleq {P_n}\sum\nolimits_{i \in\Kn\setminus k}\betaiCZF\Big\vert \sum\nolimits_{s\in\SinZF}  \bar{\qg}_{sk}^{H} \bar{\qw}_{si}^{\CZF} \Big\vert^2 \nonumber \\
     &\hspace{2em}+ P_f \sum\nolimits_{j\in\Kf} \TetajCZF r_{kj}, 
\end{align}
where $r_{kj}\! \triangleq \Re\Big\{\sum_{s \in \SjfZF}\!\!\sum_{s' \in \SjfZF}\!\bar{\qg}_{sk}^{H}\Ex\big\{\Tilde{\qw}_{sj}^{\CZF} (\Tilde{\qw}_{s'j}^{\CZF})^H \big\}\bar{\qg}_{s'k} \Big\}$, $\SkfZF$, $\SjfZF$ and $\SinZF$ are the VRs selected by  FFUEs $k$, $j$ and  NFUE $i$ under CZF precoding scheme, respectively. 
\end{proposition}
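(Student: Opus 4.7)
The plan is to derive $\bSINRk^{\CZF}$ by direct substitution of the VR-based CZF precoders in \eqref{eq:NF:CZF}--\eqref{eq:FF:CZF} together with the equal-per-subarray power coefficients $\bar{\eta}_{sk}^{\CZF}=\betakCZF$ and $\Tilde{\eta}_{sj}^{\CZF}=\TetajCZF$ into the general NFUE SINR in \eqref{eq:simplified SINR expression of NFUE}, whose numerator and denominator are assembled from the signal components $\bDSk$, $\bUIi$, and $\btUIj$ defined in \eqref{eq:FFU:NF}. Because the VR indicators $\bar{\qD}_{sk}^{\CZF}$ and $\Tilde{\qD}_{sj}^{\CZF}$ equal $\qI_{M^{\ast}}$ on $\SknZF$ and $\SjfZF$, respectively, and vanish otherwise, every outer sum over $\Set$ collapses to a partial sum over the corresponding VR.

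For the desired signal and intra-group interference I would exploit the fact that the NFUE channels $\bar{\qg}_{sk}$ are purely deterministic (LoS, per Section~\ref{Sec:model}) and that $\bar{\qw}_{sk}^{\CZF}$ depends only on the deterministic matrix $\bar{\qH}_1$, so no expectation is required. Factoring the constant $\sqrt{\betakCZF}$ out of the sum yields
\begin{align*}
|\bDSk|^2 &= P_n\betakCZF\Bigl(\sum\nolimits_{s\in\SknZF}\bar{\qg}_{sk}^{H}\bar{\qw}_{sk}^{\CZF}\Bigr)^{2},\\
|\bUIi|^2 &= P_n\betaiCZF\Bigl|\sum\nolimits_{s\in\SinZF}\bar{\qg}_{sk}^{H}\bar{\qw}_{si}^{\CZF}\Bigr|^{2}.
\end{align*}
Whereas a full-array CZF would annihilate the cross term via the identity $\bar{\qH}_1^{H}[\bar{\qH}_1(\bar{\qH}_1^{H}\bar{\qH}_1)^{-1}]=\qI_{K_n}$, the VR-restricted partial sum over $\SinZF$ does not generally vanish, so this residual leakage must be retained inside $\bar{\Delta}^{\CZF}$.

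For the inter-group interference, $\bar{\qg}_{sk}$ remains deterministic, but the FFUE CZF precoder $\Tilde{\qw}_{sj}^{\CZF}$ is random through its dependence on $\Tilde{\qG}$, which inherits the NLoS component $\Tilde{\qH}_2$ of $\qH_2$. Expanding the squared magnitude into a double sum over $s,s'\in\SjfZF$ and interchanging expectation with the finite sum gives
\begin{align*}
\Ex\{|\btUIj|^2\} = P_f\TetajCZF\!\!\sum_{s,s'\in\SjfZF}\!\!\bar{\qg}_{sk}^{H}\Ex\{\Tilde{\qw}_{sj}^{\CZF}(\Tilde{\qw}_{s'j}^{\CZF})^{H}\}\bar{\qg}_{s'k}.
\end{align*}
Since $\Ex\{\Tilde{\qw}_{s'j}^{\CZF}(\Tilde{\qw}_{sj}^{\CZF})^{H}\} = \Ex\{\Tilde{\qw}_{sj}^{\CZF}(\Tilde{\qw}_{s'j}^{\CZF})^{H}\}^{H}$, the quadratic form equals its own complex conjugate, so the $\Re\{\cdot\}$ operator in the definition of $r_{kj}$ is cosmetic and the expression matches $P_f\TetajCZF\,r_{kj}$. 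Assembling desired signal, intra-group residual, inter-group term, and noise back into \eqref{eq:simplified SINR expression of NFUE} then produces the claimed SINR.

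The main obstacle is the cross-covariance $\Ex\{\Tilde{\qw}_{sj}^{\CZF}(\Tilde{\qw}_{s'j}^{\CZF})^{H}\}$, which involves the pseudo-inverse of the random Gram matrix $\Tilde{\qG}^{H}\Tilde{\qG}$ and admits no convenient closed form. The proposition sidesteps this by keeping the expectation implicit inside $r_{kj}$, which is why the result is naturally a statistical-form expression rather than a fully closed-form bound, consistent with the paper's distinction between the MRT (closed form) and the CZF/LZF (statistical form) results.
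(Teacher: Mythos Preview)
Your proposal is correct and follows essentially the same approach as the paper, whose proof is a one-line remark that the derivation of $r_{kj}$ mirrors the computation of $\Ex\{|\btUIj|^2\}$ in the MRT case (Appendix~\ref{Prop:MRT:NF:proof}, \eqref{eq:UI_J:Near}). You have simply unpacked that remark: substitute the CZF precoders and equal-per-subarray powers into the generic SINR template \eqref{eq:simplified SINR expression of NFUE}, use determinism of $\bar{\qg}_{sk}$ and $\bar{\qw}_{sk}^{\CZF}$ for the desired and intra-group terms, and for the inter-group term expand $|\cdot|^2$ into a double sum and pull $\bar{\qg}_{sk}$ outside the expectation, leaving $\Ex\{\Tilde{\qw}_{sj}^{\CZF}(\Tilde{\qw}_{s'j}^{\CZF})^{H}\}$ implicit exactly as the proposition does.
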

\begin{proof}
    The derivation of $r_{kj}$ follows the proof of $\Ex\{\vert\btUIj\vert^2\}$ in \eqref{eq:UI_J:Near} with MRT and is therefore omitted.
\end{proof}
\begin{proposition}
The SE of FFUE $k$ with VR-based CZF design is given by $\tSEkCZF = \log_2\big(1 +\tSINRk^{\CZF}\big)$, where
\begin{align}   \label{eq:final results of far-field SINR expression with VR, ZF}
    &\tSINRk^{\CZF}= \frac{P_f\TetakCZF \big(\sum_{s\in \SkfZF}\Ex\{\Tilde{\qg}_{sk}^H \Tilde{\qw}_{sk}^{\CZF} \}\big)^2 }{
    \tilde{\Delta}^{\CZF}+ \sigma_k^2
    },  
\end{align}
where
\begin{align}
 \tilde{\Delta}^{\CZF} &\triangleq P_f \TetakCZF \varepsilon_k + \!P_n \sum\nolimits_{i\in\Kn} \betaiCZF t_{ki}  \nonumber\\
 &\hspace{-1em}+P_f \sum\nolimits_{j\in\Kf\setminus k}\!\!\TetajCZF \Ex\Big\{ \Big\vert\sum\nolimits_{s\in \SjfZF} \!\Tilde{\qg}_{sk}^H \Tilde{\qw}_{sj}^{\CZF} \Big\vert^2 \Big\}, 
\end{align}
with $\varepsilon_k \!\triangleq \!\sum\nolimits_{s\in \SkfZF}\! \Ex\Big\{ \big\vert \Tilde{\qg}_{sk}^H \Tilde{\qw}_{sk}^{\CZF} - \Ex\{\Tilde{\qg}_{sk}^H \Tilde{\qw}_{sk}^{\CZF} \}\big\vert^2\Big\}$ and $t_{ki} \triangleq \Re\Big\{\sum\nolimits_{s\in \SniZF} \sum\nolimits_{s' \in \SniZF} (\bar{\qw}_{si}^{\CZF})^{H} \qB_{ki}^{ss'} \bar{\qw}_{si}^{\CZF}\Big\}$.
\end{proposition}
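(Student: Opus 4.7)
The plan is to start from the generic FFUE SINR expression in \eqref{eq:simplified SINR expression of FFUE} and evaluate, term by term, the four quantities $\tDSk$, $\tBUk$, $\tUIj$, $\tbUIi$ defined in \eqref{eq:FFU:UF} after substituting the CZF precoding vectors \eqref{eq:FF:CZF}, \eqref{eq:NF:CZF} and exploiting the fact that the per-subarray power coefficients collapse to a single coefficient per user, i.e., $\Tilde{\eta}_{sk}^{\CZF}=\TetakCZF$ for all $s\in\SkfZF$ and $\bar{\eta}_{si}^{\CZF}=\betaiCZF$ for all $s\in\SinZF$. The numerator follows immediately from \eqref{eq:DS}: factoring out $\sqrt{P_f\TetakCZF}$ from the expectation and squaring gives $P_f\TetakCZF\big(\sum_{s\in\SkfZF}\Ex\{\Tilde{\qg}_{sk}^H\Tilde{\qw}_{sk}^{\CZF}\}\big)^2$, which is the claimed numerator.

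For the beamforming-uncertainty term I would use \eqref{eq:BU} directly: since $\tBUk$ is the centered version of the effective channel multiplied by $\sqrt{P_f\TetakCZF}$, the second moment equals $P_f\TetakCZF$ times the variance of the aggregate precoded channel, which is precisely $\varepsilon_k$ (exploiting that cross-subarray centered contributions over distinct subarrays are uncorrelated because their randomness is driven by independent columns of $\Tilde{\qH}_2$). Similarly, $\Ex\{|\tUIj|^2\}$ in \eqref{eq:UIj} contributes $P_f\TetajCZF\,\Ex\{|\sum_{s\in\SjfZF}\Tilde{\qg}_{sk}^H\Tilde{\qw}_{sj}^{\CZF}|^2\}$ for each $j\in\Kf\setminus k$, which matches the third summand in $\tilde{\Delta}^{\CZF}$.

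The inter-group interference from NFUEs, $\Ex\{|\tbUIi|^2\}$, is the main technical step. Expanding \eqref{eq:ULI} with $\Tilde{\qg}_{sk}^H=\qh_k^H\bTeta\qH_{2,s}$ produces a double sum over $s,s'\in\SniZF$ of terms of the form $(\bar{\qw}_{si}^{\CZF})^H\qH_{2,s'}^H\bTeta^H\qh_k\qh_k^H\bTeta\qH_{2,s}\bar{\qw}_{si}^{\CZF}$. Since $\bar{\qw}_{si}^{\CZF}$ depends only on $\bar{\qH}_1$ and is therefore independent of $\qH_2$, I would decompose $\qH_2=\alpha_2\bar{\qH}_2+\beta_2\Tilde{\qH}_2$ and take expectation over $\Tilde{\qH}_2$; cross terms in $\alpha_2\beta_2$ vanish, and the NLoS second-moment identity $\Ex\{\Tilde{\qH}_{2,s}\qA\Tilde{\qH}_{2,s'}^H\}=\delta_{ss'}\trace(\qA)\qI_N$ collapses the NLoS contribution to $\delta_{ss'}\varsigma_k\beta_2^2 N\qI_{M^\ast}$ sandwiched appropriately. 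Collecting the LoS and NLoS pieces reproduces exactly $\qB_{ki}^{ss'}=\alpha_2^2\bar{\qH}_{2,s'}^H\qM_{kk}\bar{\qH}_{2,s}+\delta_{ss'}\varsigma_k\beta_2^2 N\qI_{M^\ast}$ defined in Proposition~\ref{Prop:MRT:FF}, yielding $P_n\betaiCZF t_{ki}$ as claimed, after taking real parts (the quadratic form is Hermitian, hence real).

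The hard part is being careful with the independence structure: unlike MRT, $\bar{\qw}_{si}^{\CZF}$ is a global, nonlinear function of $\bar{\qH}_1$ (through the pseudoinverse), so one must verify that $\bar{\qw}_{si}^{\CZF}$ remains independent of $\qH_2$ (it does, since only $\bar{\qH}_1$ is involved) before pulling $\bar{\qw}_{si}^{\CZF}$ outside the expectation over $\Tilde{\qH}_2$. Once this is justified, the rest is essentially the same second-moment computation carried out in Appendix~\ref{Prop:MRT:FF:proof}, and gathering the four contributions into \eqref{eq:final results of far-field SINR expression with VR, ZF} completes the argument.
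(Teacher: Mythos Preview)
Your approach is essentially identical to the paper's: the paper's proof simply says the derivation of $t_{ki}$ follows that of $\Ex\{|\tbUIi|^2\}$ in the MRT appendix and omits everything else, and you have filled in exactly those details (substituting the CZF vectors, factoring out the common per-user power coefficients, and reusing the second-moment computation that produces $\qB_{ki}^{ss'}$). One caveat worth tightening: your justification for reducing $\Ex\{|\tBUk|^2\}$ to the per-subarray sum $\varepsilon_k$ invokes independence of the columns of $\Tilde{\qH}_2$, but unlike MRT, each $\Tilde{\qw}_{sk}^{\CZF}$ is a global function of $\Tilde{\qG}$ (hence of all subarray channels), so the cross-subarray uncorrelatedness does not follow from that argument alone---the paper simply states $\varepsilon_k$ in this summed form without further justification, consistent with its description of the CZF/LZF results as ``statistical-form'' expressions.
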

\begin{proof}
    The derivation of $t_{ki}$ follows the proof of $\Ex\{\vert\tbUIi\vert^2\}$ in \eqref{eq:UI_I:Far} with MRT and is therefore omitted.    
\end{proof}
\subsection{VR-Based LZF Precoding Scheme}
To reduce the computational complexity of the CZF precoding scheme associated with channel inversion, we propose a low-complexity scheme, referred to as LZF, which performs zero-forcing locally. Specifically, instead of using all VRs, each subarray $s$ independently applies precoding to nullify intra-group interference within its own subarray. However, it cannot eliminate intra-group interference caused by other subarrays. We notice that LZF employs the same VRs as CZF for simplicity.

For NFUE $k$ served by subarray $s$, the precoding vector with the LZF scheme is given by 
\begin{equation}
    \bar{\qw}_{sk}^{\LZF} = \Big[\bar{\qG}_s(\bar{\qG}_s^H \bar{\qG}_s)^{-1}\Big]_{(:,k)} \in \mathbb{C}^{M^{\ast}\times 1},
\end{equation}
where $\bar{\qG}_s^H \!= \![\bar{\qg}_{s1}, \ldots, \bar{\qg}_{sK_n}\!]^H \!\!\in\! \mathbb{C}^{K_N\times M^{\ast}}$.

Moreover, the LZF precoding vector for FFUE $k$ served by subarray $s$ is 
\begin{equation}
    \Tilde{\qw}_{sk}^{\LZF} = \Big[\Tilde{\qG}_s (\Tilde{\qG}_s^H \Tilde{\qG}_s) ^{-1}\Big]_{(:,k)} \in \mathbb{C}^{M^*\times 1},
\end{equation}
where $\Tilde{\qG}_s^H \!\!=\!\! [(\Tilde{\qg}_{s1}^H)^T, \ldots, (\Tilde{\qg}_{s K_f}^H)^T]^H \!\in\! \mathbb{C}^{K_f\times M^*}$.
Note that if the $s$th subarray serves a single user $k$, LZF precoder at the $s$th subarray reduces to MRT, that is,  $\bar{\qw}_{sk}^{\LZF} = \bar{\qw}_{sk}^{\MRT}$ and $\Tilde{\qw}_{sk}^{\LZF} = \Tilde{\qw}_{sk}^{\MRT}$.
\begin{proposition} 
    The $\text{SE}$ of NFUE $k$ with VR-based LZF precoding scheme is given by $\bSEkLZF = \log_2\Big(1 +\bSINRk^{\LZF}\Big)$, where
\begin{align}   \label{eq:FSINR:NF:VR, LZF}
    \bSINRk^{\LZF} =\frac{P_n \big(\sum_{s\in \SknLZF} \sqrt {\bar{\eta}_{sk}^{\LZF} } \big)^2 }{ \bar{\Delta}^{\LZF} + \sigma_k^2}, 
\end{align}
with $\bar{\Delta}^{\LZF}\triangleq P_n\sum\nolimits_{i \in\Kn\setminus k}\Big\vert\sum\nolimits_{s\in\SinLZF} \!\!\sqrt{\bar{\eta}_{si}^{\LZF} } \bar{\qg}_{sk}^{H} \bar{\qw}_{si}^{\LZF} \Big\vert^2 + P_f \sum\nolimits_{j\in\Kf}\sum\nolimits_{s\in\SjfLZF}\!\!\sum\nolimits_{s'\in\SjfLZF}\!\! \sqrt{\Tetasj^{\LZF} \Tetaspj^{\LZF}}  \Ttaukjsspi$,
with $
    \Ttaukjsspi\triangleq \Re\Big\{\bar{\qg}_{sk}^{H}\Ex\big\{\Tilde{\qw}_{sj}^{\LZF} (\Tilde{\qw}_{s'j}^{\LZF})^H \big\}\bar{\qg}_{s'k}\Big\}$.
\end{proposition}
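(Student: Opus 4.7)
The plan is to follow the template of the MRT and CZF proofs in Propositions~\ref{Prop:MRT:NF} and the later CZF variants, reducing the LZF analysis to a combination of a local zero-forcing identity for the in-subarray terms and a statistical calculation for the inter-group terms that mirrors the MRT derivation. Starting from the received-signal decomposition in~\eqref{eq:reformulated:received} and the component definitions~\eqref{eq:FFU:NF:DS}--\eqref{eq:FFU:NF:BU}, I would substitute the LZF precoder $\bar{\qw}_{sk}^{\LZF}=[\bar{\qG}_s(\bar{\qG}_s^H\bar{\qG}_s)^{-1}]_{(:,k)}$, restricted to the users actually served by subarray $s$, into each component. Because $\bar{\qG}_s$ is built only from the (deterministic) NF LoS-dominant channels of NFUEs served by subarray $s$, the desired-signal and intra-group interference components are deterministic and require no averaging.

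For the desired signal, the local zero-forcing identity $\bar{\qg}_{sk}^H\bar{\qw}_{sk}^{\LZF}=[\bar{\qG}_s^H\bar{\qG}_s(\bar{\qG}_s^H\bar{\qG}_s)^{-1}]_{(k,k)}=1$ holds for every $s\in\SknLZF$, so \eqref{eq:FFU:NF:DS} collapses to $\bDSk=\sqrt{P_n}\sum_{s\in\SknLZF}\sqrt{\bar{\eta}_{sk}^{\LZF}}$; squaring yields the numerator of~\eqref{eq:FSINR:NF:VR, LZF}. For intra-group interference from NFUE $i\neq k$, the cross term $\bar{\qg}_{sk}^H\bar{\qw}_{si}^{\LZF}$ vanishes only when both users are jointly served by subarray $s$; when $s\in\SinLZF$ serves $i$ but not $k$, a residual leakage persists and yields the first term of $\bar{\Delta}^{\LZF}$ without any expectation. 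For the inter-group interference $\btUIj$, the precoder $\Tilde{\qw}_{sj}^{\LZF}$ now depends on the random NLoS component of $\qH_2$ through $\Tilde{\qG}_s$, so expanding $\Ex\{|\btUIj|^2\}$ and grouping conjugate-pair cross terms yields
\begin{equation*}
\Ex\{|\btUIj|^2\}=P_f\!\!\sum_{s,s'\in\SjfLZF}\!\!\sqrt{\Tetasj^{\LZF}\Tetaspj^{\LZF}}\,\Re\{\bar{\qg}_{sk}^H\Ex\{\Tilde{\qw}_{sj}^{\LZF}(\Tilde{\qw}_{s'j}^{\LZF})^H\}\bar{\qg}_{s'k}\},
\end{equation*}
which is exactly $\Ttaukjsspi$ as defined in the statement. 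Summing over $j\in\Kf$ produces the second term of $\bar{\Delta}^{\LZF}$, and adding $\sigma_k^2$ completes~\eqref{eq:FSINR:NF:VR, LZF}.

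The main obstacle is that an explicit closed-form evaluation of $\Ex\{\Tilde{\qw}_{sj}^{\LZF}(\Tilde{\qw}_{s'j}^{\LZF})^H\}$ is intractable in general, since $\Tilde{\qw}_{sj}^{\LZF}$ is a pseudoinverse column of the Rician-distributed matrix $\Tilde{\qG}_s$ whose rows contain the cascaded channels $\qh_i^H\bTeta\qH_{2,s}$. This is precisely why the proposition classifies the result as a statistical-form expression, keeping $\Ttaukjsspi$ as a placeholder in parallel with the CZF quantities $r_{kj}$ and $t_{ki}$ in the preceding propositions, whose justifications are invoked by analogy with the MRT derivation in Appendix~\ref{Prop:MRT:NF:proof}. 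Once the local ZF identity has been applied to the deterministic pieces and the conjugate-pair grouping has been performed on the stochastic piece, the remainder is routine bookkeeping.
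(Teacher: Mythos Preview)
Your proposal is correct and follows essentially the same route as the paper, which in fact only states that the derivation of $\Ttaukjsspi$ mirrors the computation of $\Ex\{|\btUIj|^2\}$ in the MRT appendix and omits the rest. Your treatment is more explicit: you spell out the local zero-forcing identity for the desired-signal numerator, correctly identify that residual intra-group leakage survives only when $s\in\SinLZF\setminus\SknLZF$, and reproduce the conjugate-pair grouping that justifies the $\Re\{\cdot\}$ in $\Ttaukjsspi$, all of which the paper takes for granted.
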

\begin{proof}
    The derivation of $\Ttaukjsspi$ follows the proof of $\Ex\{\vert\btUIj\vert^2\}$ in \eqref{eq:UI_J:Near} with MRT and is therefore omitted.    
\end{proof}
\begin{proposition}
The SE of FFUE $k$ with VR-based LZF design is given by  $\tSEkLZF = \log_2\Big(1 +\tSINRk^{\LZF}\Big)$, where
\begin{align}   \label{eq:final results of far-field SINR expression with VR, LZF}
    & 
    \tSINRk^{\LZF} =\frac{P_f (\sum\nolimits_{s\in \SkfLZF}\sqrt{\Tetask^{\LZF}})^2 }{ \tilde{\Delta}^{\LZF}+ \sigma_k^2}, 
\end{align}
where $\tilde{\Delta}^{\LZF}\triangleq P_f \sum\nolimits_{j\in\Kf\setminus k} \Ex\Big\{ \Big\vert\sum\nolimits_{s\in \SjfLZF\!\!} \sqrt{\Tetasj^{\LZF}}\Tilde{\qg}_{sk}^H \Tilde{\qw}_{sj}^{\LZF} \Big\vert^2 \Big\}+ P_n \sum\nolimits_{i\in\Kn} \!\!
\sum\nolimits_{s\in \SniLZF} \!\!
\sum\nolimits_{s'\in \SniLZF} \!\!\sqrt{\betasi^{\LZF} \betaspi^{\LZF}} \btaukisspi$,
with $\btaukisspi \triangleq \Re\big\{(\bar{\qw}_{s'i}^{\LZF})^{H} \qB_{ki}^{ss'} \bar{\qw}_{si}^{\LZF}\big\}$.
\end{proposition}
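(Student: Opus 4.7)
My plan is to specialize each term in \eqref{eq:simplified SINR expression of FFUE} to the VR-based LZF precoder defined by $\Tilde{\qw}_{sk}^{\LZF}$, and then assemble the pieces. The derivation closely parallels Proposition~\ref{Prop:MRT:FF}, but the zero-forcing structure collapses several terms, which is why the authors delegate the remaining algebra to the MRT proof.

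First, I would exploit the zero-forcing identity $\Tilde{\qg}_{sk}^H\Tilde{\qw}_{sj}^{\LZF}=\delta_{kj}$ for every $s$ in the LZF VR, which follows from $\Tilde{\qG}_s^H[\Tilde{\qG}_s(\Tilde{\qG}_s^H\Tilde{\qG}_s)^{-1}]=\qI_{K_f}$. Applied to \eqref{eq:DS}, this makes the desired signal deterministic, $\tDSk=\sqrt{P_f}\sum_{s\in\SkfLZF}\sqrt{\Tetask^{\LZF}}$, whose squared magnitude is exactly the numerator of $\tSINRk^{\LZF}$. The same identity forces $\tBUk\equiv 0$ in \eqref{eq:BU}, so the beamforming-gain-uncertainty term that appeared for MRT drops out of $\tilde{\Delta}^{\LZF}$. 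This deterministic-signal/zero-uncertainty simplification is the key structural difference from Proposition~\ref{Prop:MRT:FF}.

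Next, for the intra-group FFUE interference with $j\ne k$, I would retain $\Ex\{|\tUIj|^2\}$ in the compact expectation form of \eqref{eq:UIj}, since the NLoS component of $\qH_2$ enters $\Tilde{\qw}_{sj}^{\LZF}$ through a Gaussian-matrix pseudoinverse and admits no elementary closed form. This matches the first contribution to $\tilde{\Delta}^{\LZF}$. For the inter-group interference from NFUE $i$, the precoder $\bar{\qw}_{si}^{\LZF}$ is deterministic because the NF channels are LoS; the only randomness enters through $\Tilde{\qg}_{sk}^H=\qh_k^H\bTeta\qH_{2,s}$. Expanding $\qH_{2,s}=\alpha_2\bar{\qH}_{2,s}+\beta_2\Tilde{\qH}_{2,s}$ inside the $(s,s')$ double sum of $\Ex\{|\tbUIi|^2\}$, the LoS--NLoS cross terms vanish because $\Ex\{\Tilde{\qH}_{2,s}\}=\mathbf{0}$, and the NLoS--NLoS piece contributes only when $s=s'$, evaluating to $\beta_2^2\trace(\qM_{kk})\qI_{M^{\ast}}=\beta_2^2 N\varsigma_k\qI_{M^{\ast}}$ after using $\bTeta\bTeta^H=\qI_N$ and $\|\qh_k\|^2=N\varsigma_k$. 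Combined with the pure-LoS kernel $\alpha_2^2\bar{\qH}_{2,s'}^H\qM_{kk}\bar{\qH}_{2,s}$, this reproduces $\qB_{ki}^{ss'}$ as defined in Proposition~\ref{Prop:MRT:FF}; taking the real part yields $\btaukisspi$ and produces the second contribution to $\tilde{\Delta}^{\LZF}$.

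The main technical obstacle is the NLoS cross-term expectation across subarrays, but it is essentially a verbatim repetition of the argument in Appendix~\ref{Prop:MRT:FF:proof}, which is precisely why the authors invoke the MRT proof rather than re-deriving it. The only nontrivial check beyond that is to verify that the pseudoinverse structure of LZF really gives $\delta_{kj}$ after restriction to the VR, so that $\tBUk$ collapses cleanly; once this is granted, assembling the four pieces into \eqref{eq:final results of far-field SINR expression with VR, LZF} is immediate.
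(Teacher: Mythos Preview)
Your proposal is correct and follows essentially the same route as the paper. The paper's proof is even terser than yours---it simply states that the derivation of $\btaukisspi$ follows the proof of $\Ex\{\vert\tbUIi\vert^2\}$ in the MRT case and omits the rest---so your explicit handling of the desired signal via the LZF identity $\Tilde{\qg}_{sk}^H\Tilde{\qw}_{sj}^{\LZF}=\delta_{kj}$, the collapse of $\tBUk$, and the retention of the intra-group term in expectation form are all consistent with (and more detailed than) what the authors intend.
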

\begin{proof}
    The derivation of $\btaukisspi$ follows the proof of $\Ex\{\vert\tbUIi\vert^2\}$ in \eqref{eq:UI_I:Far} with MRT and is therefore omitted.        
\end{proof}

\begin{Remark}
It is worth noting that, since the majority of the received power is concentrated within each user’s VR, our approach avoids involving the entire antenna array in the processing for every user. This design choice ensures that the computational burden does not scale proportionally with the total number of antennas. Moreover, in the proposed LZF precoding scheme, instead of jointly processing all VRs, each subarray $s$ performs precoding independently to suppress intra-group interference within its own region. This localized and decoupled processing structure eliminates the need for costly large-scale matrix inversions. Consequently, the computational complexity of the precoding stage is reduced from $\OO(M\times K_f^2)$ to $\OO(M^{\ast} \times K_f^2)$ in LZF, where $M^{\ast}$ is the number of antennas of a subarray and $M^{\ast} \ll M$.
\end{Remark}

\section{Optimization problem} \label{Sec:Solution_Optimization}
In this section, we optimize the phase shifts at the RIS and the power control coefficients at the XL-MIMO array to maximize the sum of the weighted minimum SE for both NFUEs and FFUEs, while ensuring that individual QoS requirements are met for all users. At the optimized solution, NFUEs and FFUEs achieve the same SE within their respective groups. This fairness-oriented optimization framework guarantees seamless connectivity for all users, aligning with the core principles of next-generation wireless networks. The problem can be formulated as
\begin{subequations}\label{eq:original optimization problem}
\begin{alignat}{2}
(\text{P}1):\hspace{1em}&\underset{\{\bar{\boldsymbol{\eta}}^{\mathrm{i}}, \tilde{\boldsymbol{\eta}}^{\mathrm{i}},\US^{\mathrm{i}}, \IS^{\mathrm{i}}, \boldsymbol{\theta}\}}{\max}       
&~& w_n \min_{ k \in \Kn} \bSEk^{\mathrm{i}} + w_f \min_{ k \in \Kf} \tSEk^{\mathrm{i}}  \label{eq:optProb}\\
&\hspace{2em}\text{s.t.} 
&         &\sum\nolimits_{s\in\Set} \Big(P_n \sum\nolimits_{k \in \US} \betask^{\mathrm{i}} \bar{c}_{sk}^{\mathrm{i}}\nonumber\\
&&&+ P_f \sum\nolimits_{k \in \IS} \Tilde{\eta}_{sk}^{\mathrm{i}}\Tilde{c}_{sk}^{\mathrm{i}}\Big) \leq P,\label{eq:power control requirements}\\ 
&         &      &\bSEk^{\mathrm{i}}\geq \bSEkth, \forall k\in\Kn, \label{eq:P1:QoS:NF}    \\
&         &      &\tSEk^{\mathrm{i}}\geq \tSEkth, \forall k\in\Kf, \label{eq:P1:QoS:FF}    \\
&         &      &\betask^{\mathrm{i}}\geq 0, \forall s, \forall k\in \US, \label{eq:power control:NF}    \\
&         &      &\Tetask^{\mathrm{i}} \geq 0, \forall s, \forall k\in \IS, \label{eq:power control:FF}    \\
&         &      & 0 \leq \theta_n \leq 2\pi, n=1,\ldots,N,
\end{alignat}
\end{subequations}
where 
$\BEtaN^{\mathrm{i}}=[(\BEtaN_1^{\mathrm{i}})^T,\ldots,(\BEtaN_{K_N}^{\mathrm{i}})^T]^T$, with $\BEtaN_k^{\mathrm{i}}=[\bar{\eta}_{1k}^{\mathrm{i}},\ldots,\bar{\eta}_{Sk}^{\mathrm{i}}]^T$, is a vector of power control coefficients for the NFUEs; $\BEtaF^{\mathrm{i}}=[(\BEtaF_1^{\mathrm{i}})^T,\ldots,(\BEtaF_{K_F}^{\mathrm{i}})^T]^T$, with $\BEtaF_k^{\mathrm{i}} =[\Tilde{\eta}_{1k}^{\mathrm{i}},\ldots,\Tilde{\eta}_{Sk}^{\mathrm{i}}]^T$, represents power control coefficients for the FFUEs, where ${\mathrm{i}}\in\{\CZF,\LZF,\MRT\}$, respectively; $\US^{\mathrm{i}}$ and $\IS^{\mathrm{i}}$ denote the sets of NFUEs and FFUEs served by the subarray $s$; $w_n$ and $w_f$ are the positive weighting factors of the NF and FF parts, respectively. Constraints~\eqref{eq:P1:QoS:NF} and~\eqref{eq:P1:QoS:FF} denote the QoS requirements, where $\bSEkth$ and $\tSEkth$ are the minimum SE requirements for NFUEs and FFUEs, respectively. 

The optimization problem $(\text{P}1)$ is highly intractable due to its non-convex objective function with respect to $\boldsymbol{\theta}$, $\bar{\boldsymbol{\eta}}^{\mathrm{i}}, \tilde{\boldsymbol{\eta}}^{\mathrm{i}},\US^{\mathrm{i}}, \IS^{\mathrm{i}}$. Finding the global maximizer of the weighted sum SE, subject to SINR and RIS phase shifts constraints, is extremely challenging. To address this, we decompose the problem into three stages. First, we employ a heuristic algorithm to determine suitable VR assignments for each user, resulting in the sets
$\US^{\mathrm{i}}$ and $\IS^{\mathrm{i}}$, while satisfying individual QoS requirements. Given these sets, we then propose a two-stage algorithm that decouples the original problem into two more tractable subproblems: RIS phase shifts design and power allocation. This decomposition significantly reduces the computational complexity by breaking the coupling between variables. More specifically, we adopt the penalty method to obtain an optimized phase-shift design and SCA to design an optimized power control algorithm.
\vspace{-1em}
\subsection{Heuristic VR Selection Algorithm}~\label{subsec:VR}
We propose a heuristic VR selection algorithm for linear precoding that ensures the QoS of NFUEs and FFUEs with minimum number of subarrays. Without loss of generality, we focus on the CZF scheme. For NFUE $k$, we first initialize a set $\bar{\boldsymbol{\mathcal{D}}}_k^{\CZF} = \diag(\bar{\qD}_{1k}, \ldots, \bar{\qD}_{Sk})$, with entries $\bar{\qD}_{sk} = \qI_{M^\ast}$, indicating that NFUE $k$ initially utilizes all subarrays. Next, for each subarray $s$, we calculate the SINR of NFUE $k$ without employing subarray $s$ as $\overline{\text{SINR}}_{k}^{s}$ to assess whether the SINR constraint $\bar{\gamma}_k$ for that user is met. We define $\bar{\gamma}_k = \delta\overline{\text{SINR}}_{k}$, where the VR selection ratio $\delta \in [0,1]$, and $\overline{\text{SINR}}_{k}$ can be derived from \eqref{eq:simplified SINR expression of NFUE}. If $\overline{\text{SINR}}_{k}^{s}$ is greater than $\bar{\gamma}_k$, it implies that the remaining subarrays are sufficient for NFUE $k$. Then, we update the set $\bar{\boldsymbol{\mathcal{D}}}_k^{\CZF}$ by setting $\bar{\qD}_{sk} = \mathbf{0}_{M^\ast}$. Finally, we obtain the matrix $\bar{\boldsymbol{\mathcal{D}}}_k^{\CZF}$ which represents a set of minimum subarrays that meet the SINR requirement of  NFUE $k$ with CZF.

The proposed heuristic VR selection algorithm for NFUEs is summarized in \textbf{Algorithm 1}. The algorithm for choosing VRs for FFUEs is similar. The difference is that we compare $\widetilde{\text{SINR}}_{k}^{s}$, which denotes the SINR without incorporating the subarray $s$ for FFUE $k$, with the SINR constraint $\Tilde{\gamma}_k = \delta\widetilde{\text{SINR}}_{k}$, where $\widetilde{\text{SINR}}_{k}$ can be obtained from \eqref{eq:simplified SINR expression of FFUE}. Consequently, we obtain $\Tilde{\boldsymbol{\mathcal{D}}}_k^{\CZF}$ with the CZF precoding scheme. Note that we assume that LZF utilizes the same VRs as CZF for tractability, while the corresponding VRs for MRT can be derived similarly.

The main differences between our proposed VR design and the method in~\cite{Zhi:JSAC:2024} are twofold. First, we use SINR as our VR selection criterion instead of SNR, which is more practical as it accounts for interference from signals intended for other NFUEs. Second, the complexity of the method in~\cite{Zhi:JSAC:2024} is 
$\OO(K S \log(S))$, whereas the complexity of our proposed VR design is $\OO(K S)$. This is because, for each user, we sequentially evaluate each subarray by removing it and checking whether the resulting SINR still meets the QoS requirement, resulting in $\OO(S)$ complexity per user.
\vspace{0em}
\subsection{Phase-Shift Design at the RIS}
By analyzing the SE results for different precoding schemes, we observe that these expressions are highly complex functions of the phase shift matrix. Moreover, the objective function and the constraints in~\eqref{eq:power control requirements}-\eqref{eq:P1:QoS:FF} are intricately coupled with respect to this parameter. To address this challenge, we follow the same approach as in~\cite{Wu:TWC:2019}, \cite{Mu:TWC:2022} and maximize the minimum channel gain of the FFUEs to design the optimized phase shift matrix $\boldsymbol{\Theta}$ at the RIS with $\MRT$, and $\CZF$. The optimization problem can be formulated as 
\begin{subequations}
\begin{alignat}{2}
    (\text{P}2) :\hspace{1em} &\underset{\boldsymbol{\Theta}}{\max} &~& \min_{k} f_k(\boldsymbol{\Theta}) \\
    &\text{s.t.} \quad && 
    0\leq\theta_n \leq 2\pi, \forall n,   
\end{alignat}    
\end{subequations}
where $f_k(\boldsymbol{\Theta}) \triangleq \Ex\big\{ \normLt{\Tilde{\qg}_{k}^H}^2\big\}$, which can be expressed as
\begin{align}
    f_k(\boldsymbol{\Theta}) & = \Ex\big\{ \qh_k^H\bTeta \qH_{2} \Tilde{\boldsymbol{\mathcal{D}}}_k^{\mathrm{i}} (\Tilde{\boldsymbol{\mathcal{D}}}_k^{\mathrm{i}})^H \qH_{2}^H  \bTeta^H\qh_k \big\} \nonumber \\
    &=\qv^H \qB_k \qA \qB_k^H \qv, 
\end{align}
where $\qv\! \triangleq\! [e^{j\theta_1}, \ldots, e^{j\theta_N}]^H$,  $\qB_k \!\!\triangleq\!\! \diag(\qh_k^H)$ and $\qA_k \!\triangleq\! \alpha_2^2 \bar{\qH}_{2} \Tilde{\boldsymbol{\mathcal{D}}}_k^{\mathrm{i}} (\Tilde{\boldsymbol{\mathcal{D}}}_k^{\mathrm{i}})^H \bar{\qH}_{2}^H + \beta_2^2 \trace(\Tilde{\boldsymbol{\mathcal{D}}}_k^{\mathrm{i}} (\Tilde{\boldsymbol{\mathcal{D}}}_k^{\mathrm{i}})^H) \qI_N \in  \mathbb{C}^{N\times N}$. We employ SDP to reformulate problem (\text{P}2). To this end, we define $\qV\! \triangleq\! \qv\qv^H\in \mathbb{C}^{N\times N}$, and $\qR_{k}\! \triangleq \!\qB_k \qA_k \qB_k^H \in \mathbb{C}^{N\times N}$. Then, (\text{P}2) can be rewritten as
\begin{subequations} \label{eq:original phase-shift optimization problem}
\begin{alignat}{2} 
    (\text{P}2.1): \hspace{1em}&\underset{\qV, t \geq 0} {\max} &~& t   \\
    &\hspace{0.5em}\text{s.t.} \quad && 
    t \leq \trace( \qR_{k} \qV), k \in\Kf, \label{eq:t:constraint}\\
    &&&\qV_{n,n} = 1, \forall n, \label{eq:V:constraint}\\
    &&&\rank (\qV) =1, \label{eq:V:rankone}\\
    &&& \qV \succeq {\bf 0}. \label{eq:V:SDP}
\end{alignat}    
\end{subequations}
Note that Problem $(\text{P}2.1)$ is non-convex due to the rank-one constraint. To address this issue, we replace the non-convex rank-one constraint with a difference-of-convex (DC) function formulation, expressed as follows \cite{Yang:2020:TWC,Tao:2019:Glob}:
\begin{align}   \label{eq:rank-one condition}
   \rank (\qV) =1 \Leftrightarrow \| \qV\|_{\ast} - \| \qV\|_2 = 0.
\end{align}
In general, $\|\qV\|_{\ast} - \| \qV\|_2 \geq 0$, and equality holds when we have the rank-one solution. Then, we add the DC function in~\eqref{eq:rank-one condition} into
the objective function as a penalty component, yielding
\begin{subequations}  \label{eq:optimization problem for solving Theta}
\begin{alignat}{2}
    (\text{P}2.2):\hspace{1em} &\underset{\qV, t\ge 0}{\max} \quad &~& t - \varrho(\| \qV\|_{\ast} - \| \qV\|_2)  \\
    &\hspace{0.5em}\text{s.t.} \quad && 
    \eqref{eq:t:constraint},    
    \eqref{eq:V:constraint}, \eqref{eq:V:SDP},
    \end{alignat}    
\end{subequations} 
where $\varrho > 0$ is the penalty factor and penalizes the objective function if $\qV$ is not a rank-one matrix. Similarly to \cite{Mu:TWC:2022}, we gradually increase $\varrho$ to obtain a rank-one matrix $\qV$. 

Note that problem $(\text{P}2.2)$ is still non-convex because of the convex term $\varrho \| \qV\|_2$ in the objective function. To overcome this, we use SCA.\footnote{ The SCA is an iterative optimization technique used to tackle non-convex problems by solving a sequence of convex subproblems. The core idea is to transform problem~\eqref{eq:optimization problem for solving Theta} into a series of tractable subproblems by linearizing the convex term $\varrho \| \qV\|_2$ in the objective function. In each iteration, a local convex approximation is solved, improving upon the previous solution. This iterative process continues until convergence to a near-optimal solution \cite{Beck:2010:JGlobOptim}.} To this end, we find the lower bound for $\| \qV \|_2$ as $ \| \qV \|_2 \geq \bar{\qV}^{(n)}$,
where $\bar{\qV}^{(n)} \!=\! ( \| \qV^{(n)} \|_2 + \trace(\qu( \qV^{(n)}) \qu( \qV^{(n)})^H (\qV \!-\! \qV^{(n)})) )$\cite{Mu:TWC:2022,Yang:2020:TWC}. Accordingly, we obtain the following optimization problem:
\begin{subequations}  \label{eq:final optimization problem for solving Theta}
\begin{alignat}{2}
    (\text{P}2.3):\hspace{1em} &\underset{\qV,t\ge 0}{\max} \quad &~& t - \varrho(\| \qV\|_{\ast} - \bar{\qV}^{(n)} )  \\
    &\hspace{0.5em}\text{s.t.} \quad && 
    \eqref{eq:t:constraint},    
    \eqref{eq:V:constraint}, \eqref{eq:V:SDP}.
\end{alignat}    
\end{subequations}
It is clear that problem \eqref{eq:final optimization problem for solving Theta} is a standard SDP which can be solved efficiently using existing solvers such as CVX \cite{cvx}. The penalty-based algorithm for solving \eqref{eq:original phase-shift optimization problem} consists of two nested loops and is executed iteratively until convergence. In the outer loop, we gradually increase the penalty factor $\varrho$. Specifically, we increase $\varrho = l \varrho$ until $\| \qV\|_{\ast} - \| \qV\|_2 \leq \epsilon$, where $l > 1$ denotes the scaling factor, and $\epsilon$ is a predefined threshold. It means that as $\varrho$ increases, the solutions satisfy $\| \qV\|_{\ast} - \| \qV\|_2 \rightarrow 0$, ensuring near rank-one $\qV$. In the inner loop, with the fixed penalty factor, we solve the problem \eqref{eq:final optimization problem for solving Theta} to obtain the optimized solution $\qV^{\ast}$. The proposed algorithm is outlined in \textbf{Algorithm 2}.

\begin{algorithm}[t]
\caption{Heuristic VR Selection Algorithm} \label{alg:choosing VR} 
	\begin{algorithmic}[1]
		\For {$k=1,\ldots,K_n$} 
                \State set $\bar{\boldsymbol{\mathcal{D}}}_k^{\CZF} = \diag(\bar{\qD}_{1k}, \ldots, \bar{\qD}_{Sk})$, where $\bar{\qD}_{sk} = \qI_{M^\ast}$;
		      \For {$s = 1,\ldots,S$} 
                    \State Calculate the SINR without employing subarray \Statex \hspace{3em} $s$ as  $\overline{\text{SINR}}_{k}^{s}$; 
                    \If { $\overline{\text{SINR}}_{k}^{s} \geq \bar{\gamma}_k$}
                        \State Set $\bar{\qD}_{sk} = \mathbf{0}_{M^\ast}$;
                    \EndIf
		      \EndFor 
                \State Get set of subarrays $\bar{\boldsymbol{\mathcal{D}}}_k^{\CZF}$ chosen by NFUE $k$.
		\EndFor
	\end{algorithmic} 
\end{algorithm}
\subsection{Power Control Design at the XL-MIMO Array}
With $\US^{\mathrm{i}}$, $\IS^{\mathrm{i}}$ obtained from the heuristic VR selection algorithm and the optimized phase shifts $\boldsymbol{\theta}^{\ast}$, problem $(\text{P}1)$ is reduced to 
\begin{subequations}    \label{eq:Original power allocation optimization problem}
\begin{alignat}{2}   
    (\text{P}3): &\underset{\{\bar{\boldsymbol{\eta}}^{\mathrm{i}}, \tilde{\boldsymbol{\eta}}^{\mathrm{i}}, \bar{t}^{\mathrm{i}}, \Tilde{t}^{\mathrm{i}}\}}{\max}
    \quad &~& w_n \bar{t}^{\mathrm{i}} + w_f \Tilde{t}^{\mathrm{i}}  \\
    &\hspace{1.5em}\text{s.t.} \quad 
    & & \bSEk^{\mathrm{i}} \geq \bar{t}^{\mathrm{i}}, \forall k \in \Kn, \label{eq:Near-field SE constraint}\\
    &&&\tSEk^{\mathrm{i}} \geq \Tilde{t}^{\mathrm{i}},  \forall k \in \Kf, \label{eq:Far-field SE constraint}\\
    &&& \tilde{t}^{\mathrm{i}}\geq 0, \bar{t}^{\mathrm{i}}\geq 0 \label{eq:t constraints}, \\
    &&& \eqref{eq:power control requirements}, 
    \eqref{eq:power control:NF}, \eqref{eq:power control:FF}, \label{eq:common:constrains}
\end{alignat}    
\end{subequations}
where $\bar{t}^{\mathrm{i}} \!\triangleq\! \min\limits_{k\in\Kn} \bSEk^{\mathrm{i}}$, $\Tilde{t}^{\mathrm{i}} \triangleq \min\limits_{k\in\Kf} \tSEk^{\mathrm{i}}$. 
Since $\bSEk^{\mathrm{i}} = \log_2(1+\bSINRk^{\mathrm{i}})$, we can replace $\eqref{eq:Near-field SE constraint}$ with the following two constraints
\begin{subequations}
\begin{align}
    &\bSINRk^{\mathrm{i}} \geq \bar{T}^{\mathrm{i}}, \forall k \in \Kn, \label{eq:Near-field SINR constraint} \\
    &\bar{T}^{\mathrm{i}} \geq 2^{\bar{t}^{\mathrm{i}}} - 1. \label{eq:Near-field T constraint}   
\end{align}
\end{subequations}
Similarly, $\eqref{eq:Far-field SE constraint}$ can be equivalently replaced by constraints
\begin{subequations}
\begin{align}
    &\tSINRk^{\mathrm{i}} \geq \Tilde{T}^{\mathrm{i}}, \forall k \in \Kf, \label{eq:Far-field SINR constraint} \\
    &\Tilde{T}^{\mathrm{i}} \geq 2^{\Tilde{t}^{\mathrm{i}}} - 1 \label{eq:Far-field T constraint}.   
\end{align}
\end{subequations}
\begin{algorithm}[t]
\caption{Penalty-Based Iterative Algorithm for \eqref{eq:original phase-shift optimization problem}} \label{alg:Penalty-Based Algorithm} 
\begin{algorithmic}[1]
    \State Initialization: set initial point $\qV^{(0)}$, the penalty factor $\varrho$, maximum number of iterations $I_1$, $I_2$ for outer loop and inner loop.
    \Repeat
    \State Set iteration index $n=0$ for inner loop; 
        \Repeat
        \State Solve problem \eqref{eq:final optimization problem for solving Theta}, get the solution $(\qV^{\ast}, t^{\ast})$;  
        \State Set $n=n+1$, update $\qV^{(n)} = \qV^{\ast}$;
        \Until $\| \qV\|_{\ast} - \| \qV\|_2 \leq \epsilon$ or $n=I_2$.
    \State Set $\varrho = l \varrho$;
    \State Update $\qV^{(0)}$ with $\qV^{(n)}$.
    \Until $\| \qV\|_{\ast} - \| \qV\|_2 \leq \epsilon$ or $n=I_1$.    
\end{algorithmic} 
\end{algorithm}
\noindent Then, $(\text{P}3)$  can be recast as 
\begin{subequations} \label{eq:linear objective function:original}
\begin{alignat}{2}
    (\text{P}3.1): &\underset{\{\bar{\boldsymbol{\eta}}^{\mathrm{i}}, \tilde{\boldsymbol{\eta}}^{\mathrm{i}}, \bar{t}^{\mathrm{i}}, \Tilde{t}^{\mathrm{i}}, \bar{T}^{\mathrm{i}}, \Tilde{T}^{\mathrm{i}} \}}{\max} &~& w_n \bar{t}^{\mathrm{i}} + w_f \Tilde{t}^{\mathrm{i}}  \\
    &\hspace{2.7em}\text{s.t.} & & 
    \eqref{eq:t constraints}, 
    \eqref{eq:common:constrains},      
    \eqref{eq:Near-field T constraint},
    \eqref{eq:Far-field T constraint},   
    \label{eq:convex constraints} \\
    &&& \eqref{eq:Near-field SINR constraint}, 
    \eqref{eq:Far-field SINR constraint}.
\end{alignat}   
\end{subequations}

We observe that all the constraints listed in \eqref{eq:convex constraints} are convex and the difficulty in solving $(\text{P}3.1)$ is due to the non-convexity of the constraints in $\eqref{eq:Near-field SINR constraint}$ and $\eqref{eq:Far-field SINR constraint}$. To this end, we propose a SCA-based design that approximates $\eqref{eq:Near-field SINR constraint}$, $\eqref{eq:Far-field SINR constraint}$ by convex approximation functions at each iteration. Note that the optimization problem varies for different precoding schemes due to differences in SINR expressions and sets of power control coefficients.
\subsubsection{Power Control Design with MRT}
Since $\eqref{eq:Near-field SINR constraint}$ and $\eqref{eq:Far-field SINR constraint}$ follow the same structure, we only present the iterative approximation for $\eqref{eq:Near-field SINR constraint}$. Then, following the same approach, we approximate $\eqref{eq:Far-field SINR constraint}$. To this end, we first define $\bxisk \triangleq \sqrt{\betask^{\MRT}}, \forall k\in\Kn$, $\Txisk \triangleq \sqrt{\Tetask^{\MRT}}, \forall k\in\Kf$. Then, we set $\BXiN_k \triangleq [\bar{\xi}_{1k},\ldots,\bar{\xi}_{Sk}]^T$, and $\BXiN \triangleq [\BXiN_{1}^T,\ldots,\BXiN_{K_N}^T]^T$ for NFUEs and $\BXiF_k \triangleq [\Tilde{\xi}_{1k},\ldots,\Tilde{\xi}_{Sk}]^T$, and $\BXiF\triangleq [\BXiF_1^T,\ldots,\BXiF_{K_F}^T]^T$ for FFUEs. Consequently, the optimization problem (\text{P}3.1) with MRT can be reformulated as
\begin{subequations} \label{eq:linear objective function:MRT}
  \begin{alignat}{2}
    (\text{P}3.2):\hspace{1em} &\underset{\qq^\MRT}{\max} &~&  w_n \bar{t}^{\MRT} + w_f \Tilde{t}^{\MRT}  \\
    &\hspace{0.5em}\text{s.t.} 
    &         &\sum\nolimits_{s\in\Set} \Big(P_n \sum\nolimits_{k \in \US} \bar{\xi}_{sk}^2 \bar{c}_{sk}^{\MRT}\nonumber\\
    &&&+ P_f \sum\nolimits_{k \in \IS} \Tilde{\xi}_{sk}^2 \Tilde{c}_{sk}^{\MRT}\Big) \leq P,\label{eq:power:MRT}\\
    &&& \bSINRk^{\MRT}(\BXiN,\BXiF) \geq \bMRT, \forall k \in \Kn \label{eq:Near-field SINR constraint, MRT}, \\
    &&& \tSINRk^{\MRT}(\BXiN,\BXiF) \geq \tMRT, \forall k \in \Kf \label{eq:Far-field SE constraint, MRT}, \\
    &&& \bxisk \geq 0, \Txisk \geq 0, \forall s, \forall k \label{eq:Power allocation coefficients constraints}, \\
    &&&
    \eqref{eq:t constraints},
    \eqref{eq:Near-field T constraint},   \eqref{eq:Far-field T constraint}, 
\end{alignat}   
\end{subequations}
where $\qq^\MRT\!\triangleq\!\{\BXiNMRT,\BXiFMRT,\bar{t}^{\MRT}, \Tilde{t}^{\MRT},\bMRT, \tMRT\}$. Now, the remaining non-convexity of (\text{P}3.2) is the non-convex constraints \eqref{eq:Near-field SINR constraint, MRT} and~\eqref{eq:Far-field SE constraint, MRT}. To address this problem, we apply SCA to approximate \eqref{eq:Near-field SINR constraint, MRT}. Note that \eqref{eq:Near-field SINR constraint, MRT} can be written as 
\begin{align} \label{eq:simplified near-field SINR constraint}
    \frac{P_n\Big(  \sum\nolimits_{s \in \SknMRT} \bxisk \betakks\Big)^2}{\bMRT } \geq  \bar{F}_k^\MRT(\BXiN, \BXiF), \forall k \in \Kn,    
\end{align}
where $ \bar{F}_k^\MRT(\BXiN, \BXiF) \triangleq 
    P_n\!\sum\nolimits_{i \in\Kn\setminus k} \Big(\big\vert  \sum\nolimits_{s \in \SniMRT}\bxisi  \betakis \big\vert^2 \Big)+
    P_f\sum\nolimits_{j\in\Kf}\!\!  {\sum\nolimits_{s \in \SjfMRT}\!\!\sum\nolimits_{s'\in \SjfMRT}\! {\Txisj \Txispj} \Talphakjssp} \!+ \sigma_k^2.$
 The left-hand side (LHS) of $\eqref{eq:simplified near-field SINR constraint}$ is a quadratic-over-linear function that is not jointly concave with respect to $\bxisk$ and $\bMRT$ on the domain $\bMRT > 0$. In light of SCA, the LHS of $\eqref{eq:simplified near-field SINR constraint}$ can be approximated by the following inequality~\cite{Mohammadi:JSAC:2023}
\begin{align} \label{eq:lower bound of x square divides y}
    \frac{x^2}{y} \geq \frac{x^{(n)}}{y^{(n)}} \Big(2x - \frac{x^{(n)}}{y^{(n)}}y \Big).
\end{align}

Moreover, on the right-hand side (RHS) of~\eqref{eq:simplified near-field SINR constraint}, $\bar{F}_k^{\MRT}(\BXiN, \BXiF)$ is non-convex due to the presence of the term $\Txisj\Txispj$. To address this issue, we replace the non-convex term by the first Taylor expansion around a point $(x^{(n)},y^{(n)})$ as follows \cite{Mohammadi:JSAC:2023}
\begin{align}
    \label{eq:xy:ub}
    & 4xy \! \leq  (x\!+\!y)^2\!-\!2(x^{(n)}\!-\!y^{(n)})(x\!-\!y) 
    \!+\! (x^{(n)}\!-\!y^{(n)})^2.
\end{align}
\begin{figure*}
\normalsize 
\vspace*{5pt} 
\begin{align}  
    &\Bar{F}_k^{\MRT}(\BXiN, \BXiF) \leq 
     \Bar{F}_k^{\MRT,ub}(\BXiN, \BXiF) \triangleq
    \frac{\rho_n}{4}\!
    \sum\nolimits_{i \in\Kn\setminus k}\!\! 
    \bigg(\sum\nolimits_{s \in \SniMRT}\!\!
    4\bxisi^2 \vert\betakis \vert^2
    + 2 \sum\nolimits_{s \in \SniMRT}\!\!
    \sum\nolimits_{s' \in\SniMRT,s'>s}\!\! \Big[(\bxisi+\bxispi)^2-2
    \Big(\bxisi^{(n)}-\bxispi^{(n)}\Big)
   \nonumber\\
    &\hspace{6em}
    \times(\bxisi-\bxispi)+\Big(\bxisi^{(n)}-\bxispi^{(n)}\Big)^2\Big]\Re\{\betakis \betakispcj\} \bigg) + 
    \frac{\rho_f}{4}\sum\nolimits_{j\in\Kf} \!\! \sum\nolimits_{s \in \SjfMRT} \!\! \sum\nolimits_{s'\in \SjfMRT}\!\!\Big[(\Txisj +\Txispj)^2  \nonumber\\
    &\hspace{6em}
    -2
    \Big(\Txisj^{(n)} -\Txispj^{(n)}\Big)(\Txisj -\Txispj)+\Big(\Txisj^{(n)} -\Txispj^{(n)}\Big)^2\Big] \Talphakjssp \!+\!1, \label{eq:bFMRT} 
\end{align}
\vspace{-2.5em}
\end{figure*}
Consequently, we obtain the convex upper bound of $\bar{F}_k^\MRT(\BXiN, \BXiF)$ as $\bar{F}_k^{\MRT,ub}(\BXiN, \BXiF)$, shown in~\eqref{eq:bFMRT}, where $\rho_n = \!\frac{P_n}{\sigma_k^2}$ and $\rho_f = \!\frac{P_f}{\sigma_k^2}$.  Regarding \eqref{eq:Far-field SE constraint, MRT}, we have 
\vspace{-0.2em}
\begin{align} \label{eq:FFSINEMRT}
    \frac{P_f\Big(\sum\nolimits_{s \in \SkfMRT}\! {\Txisk} \Tcks \Big)^2}{ \tMRT }\geq \Tilde{F}_k^{\MRT}(\BXiN, \BXiF), \forall k\in \Kf,
\end{align}    
where 
\begin{align} \label{eq:reformulated tF_k}
    &\Tilde{F}_k^{\MRT}(\BXiN, \BXiF) \triangleq P_f \!\sum\nolimits_{s \in \SkfMRT} \! \!\Txisk^2 \Taks\nonumber\\
    &\hspace{2em} + P_f\!\sum\nolimits_{j\in\Kf\setminus k}  \sum\nolimits_{s \in \SjfMRT} \!\sum\nolimits_{s' \in \SjfMRT} \!\! \Txisj\Txispj \Tbkjssp \nonumber\\
    &\hspace{2em} +P_n\! \sum\nolimits_{i\in\Kn} \!\! \sum\nolimits_{s\in \SniMRT} \!\!\sum\nolimits_{s' \in \SniMRT}\!\! \bxisi\bxispi \balphakissp \!+\! \sigma_k^2.
\end{align}
The LHS of \eqref{eq:FFSINEMRT} is handled in a similar way to that of \eqref{eq:simplified near-field SINR constraint}. For the RHS of \eqref{eq:FFSINEMRT}, the upper bound of $\Tilde{F}_k^{\MRT}(\BXiN, \BXiF)$ is given as $\Tilde{F}_k^{\MRT,ub}(\BXiN, \BXiF)$ in \eqref{eq:tFMRT},
\begin{figure*}
\normalsize
\vspace*{5pt} 
\begin{align}
    &\Tilde{F}_k^{\MRT}(\BXiN, \BXiF) \leq 
     \Tilde{F}_k^{\MRT,ub}(\BXiN, \BXiF) \triangleq
    \frac{\rho_f}{4} \!\sum\nolimits_{j\in\Kf}  \!\! \sum\nolimits_{s \in \SjfMRT}  \!\!\sum\nolimits_{s' \in \SjfMRT} \! \Big[({\Txisj\!+\!\Txispj})^2
    \!-\!2(\Txisj^{(n)}\!-\!\Txispj^{(n)})(\Txisj-\Txispj) \!+\! (\Txisj^{(n)}\!-\!\Txispj^{(n)})^2 \Big]\!\Tdkjssp\nonumber\\
    &\hspace{6em}+ 
    \frac{\rho_n}{4}
    \sum\nolimits_{i\in\Kn}  \!\! 
    \sum\nolimits_{s\in \SniMRT}  \!\!
    \sum\nolimits_{s' \in \SniMRT}  \!\!
    \Big[(\bxisi+\bxispi)^2
    -
    2\Big(\bxisi^{(n)}-\bxispi^{(n)}\Big)(\bxisi-\bxispi)+\Big(\bxisi^{(n)}-\bxispi^{(n)}\Big)^2\Big]\balphakissp + 1,\label{eq:tFMRT}   
\end{align}
\hrulefill
\vspace{-1em}
\end{figure*}
where
\begin{align}
    \Tdkjssp = 
    \begin{cases}
        \Taks & \text{$k = j$, $s=s'$}, \\
        0 & \text{$k = j$, $s\neq s'$}, \\
        \Tbkjssp & \text{$k \neq j$}.
    \end{cases}
\end{align}

As a result, (\text{P}3.2) is transformed into the following optimization problem:
\begin{subequations}  \label{eq:final optimization problem for MRT}
\begin{alignat}{2}
    (\text{P}3.3): \hspace{1em}&\underset{\qq^\MRT}{\max}
     &~& w_n \bar{t}^{\MRT} + w_f \Tilde{t}^{\MRT}  \\
    &\hspace{0.5em}\text{s.t.} 
    && v_k^{(n)}{\Big(   {2\sum\nolimits_{s \in \SknMRT}\!\!\sqrt{\rho_n} {\bxisk} } \betakks - v_k^{(n)}\bMRT}\Big) \nonumber \\
    &&& \hspace{2em} \geq \Bar{F}_k^{\MRT,ub}(\BXiN, \BXiF),\quad\forall k\in \Kn, ~\label{eq:P42:vkN}\\    
    &&& u_k^{(n)}\!\Big({2\!\sum\nolimits_{s \in \SkfMRT} \!\!\sqrt{\rho_f}\Txisk\Tcks}\!-u_k^{(n)} \!\tMRT\Big) \nonumber \\
    &&& \hspace{2em}\geq \Tilde{F}_k^{\MRT,ub}(\BXiN, \BXiF),\quad \forall k\in \Kf,~\label{eq:P42:ukF}\\
    &&&
    \eqref{eq:t constraints},
    \eqref{eq:Near-field T constraint}, \eqref{eq:Far-field T constraint}, \eqref{eq:power:MRT},
    \eqref{eq:Power allocation coefficients constraints},
\end{alignat}    
\end{subequations}
where 
\begin{align*}
    & v_k^{(n)} =  \frac{\sqrt{\rho_n} \sum\nolimits_{s \in \SknMRT} {\bxisk^{(n)}} \betakks}{ (\bMRT)^{(n)} }, \\
    & u_k^{(n)} = \frac{ \sqrt{\rho_f}\sum\nolimits_{s \in \SkfMRT} \Txisk^{(n)} \Tcks}{ (\tMRT)^{(n)}}.
\end{align*}
In addition, if $\Tdkjssp$ is negative, a first-order Taylor expansion should be applied to $({\Txisj\!+\!\Txispj})^2$ in \eqref{eq:tFMRT}.
A similar procedure can also be applied to the other terms in \eqref{eq:tFMRT} and \eqref{eq:bFMRT}, where the details are omitted for brevity.
Note that \eqref{eq:final optimization problem for MRT} is a convex optimization problem. Modern convex solvers, such as MOSEK, are capable of efficiently solving \eqref{eq:final optimization problem for MRT} for a relatively large problem size, which is sufficient to characterize the performance of XL-MIMO systems. The SCA-based algorithm proposed to solve the problem \eqref{eq:linear objective function:MRT} can be summarized in \textbf{Algorithm 3}. 

\subsubsection{Power Control Design with CZF}
Recall that to eliminate interference with the CZF design, we set the power control coefficients $\betakCZF\triangleq\bar{\eta}_{1k}^{\CZF}=\ldots=\bar{\eta}_{Sk}^{\CZF}$ ($\TetakCZF\triangleq\Tilde{\eta}_{1k}^{\CZF}=\ldots=\Tilde{\eta}_{Sk}^{\CZF}$) for any NFUE (FFUE) $k$. Then, we have the following optimization problem:
\begin{subequations} 
  \begin{alignat}{2}
    (\text{P}4):\hspace{1em} &\underset{\qq^{\CZF}}{\max} &~&   w_n \bar{t}^{\CZF} + w_f \Tilde{t}^{\CZF}  \\
    &\hspace{0.5em} \text{s.t.}  
    && P_n \sum\nolimits_{k \in \Kn} \betakCZF \sum\nolimits_{s\in \Set} \bar{c}_{sk} \nonumber \label{eq:power control:CZF}\\ 
    &&& +P_f \sum\nolimits_{k \in \Kf} \TetakCZF \sum\nolimits_{s\in \Set} \Tilde{c}_{sk}  \leq P, \\
    &&& \bSINRk^{\CZF}(\BEtaN^{\CZF},\BEtaF^{\CZF}) \geq \bCZF, \forall k \in \Kn \label{eq:Near-field SINR constraint, ZF}, \\
    &&& \tSINRk^{\CZF}(\BEtaN^{\CZF},\BEtaF^{\CZF}) \geq \tCZF, \forall k \in \Kf \label{eq:Far-field SE constraint, ZF}, \\
    &&& \betakCZF \geq 0, \TetakCZF \geq 0, \forall k, \label{eq:eta:CZF} \\
    &&& \eqref{eq:t constraints},
    \eqref{eq:Near-field T constraint}, \eqref{eq:Far-field T constraint},
  \end{alignat}
\end{subequations}
where $\qq^{\CZF}\!\triangleq\!\{\BEtaN^{\CZF},\BEtaF^{\CZF},\bar{t}^{\CZF}, \Tilde{t}^{\CZF},\bCZF, \tCZF\}$. Since $\bCZF > 0$, we can rewrite \eqref{eq:Near-field SINR constraint, ZF}, for $\forall k\in\Kn$ as
\begin{align} \label{eq:reformulated:SINR:CZF}
    P_n\betakCZF \Big(\!\sum\nolimits_{s\in \SknZF}\bar{\qg}_{sk}^{H}\bar{\qw}_{sk}^{\CZF}\Big)^{\!2} \!\geq \!\bar{F}_k^{\CZF}(\BEtaN^{\CZF}, \BEtaF^{\CZF}) \bCZF,      
\end{align}
where 
\begin{align*}
    &\bar{F}_k^{\CZF}(\BEtaN^{\CZF}, \BEtaF^{\CZF}) \!=\!{P_n}\sum\nolimits_{i \in\Kn\setminus k}\!\!\betaiCZF\Big\vert \sum\nolimits_{s\in\SinZF} \!  \bar{\qg}_{sk}^{H} \bar{\qw}_{si}^{\CZF} \Big\vert^2 \nonumber \\
    &\hspace{2em} + P_f \sum\nolimits_{j\in\Kf} \TetajCZF r_{kj} + \sigma_k^2.  
\end{align*}
\begin{algorithm}[t]
\caption{Proposed SCA-based Algorithm for \eqref{eq:linear objective function:MRT}} \label{alg:SCA} 
\begin{algorithmic}[1]
    \State Initialization: set $n=1$, and generate initial points $(\BXiN^{(1)}, \BXiF^{(1)}, (\bMRT)^{(1)}, (\tMRT)^{(1)})$. Define a tolerance $\epsilon_1$ and the maximum number of iterations $I_3$.
    \State Iteration $n$: solve \eqref{eq:final optimization problem for MRT}. Let $(\BXiN^{\ast}, \BXiF^{\ast}, (\bar{t}^{\MRT})^{\ast}, (\Tilde{t}^{\MRT})^{\ast}, (\bMRT)^{\ast}, (\tMRT)^{\ast})$ be the solution.
    \State If the fractional increase of the objective function value is below a threshold  $\epsilon_1 > 0$ or $n=I_3$, stop. Otherwise, go to step 4.
    \State Set $n = n + 1$, update $(\BXiN^{(n)}, \BXiF^{(n)}, (\bMRT)^{(n)}, (\tMRT)^{(n)}) = (\BXiN^{\ast}, \BXiF^{\ast}, (\bMRT)^{\ast}, (\tMRT)^{\ast})$, go to step 2.
\end{algorithmic} 
\end{algorithm}
\setlength{\textfloatsep}{0.25cm}
We note that the LHS of \eqref{eq:reformulated:SINR:CZF} is linear with respect to $\BEtaN^{\CZF}$, while the RHS is not convex, e.g., the existence of the term $\betaiCZF \bCZF$. Using SCA, we derive the convex upper bound of $\bar{F}_k^{\CZF}(\BEtaN^{\CZF}, \BEtaF^{\CZF}) \bCZF$ as $\bar{U}_k^{\CZF,ub}(\BEtaN^{\CZF}, \BEtaF^{\CZF},\bCZF)$ given in~\eqref{eq:bF_k:CZF}.
\begin{figure*}[!b]
\normalsize
\hrulefill
\vspace*{5pt}
\begin{align}
   \bar{F}_k^{\CZF}(\BEtaN^{\CZF}, \BEtaF^{\CZF}) \bCZF  &\!\leq\! 
    \bar{U}_k^{\CZF,ub}(\BEtaN^{\CZF}, \BEtaF^{\CZF},\bCZF) \!\triangleq\!
    \frac{\rho_n}{4}\sum\nolimits_{i \in\Kn\setminus k}\Big[(\betaiCZF \!+\! \bCZF)^2
    \!-\!2\Big((\betaiCZF)^{(n)}\!-\!(\bCZF)^{(n)}\Big)(\betaiCZF\!-\!\bCZF) \nonumber \\
    &\hspace{-3.5em}+ \Big((\betaiCZF)^{(n)}\!-\!(\bCZF)^{(n)}\Big){^2} \Big] 
    \Big\vert\sum\nolimits_{s\in\SinZF}  \bar{\qg}_{sk}^{H} \bar{\qw}_{si}^{\CZF} \Big\vert^{2}\!+\! \frac{\rho_f}{4} \sum\nolimits_{j\in\Kf} \!\! \Big[(\TetajCZF \!+\! \bCZF)^2
    \!-\!2\Big((\TetajCZF)^{(n)} \!-\! (\bCZF)^{(n)}\Big) \nonumber \\
    &\hspace{-3.5em}\times(\TetajCZF\!-\!\bCZF)\!+ \!\Big((\TetajCZF)^{(n)}\!\!-\!\!(\bCZF)^{(n)}\Big){^2} \Big] r_{kj}\! +\! \bCZF,\label{eq:bF_k:CZF} 
\end{align}
\vspace{-2.5em}
\end{figure*}
Similarly, the FF SINR constraint \eqref{eq:Far-field SE constraint, ZF} can be expressed as
\begin{align} \label{eq:tF_constraint:CZF}
    P_f\TetakCZF \Big(\!\sum\nolimits_{s\in \SkfZF}\!\!\Ex\{\Tilde{\qg}_{sk}^H \Tilde{\qw}_{sk}^{\CZF} \}\!\Big)^{\!2} \!\!\geq\! \!\Tilde{F}_k^{\CZF}(\BEtaN^{\CZF}\!,\! \BEtaF^{\CZF}) \tCZF,
\end{align}
where 
\begin{align} \label{eq:tF:CZF}
    &\Tilde{F}_k^{\CZF}(\BEtaN^{\CZF}, \BEtaF^{\CZF}) \!=\!  P_f \!\sum\nolimits_{j\in\Kf\setminus k} \!\!\TetajCZF \Ex\Big\{\! \Big\vert\sum\nolimits_{s\in \SjfZF} \!\Tilde{\qg}_{sk}^H \Tilde{\qw}_{sj}^{\CZF} \Big\vert^2 \!\Big\} \nonumber \\
    &\hspace{2em} + P_f \TetakCZF \varepsilon_k 
    + P_n \sum\nolimits_{i\in\Kn} \!\!\betaiCZF t_{ki} + \sigma_k^2.
\end{align}

The RHS of \eqref{eq:tF_constraint:CZF} is non-convex due to the presence of multiplicative terms involving optimization parameters. To address this, we use SCA and derive a convex upper-bound for the RHS of \eqref{eq:tF_constraint:CZF} as $\Tilde{U}_k^{\CZF,ub}(\BEtaN^{\CZF}, \BEtaF^{\CZF}, \tCZF)$, shown in \eqref{eq:tF_k:CZF}, where $\Tilde{r}_{kj}$ is given by
\begin{figure*}[!b]
\normalsize
\vspace*{5pt}
\begin{align}
    \Tilde{F}_k^{\CZF}(\BEtaN^{\CZF}, \BEtaF^{\CZF}) \tCZF &\!\leq\!
    \Tilde{U}_k^{\CZF,ub}(\BEtaN^{\CZF}, \BEtaF^{\CZF}, \tCZF)\!\triangleq\! \frac{\rho_f}{4} \!\sum\nolimits_{j\in\Kf}\!\!\Big[(\TetajCZF \!+\! \tCZF)^{\!2}
    \!-\!2\Big((\TetajCZF)^{(n)}\!-\!(\!\tCZF\!)^{\!(n)}\Big)(\TetajCZF\!-\!\tCZF) \nonumber \\
    &\hspace{-3.5em}\!+\! \Big((\TetajCZF)^{(n)}\!-\!(\tCZF)^{\!(n)}\Big)^2 \Big] \Tilde{r}_{kj} 
     \!+\! \frac{\rho_n}{4} \sum\nolimits_{i\in\Kn} \Big[(\betaiCZF \!+\! \tCZF)^2
    -2\Big((\betaiCZF)^{(n)}-(\tCZF)^{(n)}\Big)(\betaiCZF\!-\!\bCZF)  \nonumber \\
    &\hspace{-3.5em}+ \Big((\betaiCZF)^{(n)}-(\tCZF)^{(n)}\Big)^2 \Big] t_{ki} \!+\! \tCZF  \label{eq:tF_k:CZF}, 
\end{align}
\end{figure*}
\vspace{-0.1em}
\begin{align}
    \Tilde{r}_{kj} = 
    \begin{cases}
        \varepsilon_k & \text{$k = j$}, \\
        \Ex\Big\{ \Big\vert\sum\nolimits_{s\in \SjfZF} \Tilde{\qg}_{sk}^H \Tilde{\qw}_{sj}^{\CZF} \Big\vert^2 \Big\} & \text{$k \neq j$}. 
    \end{cases}
\end{align}

As a result, the optimization problem (\text{P}4) is recast as the following convex optimization problem:  
\begin{subequations} \label{eq:updated:CVX:CZF} 
  \begin{alignat} {2}
    (\text{P}4.1) : \hspace{0.1em} &\underset{\qq^{\CZF}}{\max} &\hspace{0.7em}&  w_n \bar{t}^{\CZF} + w_f \Tilde{t}^{\CZF}  \\
    &\hspace{0.5em}\text{s.t.} && \rho_n \betakCZF \big(\sum\nolimits_{s\in \SknZF}\bar{\qg}_{sk}^{H}\bar{\qw}_{sk}^{\CZF}\big)^2 \nonumber \\
    &&&\hspace{0em}\geq \bar{U}_k^{\CZF,ub}(\BEtaN^{\CZF}, \BEtaF^{\CZF},\bCZF), \forall k \in \Kn,  \\
    &&& \rho_f\TetakCZF \big(\sum\nolimits_{s\in \SkfZF}\Ex\{\Tilde{\qg}_{sk}^H \Tilde{\qw}_{sk}^{\CZF} \}\big)^2 \nonumber \\
    &&&\hspace{0em}\geq \Tilde{U}_k^{\CZF,ub}(\BEtaN^{\CZF}, \BEtaF^{\CZF}, \tCZF), \forall k\in\Kf, \\
    &&&
    \eqref{eq:t constraints},
    \eqref{eq:Near-field T constraint},   \eqref{eq:Far-field T constraint},    
    \eqref{eq:power control:CZF},
    \eqref{eq:eta:CZF}.
\end{alignat}
\end{subequations}

The problem $(\text{P}4.1)$ can be effectively solved by modern convex solvers such as CVX \cite{cvx}.  We can adapt \textbf{Algorithm 3} with the appropriate modifications to solve \eqref{eq:updated:CVX:CZF}. The details are omitted for the sake of brevity.
\subsubsection{Power Control Design with LZF}
With LZF, the optimization problem $(\text{P}3)$ can be expressed as  
 \vspace{-0.4em}
\begin{subequations} ~\label{eq:opt:LZF}
  \begin{alignat}{2}
    (\text{P}5): \hspace{1em}&\underset{\qq^{\LZF}}{\max} &\hspace{1em}& w_n \bar{t}^{\LZF} + w_f \Tilde{t}^{\LZF}  \\
    &\hspace{0.5em}\text{s.t.} 
    &&\sum\nolimits_{s\in\Set} \Big(P_n \sum\nolimits_{k \in \US} \bpsi_{sk}^2 \bar{c}_{sk}^{\LZF}\nonumber\\
    &&&+ P_f \sum\nolimits_{k \in \IS} \tpsi_{sk}^2 \Tilde{c}_{sk}^{\LZF}\Big) \leq P,\label{eq:power:LZF}\\
    &&&\bSINRk^{\LZF}(\Bbpsi,\Btpsi) \geq \bLZF, \forall k \in \Kn \label{eq:Near-field SINR constraint, LZF}, \\
    &&& \tSINRk^{\LZF}(\Bbpsi,\Btpsi) \geq \tLZF, \forall k \in \Kf \label{eq:Far-field SINR constraint, LZF}, \\
    &&& \bpsi_{sk} \geq 0, \tpsi_{sk} \geq 0, \forall s, \forall k, \label{eq:xi constraints:LZF}\\
    &&& 
    \eqref{eq:t constraints},
    \eqref{eq:Near-field T constraint},   \eqref{eq:Far-field T constraint}.
\end{alignat}  
\end{subequations}
where $\qq^{\LZF}\!\triangleq\!\{\Bbpsi,\Btpsi,\bar{t}^{\LZF}, \Tilde{t}^{\LZF},\bLZF, \tLZF\}$, and $\Bbpsi \!\triangleq\! [\Bbpsi_1,\ldots,\Bbpsi_{K_n}]$, where $\Bbpsi_k\!\triangleq\! [\bpsi_{1k},\ldots,\bpsi_{Sk}]^T$ with $\bpsi_{sk} \!\triangleq\! \sqrt{\betask^{\LZF}}, \forall s, k\!\in\!\Kn$; $\Btpsi\!\triangleq\! [\Btpsi_1,\ldots,\Btpsi_{K_f}]$, where $\Btpsi_k \!\triangleq\! [\tpsi_{1k},\ldots,\tpsi_{Sk}]^T$ with $\tpsi_{sk} \!\triangleq\! \sqrt{\Tetask^{\LZF}}, \forall s, k\!\in\!\Kf$.
 Constraints~\eqref{eq:Near-field SINR constraint, LZF} and~\eqref{eq:Far-field SINR constraint, LZF}  can be recast as
 \vspace{-0.4em}
 \begin{subequations}
     \begin{align}
          \frac{P_n \Big(\sum_{s\in \SknLZF} \bpsisk \Big)^2 } {\bLZF} 
          &\ge \bar{F}_k^{\LZF}(\Bbpsi,\Btpsi),\forall k \in \Kn,~\label{eq:SINR:LZF}\\
          \frac{P_f \Big(\sum\nolimits_{s\in \SkfLZF}\tpsisk\Big)^2 } {\tLZF}
          &\ge\tilde{F}_k^{\LZF}(\Bbpsi,\Btpsi),\forall k \in \Kf,~\label{eq:SINR:LZF2}
     \end{align}
 \end{subequations}
where $\bar{F}_k^{\LZF}(\Bbpsi,\Btpsi)$ and $\tilde{F}_k^{\LZF}(\Bbpsi,\Btpsi)$ are given by
 \vspace{-0.4em}
 \begin{subequations}
\begin{align}
    &\bar{F}_k^{\LZF}(\Bbpsi,\Btpsi) \triangleq 
    P_n\sum\nolimits_{i \in\Kn\setminus k}\Big\vert\sum\nolimits_{s\in\SinLZF} \!\!\bpsisi \bar{\qg}_{sk}^{H} \bar{\qw}_{si}^{\LZF} \Big\vert^2 \nonumber \\
    &\hspace{1em}+ P_f \sum\nolimits_{j\in\Kf}\!\!
    \sum\nolimits_{s \in \SjfLZF} \!\! \sum\nolimits_{s \in \SjfLZF} \!\!  \tpsisj\tpsispj\Ttaukjsspi + \sigma_k^2, \\
    &\Tilde{F}_k^{\LZF}(\Bbpsi,\Btpsi) \triangleq P_f \sum\nolimits_{j\in\Kf\setminus k}\!\! \Ex\Big\{ \Big\vert\sum\nolimits_{s\in \SjfLZF}\!\!\tpsisj\Tilde{\qg}_{sk}^H \Tilde{\qw}_{sj}^{\LZF} \Big\vert^2 \Big\} \nonumber \\
    &\hspace{1em}+ P_n \sum\nolimits_{i\in\Kn}\!\!\sum\nolimits_{s \in \SniLZF} \!\!
    \sum\nolimits_{s' \in \SniLZF} \!\!
    \bpsisi\bpsispi\btaukisspi + \sigma_k^2.   
\end{align}
 \end{subequations}
The two constraints~\eqref{eq:SINR:LZF} and~\eqref{eq:SINR:LZF2} are non-convex. To address this issue, we apply SCA and replace $\bar{F}_k^{\LZF}(\Bbpsi,\Btpsi)$ and $\tilde{F}_k^{\LZF}(\Bbpsi,\Btpsi)$ by their convex upper bounds $\bar{F}_k^{\LZF,ub}(\Bbpsi,\Btpsi)$, $\Tilde{F}_k^{\LZF,ub}(\Bbpsi,\Btpsi)$ given in \eqref{eq:bF_k:LZF} and \eqref{eq:tF_k:LZF}, respectively, at the top of the next page.
\begin{figure*}
\normalsize
\begin{subequations} 
\vspace*{5pt} 
\begin{align}
    &\bar{F}_k^{\LZF}(\Bbpsi,\Btpsi) \leq \!\bar{F}_k^{\LZF,ub}(\Bbpsi,\Btpsi) \triangleq 
    \frac{\rho_n}{4}\sum\nolimits_{i \in\Kn\setminus k}\bigg(  \sum\nolimits_{s \in \SniLZF}
    4{\bpsisi^2}  \vert\bar{\qg}_{sk}^{H} \bar{\qw}_{si}^{\LZF} \vert^2
    + 2\sum\nolimits_{s \in \SniLZF}
    \sum\nolimits_{s' \in\SniLZF,s'>s}\!\!\Big[(\bpsisi+\bpsispi)^2 \nonumber \\
    &\hspace{2em}-2\Big(\bpsisi^{(n)}-\bpsispi^{(n)}\Big)(\bpsisi-\bpsispi) + \Big(\bpsisi^{(n)}-\bpsispi^{(n)}\Big)^2\Big] \Re\{\bar{\qg}_{sk}^{H} \bar{\qw}_{si}^{\LZF}(\bar{\qg}_{s'k}^{T} \bar{\qw}_{s'i}^{\LZF})^{\ast}\} \bigg) + \frac{\rho_f}{4}
    \nonumber \\
    &\hspace{2em}    
    \times \sum\nolimits_{j\in\Kf} \!\!
    \sum\nolimits_{s \in \SjfLZF} \!\!
    \sum\nolimits_{s' \in \SjfLZF}  \!\!\Big[ (\tpsisj +\tpsispj)^2 - 2 
    \Big(\tpsisj^{(n)} -\tpsispj^{(n)}\Big)(\tpsisj -\tpsispj)+\Big(\tpsisj^{(n)} -\tpsispj^{(n)}\Big)^2\Big]  \Ttaukjsspi + 1,  \label{eq:bF_k:LZF} \\
    &\Tilde{F}_k^{\LZF}(\Bbpsi,\Btpsi) \leq \!\Tilde{F}_k^{\LZF,ub}(\Bbpsi,\Btpsi) \triangleq \frac{\rho_f}{4} \sum\nolimits_{j\in\Kf\setminus k}\!\!
    \bigg(  \sum\nolimits_{s \in \SjfLZF}
    4{\tpsisj^2} \Ex\{\vert \Tilde{\qg}_{sk}^H \Tilde{\qw}_{sj}^{\LZF} \vert^2\}
    + 2\sum\nolimits_{s \in \SjfLZF}\!\!
    \sum\nolimits_{s' \in\SjfLZF,s'>s}\!\!\Big[ (\tpsisj \!+\!\tpsispj)^2 \nonumber \\
    &\hspace{2em}- 2 
    \Big(\tpsisj^{(n)} -\tpsispj^{(n)}\Big)(\tpsisj -\tpsispj)+\Big(\tpsisj^{(n)} -\tpsispj^{(n)}\Big)^2\Big] \Re\big\{\Ex\{\Tilde{\qg}_{sk}^H \Tilde{\qw}_{sj}^{\LZF}(\Tilde{\qg}_{s'k}^H \Tilde{\qw}_{s'j}^{\LZF})^{\ast}\}\big\} \bigg) + \frac{\rho_n}{4} 
    \nonumber \\
     &\hspace{2em} \times\sum\nolimits_{i\in\Kn}\!\!\sum\nolimits_{s \in \SniLZF}\!\! \sum\nolimits_{s' \in \SniLZF}\!\! \Big[(\bpsisi+\bpsispi)^2
    -
    2\Big(\bpsisi^{(n)}-\bpsispi^{(n)}\Big)(\bpsisi-\bpsispi)+\Big(\bpsisi^{(n)}-\bpsispi^{(n)}\Big)^2\Big]
    \btaukisspi + 1, \label{eq:tF_k:LZF}
\end{align}
\end{subequations}
\hrulefill
\end{figure*}
As a result, the final convex optimization problem for LZF is given by
\begin{subequations} \label{eq:final convex expression:LZF}
  \begin{alignat}{2}
    (\text{P}5.1):\hspace{0.5em} &\underset{\qq^{\LZF}}{\max} &\hspace{1em}& w_n \bar{t}^{\LZF} + w_f \Tilde{t}^{\LZF}  \\
    &\hspace{0.5em}\text{s.t.} 
    &&(u_k^{\LZF})^{(n)}\Big(2\!\sum\nolimits_{s \in \SkfLZF}\!\! \sqrt{\rho_f}\tpsisk \!-\!(u_k^{\LZF})^{(n)}  \nonumber \\
    &&& \times\tLZF\Big)  \geq \Tilde{F}_k^{\LZF,ub}(\Bbpsi, \Btpsi), \forall k\in \Kf, \\
    &&&(v_k^{\LZF})^{(n)}{\Big( 2\sum\nolimits_{s \in \SknLZF}\!\!\sqrt{\rho_n} \bpsisk - (v_k^{\LZF})^{(n)}\bLZF}\Big) \nonumber \\
    &&&\geq \bar{F}_k^{\LZF,ub}(\Bbpsi, \Btpsi),\forall k\in \Kn, \\     
    &&& 
    \eqref{eq:t constraints},
    \eqref{eq:Near-field T constraint}, \eqref{eq:Far-field T constraint}, \eqref{eq:power:LZF},
    \eqref{eq:xi constraints:LZF},
\end{alignat}  
\end{subequations}
where
\begin{subequations} 
  \begin{align*} 
    & (u_k^{\LZF})^{(n)} = \frac{ \sqrt{\rho_f}\sum\nolimits_{s \in \SkfLZF} (\Txisk^{\LZF})^{(n)} }{ (\tLZF)^{(n)}},\\
    & (v_k^{\LZF})^{(n)} = \frac{\sqrt{\rho_n} \sum\nolimits_{s \in \SknLZF} (\bxisk^{\LZF})^{(n)} }{ (\bLZF)^{(n)} }.
\end{align*} 
\end{subequations}
The problem $(\text{P}5.1)$ can be effectively solved by CVX \cite{cvx}. We can adapt \textbf{Algorithm 3} with appropriate modifications to solve \eqref{eq:final convex expression:LZF}. The details are omitted for brevity.
\subsection{Complexity and Convergence Analysis} 
\textbf{Computational complexity and convergence analysis of \textbf{Algorithm 2}: } Since the main complexity comes from solving $(\text{P}2.3)$ in the inner loop, the complexity of solving this problem is $\mathcal{O}(N^{3.5})$ if the solver is based on the interior point method \cite{Luo:MSP:2010}. Then, the overall complexity of \textbf{Algorithm 2} is $\mathcal{O}(I_{out}I_{in}N^{3.5})$, where $I_{out}$, $I_{in}$ represent the number of outer and inner iterations for convergence, respectively. 

We now study the convergence of \textbf{Algorithm 2}, where we adopt SCA to solve problem $(\text{P}2)$ to deal with the non-concave penalty term $\varrho \| \qV\|_2$ in the objective function. Let $f_{org}(t, \qV) = t - \varrho(\| \qV\|_{\ast} - \| \qV\|_2)$ denote the original objective. In iteration $(n)$, we linearize the convex term $\| \qV\|_2$ as $ \bar{\qV}^{(n)}$, and obtain the lower bound of the original objective as $f_{lb}^{(n)}(t, \qV) = t - \varrho(\| \qV\|_{\ast} - \bar{\qV}^{(n)} )$. Since $f_{lb}^{(n)}(t, \qV)$ is a global lower bound, we have 
\begin{align} \label{eq:lower bound}
    f_{org}(t, \qV) \geq f_{lb}^{(n)}(t, \qV), \forall (t, \qV), 
\end{align}
and 
\begin{align} \label{eq:lower bound equality}
    f_{org}\big(t^{(n)},\qV^{(n)}\big) = f_{lb}^{(n)}\big(t^{(n)},\qV^{(n)}\big), 
\end{align}
holds at the current point $(t^{(n)}, \qV^{(n)})$ since $f_{lb}^{(n)}(t, \qV)$ is built based on point $(t^{(n)}, \qV^{(n)})$. 

\noindent In addition, we maximize $f_{lb}^{(n)}(t, \qV)$, which implies that 
\begin{align} \label{eq:optimization}
    f_{lb}^{(n)}\big(t^{(n+1)},\qV^{(n+1)}\big) \geq f_{lb}^{(n)}\big(t^{(n)},\qV^{(n)}\big),     
\end{align}
Combing \eqref{eq:lower bound}, \eqref{eq:lower bound equality} and \eqref{eq:optimization}, we obtain the following inequality chain~\cite{Tam:TWC:2017}:
    \begin{align}\label{eq:inequality chain}
        f_{org}\big(t^{(n+1)},\qV^{(n+1)}\big) &\geq f_{lb}^{(n)}\big(t^{(n+1)},\qV^{(n+1)}\big)  \nonumber \\ 
        & \geq f_{lb}^{(n)}\big(t^{(n)},\qV^{(n)}\big)  \nonumber \\ 
        & = f_{org}\big(t^{(n)},\qV^{(n)}\big).
    \end{align}
The first inequality holds because $f_{lb}^{(n)}(t,\qV)$ is a global lower bound on $f_{org}\big(t,\qV\big)$ for all $\big(t,\qV\big)$, by construction of the linear approximation of the convex term $\|\qV\|_2$. The second inequality follows from the fact that we maximize $f_{lb}^{(n)}\big(t,\qV\big)$ in each iteration, and the final equality holds since the approximation is exact at iteration $(n)$. Since $t \leq \tr(\qR_k \qV)$ is bounded above, so the objective of \textbf{Algorithm 2} is non-decreasing and guarantees to converge to a limit.

\begin{table}[t]
	\centering
	\caption{\label{tabel:complexity} Summary of optimization problems' parameters for computational complexity analysis }	
    \vspace{-0.5em}
	\small
\begin{tabular}{|p{1.6cm}|p{2.8cm}|p{2.9cm}|}
	\hline
        \centering\textbf{Problem} 
        &\centering {$A_v^i$}
        &\centering  $A_l^i$
        \cr            
        \hline

        \hspace{0em}\textbf{P3.3} (MRT)    
        & \centering  \small{$|\bar{\Set}^{\MRT}| \! K_n \!+ \!|\Tilde{\Set}^{\MRT}| \!K_f \!+\! 4$ }\normalsize
        & \centering  \small{$|\bar{\Set}^{\MRT}| K_n \!+ \!|\Tilde{\Set}^{\MRT}| K_f \!+\! 2$ }\normalsize
        \cr        
        \hline

       \hspace{0em}\textbf{P4.1} (CZF)        
        &\centering $K+4$
        &\centering $K\!+\!3$
        \cr      
        \hline
           
        \hspace{0em}\textbf{P5.1} (LZF)       
        & \centering  \small{$|\bar{\Set}^{\LZF}| \!K_n \!+ \!|\Tilde{\Set}^{\LZF}| \!K_f \!+\! 4$  }\normalsize
        & \centering  \small{$|\bar{\Set}^{\LZF}|\! K_n \!+ \!|\Tilde{\Set}^{\LZF}| \!K_f \!+\! 2$ }\normalsize 
        \cr
        \hline   
\end{tabular}
\end{table}

\textbf{Computational complexity and convergence analysis of \textbf{Algorithm 3}:} The relaxed optimization problems (\text{P3.3}), (\text{P4.1}) and (\text{P5.1}) each entail a computational complexity of $\mathcal{O}(\sqrt{A_l^i + A_q^i + 3A_{\text{exp}}^i}(A_v^i + A_l^i + A_q^i + A_{\text{exp}}^i)(A_v^i)^2)$~\cite{Tam:TWC:2017,Mohammadi:JSAC:2023}, where $i\in\{\MRT,\CZF,\LZF\}$. Here, $A_v^i$ represents the number of real-valued scalar variables, while $A_l^i$, $A_q^i$ and $A_{\text{exp}}^i\!=\!2$ denote the number of linear, quadratic and exponential cone constraints, respectively. For tractability, we consider an ideal scenario in which the VR selected by users within each group includes the same number of subarrays, e.g. $|\SkfMRT| \!=\! |\Tilde{\Set}^{\MRT}|$ for FFUEs with MRT. Table~\ref{tabel:complexity} summarizes the values of $A_v^i$ and $A_l^i$ for various optimization problems, while $A_q^i\!=\!K+1$ for MRT and LZF and $A_q^i\!=\!K$ for CZF. 

Let $(t^i)^{(n)} =  w_n (\bar{t} ^{i})^{(n)} + w_f (\Tilde{t} ^{i})^{(n)}$ denote the optimized objective and $(\qq^i)^{(n)}$ be the set of optimized solutions at iteration $(n)$. Due to the linear approximation in \eqref{eq:lower bound of x square divides y} and \eqref{eq:xy:ub}, the updating rule in \textbf{Algorithm 3} ensures that $(\qq^i)^{(n)}$ is feasible for the optimization problem at iteration $(n+1)$. More specifically, at iteration $(n)$, we use $(\qq^i)^{(n)}$ as a feasible point to solve the optimization problem and obtain $(\qq^i)^{(n+1)}$. Since we maximize the sum of weighted SE, the inequality $(t^i)^{(n+1)} \geq (t^i)^{(n)}$ holds, which implies that \textbf{Algorithm 3} yields a non-decreasing sequence. We also note that due to the total power constraint \eqref{eq:power control requirements}, the objective of \eqref{eq:Original power allocation optimization problem} is bounded from above. Thus, \textbf{Algorithm 3} guarantees the convergence of the objective function in $(\text{P}3)$ for different precoding schemes.

\section{Simulation Results} ~\label{Sec:Simulation results}
\vspace{-2em}
\subsection{Simulation Setup}
We assume that the XL-MIMO array is divided into $S$ subarrays, each consisting of $M^{\ast} = 50$ elements. Unless otherwise stated, we set $M_y = 10$ and increase $M_x$ linearly. The RIS is equipped with $N = 10 \times 10$ elements. The distance between the XL-MIMO and the RIS is $d_{\text{MR}} = 100$ m. We consider $K_f=5$ FFUEs located in a semicircle and the radius is $d_{\text{RU}} = 20$ m. There are $K_n=5$ NFUEs randomly distributed in an area with $x\in[0,20]$, $y\in[0,20]$ and $z\in[2,20]$, the locations of NFUEs with respect to XL-MIMO is illustrated in Fig. \ref{fig:location}. We assume that the Ricean factor is $\iota = 2$, while the path-loss coefficients are $\varsigma_k = 10^{-3} d_{\text{RU}}^{-2}$ and $\zeta = 10^{-3} d_{\text{MR}}^{-2.5}$, respectively~\cite{Zhi:2023:TIT}. We set the maximum number of iterations $I_1 = I_2 = 30$ for \textbf{Algorithm 2} and $I_3 = 30$ for \textbf{Algorithm 3}. The thresholds are $\epsilon = 10^{-4}$, $\epsilon_1 = 10^{-3}$ and we set the initial penalty factor $\varrho = 10^{-6}$, the scaling factor $l = 10$ for \textbf{Algorithm 2}. The other parameters are set as follows: the carrier frequency of the system is $5$ GHz, the transmit power is $P=30$ dBm, and the noise power is $-104$ dBm~\cite{Zhi:2023:TIT}.

To evaluate the performance of the proposed design,  namely optimized phase shifts design and power control (\textbf{OPS-OPC}),  we consider a benchmark scheme, random phase shifts and equal power control (\textbf{RPS-EPC}) design, where the transmit power $P$ is equally divided between all users with random phase shift matrix at RIS. We also include the heuristic phase shifts proposed in \cite{Zhi:2022:JSAC} and equal power control (\textbf{HPS-EPC}), as a benchmark for our proposed design. The main idea of the heuristic phase shifts design is to align the phase shifts of the RIS with the phase of the RIS-related channels of a specific FFUE $k$, that is, $\qh_k^H\bTeta\qb_N(\varphi^a,\varphi^e) = N$. To this end, we set the entries of $\bTeta$ as $\theta_n = -\angle\big\{[\qh_k^H]_n [\qb_N(\varphi^a,\varphi^e)]_n\big\}$, $\forall n$ \cite{Zhi:2022:JSAC}. To have a fair comparison between all precoding designs, we set $\bSEkth = \min\{\bSEkCZF, \bSEkLZF, \bSEkMRT\}, \forall k\!\!\in\!\!\Kn, \tSEkth =  \min\{\tSEkCZF, \tSEkLZF, \tSEkMRT\}, \forall k\!\!\in\!\!\Kf$ according to the \textbf{RPS-EPC} design.

\begin{figure}[t]
    \centering
    \includegraphics[width=75mm]{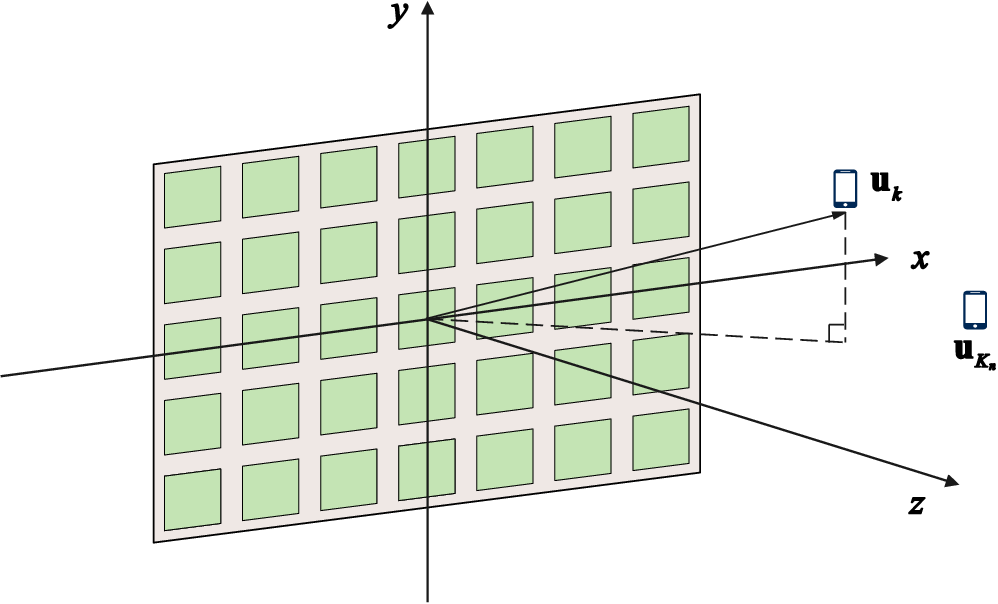}
    \caption{Illustration of the locations of NFUEs with respect to XL-MIMO.}
     \label{fig:location}
\end{figure}

\begin{figure*}[t]
\hspace{-1em}
    \centering
    \begin{subfigure}[t]{0.33\textwidth}
        \centering
        \includegraphics[trim=0 0cm 0cm 0cm,clip,width=\textwidth]{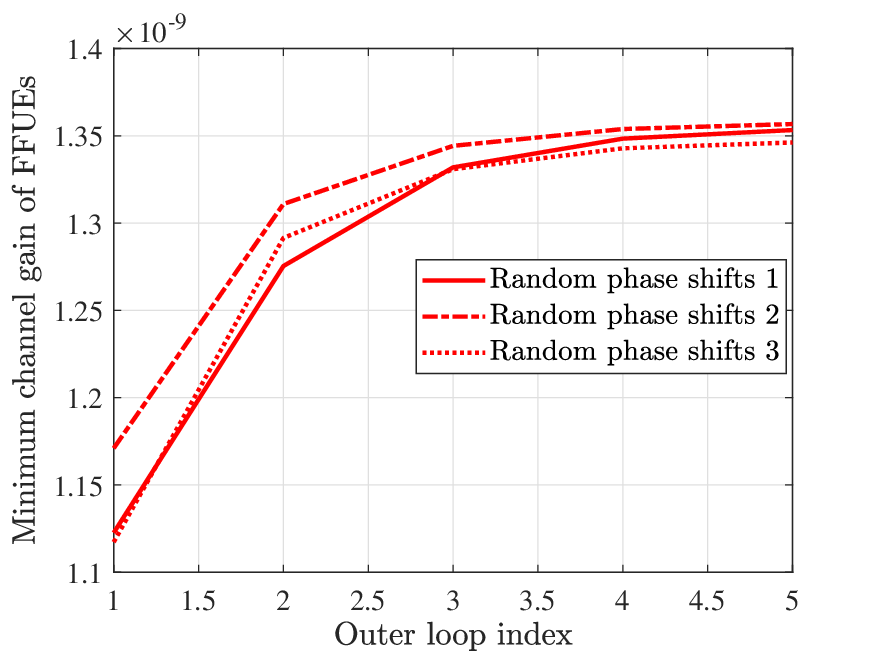}
        \caption{\textbf{Algorithm 2} with MRT.}
        \label{fig:converge:MRT}
    \end{subfigure}
    \hfill
    \begin{subfigure}[t]{0.33\textwidth}
        \centering
        \includegraphics[trim=0 0cm 0cm 0cm,clip,width=\textwidth]{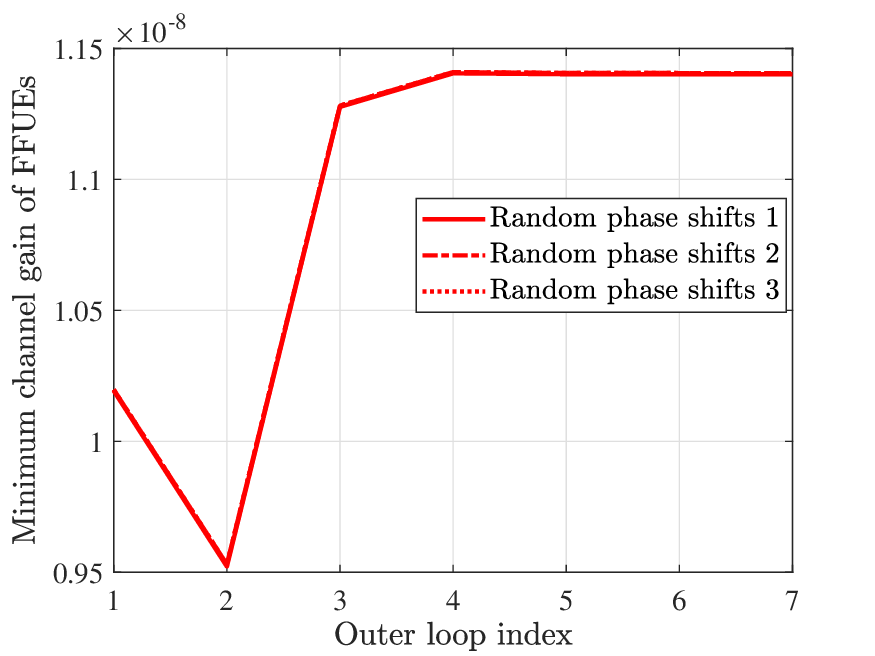}
        \caption{\textbf{Algorithm 2} with CZF.}
        \label{fig:converge:CZF}
    \end{subfigure}
    \hfill
    \begin{subfigure}[t]{0.33\textwidth}
        \centering
        \includegraphics[trim=0 0cm 0cm 0cm,clip,width=\textwidth]{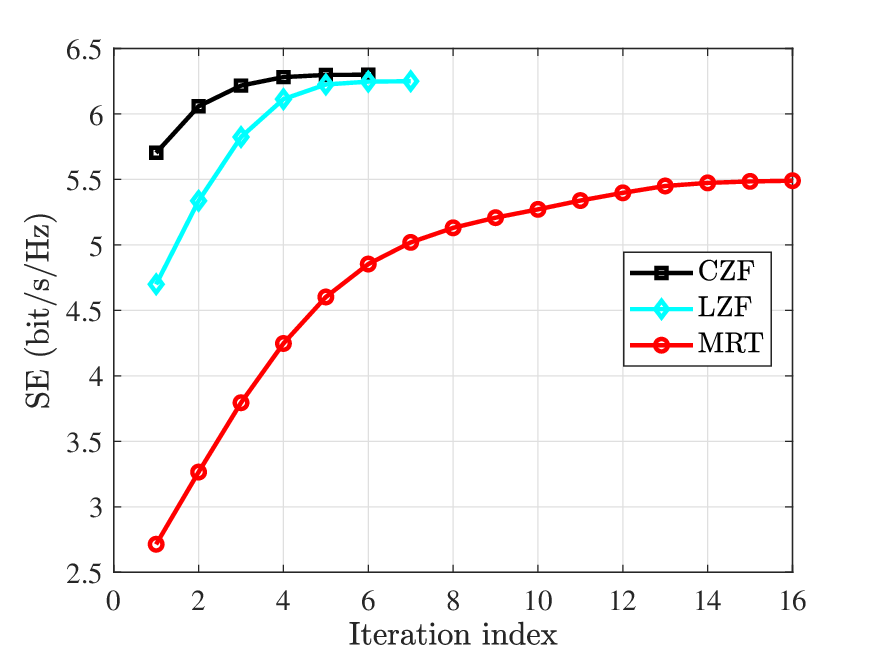}
        \caption{\textbf{Algorithm 3} with different precodings.}
        \label{fig:converge:Power control}
    \end{subfigure}
     \vspace{-0.2em}
    \caption{Convergence behavior of proposed Algorithms.}
    \label{fig:Convergence}
\vspace{-0.2em}
\end{figure*}

\begin{figure*}[t]
\hspace{-1em}
\vspace{-0.5cm}
    \centering
    \begin{minipage}[t]{0.33\textwidth}
        \centering
        \includegraphics[trim=0 0cm 0cm 0cm,clip,width=\textwidth]{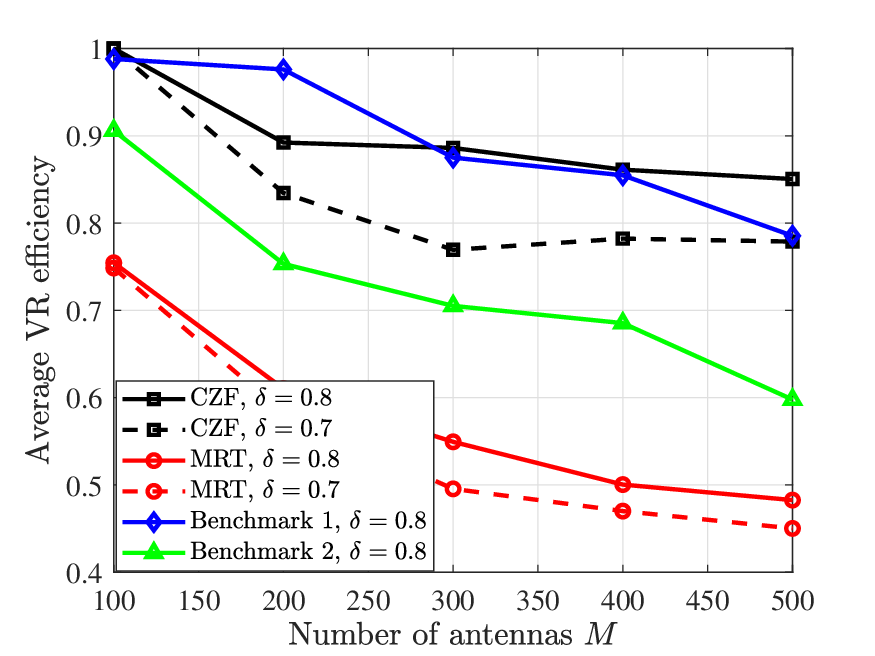}
        \caption{Average VR efficiency versus $M$ ($w_n \!=\!w_f \!=\! 0.5$).}
    \label{fig:VR selection}
    \end{minipage}
    \hfill
    \begin{minipage}[t]{0.33\textwidth}
        \centering
        \includegraphics[trim=0 0cm 0cm 0cm,clip,width=\textwidth]{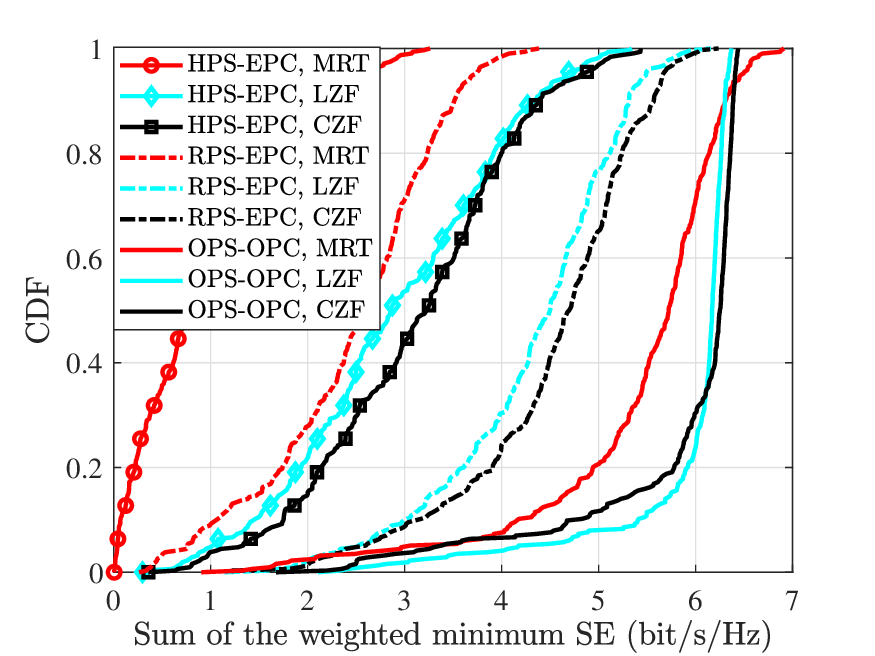}
        \caption{Comparison between precoding schemes ($M\!=\!400$, $w_n\!=\! w_f\!=\!0.5$).}
        \label{fig:Sum:MinSEs:Objective:Equal}
    \end{minipage}
    \hfill
    \begin{minipage}[t]{0.33\textwidth}
        \centering
        \includegraphics[trim=0 0cm 0cm 0cm,clip,width=\textwidth]{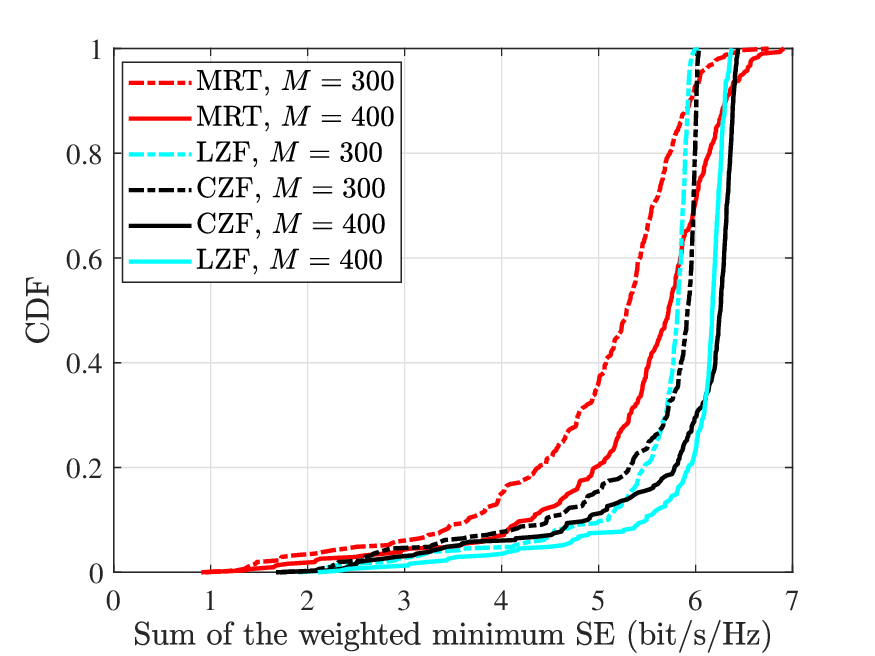}
        \caption{CDF of the objective function for different antenna numbers ($w_n \!=\! w_f \!=\!0.5$). }
        \label{fig:Effect:Antenna Number}
    \end{minipage}
\end{figure*}

\subsection{Results and Discussion}
\subsubsection{Convergence of Algorithms 2 and 3}
We study the convergence behavior of the proposed algorithm with $M=400$ and random initial values for the phase shifts, that is, $\theta_n \in [0, 2\pi), \forall n$. The convergence analysis consists of two parts: the convergence of the phase shifts design, shown in Figs. 3(a) and 3(b), and the convergence of the power control algorithm, illustrated in Fig. 3(c). As observed in Figs. 3(a) and 3(b), \textbf{Algorithm 2} converges to a stationary point after only a few iterations—specifically, after $5$ iterations with MRT and $7$ iterations with CZF. We also observe that with different initial points, the optimized minimum channel gain values are very close, confirming that the algorithm converges irrespective of the initial value. Note that the small transition in Fig. 3(b) with CZF is because at first the penalty factor $\rho$ is relatively small, $\qV$ is not a rank-one matrix. Regarding the convergence of \textbf{Algorithm 3}, we can see that the objective values of increase monotonically, as shown in Fig. 3(c), which aligns with our convergence behavior in Section \ref{Sec:Solution_Optimization}. We also observe that MRT requires more iterations to converge—$16$ iterations—compared to CZF and LZF, which converge in $6$ and $7$ iterations, respectively. This difference arises because CZF and LZF effectively eliminate intra-group interference, enabling faster convergence to a stationary point.
\subsubsection{Performance of Proposed VR Selection}
Figure \ref{fig:VR selection} shows the average of VR efficiency, which is defined as $$\bar{\omega}^{i} = \frac{1}{KS}\Big(\sum\nolimits_{k\in \Kn}\trace(\bar{\boldsymbol{\mathcal{D}}}_k^{i}) + \sum\nolimits_{k\in \Kf}\trace(\Tilde{\boldsymbol{\mathcal{D}}}_k^{i})\Big), $$ where $i\in\{\CZF,\MRT\}$ versus the number of antennas $M$. This parameter represents the ratio of the average number of antennas used in the VR-based design relative to the total array size. We consider two VR selection ratios, $\delta=0.7$ and $\delta=0.8$. To enable a fair comparison, we define two benchmarks: \textbf{Benchmark 1} corresponds to the VR selection scheme in \cite{Zhi:JSAC:2024}, and \textbf{Benchmark 2} refers to our proposed scheme applied only to NFUEs. As the number of antennas increases, the proportion used in VR-based design decreases since the received power is limited to a portion of the array, making VR-based design essential for large arrays. CZF requires more antennas than MRT, e.g., for $M=500$ and $\delta=0.7$, $78\%$ of the antennas are used with CZF, while only $45\%$ with MRT. This is because CZF suppresses interference, requiring more antennas for better performance. Lowering $\delta$ from $0.8$ to $0.7$ with CZF reduces $\bar{\omega}^{\CZF}$ from $0.85$ to $0.78$, as fewer antennas are selected to meet the SINR requirements. Similar trends are observed for MRT. Another observation is that \textbf{Benchmark 2} selects fewer antennas compared to \textbf{Benchmark 1}. This is primarily because we use SINR as the VR selection criterion, which tends to be smaller than the SNR used in \textbf{Benchmark 1}. Note that LZF uses the same VR obtained for CZF, thus, the results for LZF are not shown in this figure.
\subsubsection{Comparison of Different Precoding Schemes}
Figure \ref{fig:Sum:MinSEs:Objective:Equal} presents the cumulative distribution function (CDF) of the weighted minimum SE sum for NFUEs and FFUEs using CZF, LZF, and MRT precoding. The proposed \textbf{OPS-OPC} scheme significantly outperforms \textbf{RPS-EPC} across all precoding methods. With CZF, \textbf{OPS-OPC} achieves a median SE of $6.2$ bit/s/Hz, a $31.9\%$ improvement over \textbf{RPS-EPC} ($4.7$ bit/s/Hz). For LZF, the gain is $37.8\%$, while for MRT, the median SE increases by $119.2\%$ (from $2.6$ to $5.7$ bit/s/Hz). Notably, LZF performs comparably to CZF in both schemes, making it an efficient alternative with lower computational complexity, especially for large arrays. Interestingly, the performance of \textbf{HPS-EPC} is inferior to that of \textbf{RPS-EPC}. The \textbf{HPS-EPC} design enhances the system's sum-SE compared to \textbf{RPS-EPC} under MRT. However, under CZF and LZF precoding, it fails to improve both the sum-SE and the weighted minimum SE across NFUEs and FFUEs. (Sum-SE results are omitted from the paper due to space limitations.) Specifically, under MRT, configuring all RIS elements to coherently enhance the signal for a particular FFUE $k$ maximizes the reflected power gain for that user. However, this RIS phase configuration causes incoherent—or even destructive—combining at the RIS for other users. Therefore, it degrades the minimum SE across the FFUEs compared to the \textbf{RPS-EPC}. In the case of CZF and LZF, customizing the RIS to favor FFUE $k$ results in highly correlated effective channels among the FFUEs. This correlation undermines the ability of CZF and LZF to suppress inter-group interference, leading to a notable degradation in SE—particularly for FFUE $k$, as interference from other FFUEs cannot be effectively mitigated. As a result, the weighted minimum SE across NFUEs and FFUEs under \textbf{HPS-EPC} becomes inferior to that achieved with \textbf{RPS-EPC}.   

\subsubsection{The Effect of Number of Antennas}
In Fig. \ref{fig:Effect:Antenna Number}, we study the CDF of the sum of the weighted minimum SE of the XL-MIMO systems for two different number of antennas, i.e. $M \!=\! 300$, $M \!=\! 400$ and with $w_n \!=\!w_f \!=\! 0.5$. We can see that by increasing the number of antennas from $M\!=\!300$ to $M\!=\!400$, the median performance of CZF, LZF and MRT can be improved by $5.76\%$, $6.1\%$ and $8.1\%$, respectively. 

\vspace{-0.1em}
\subsubsection{Impact of Priority Factors}

\begin{figure*}[t]
\hspace{-1em}
    \centering
    \begin{subfigure}[b]{0.33\textwidth}
        \centering
        \includegraphics[trim=0 0cm 0cm 0cm,clip,width=\textwidth]{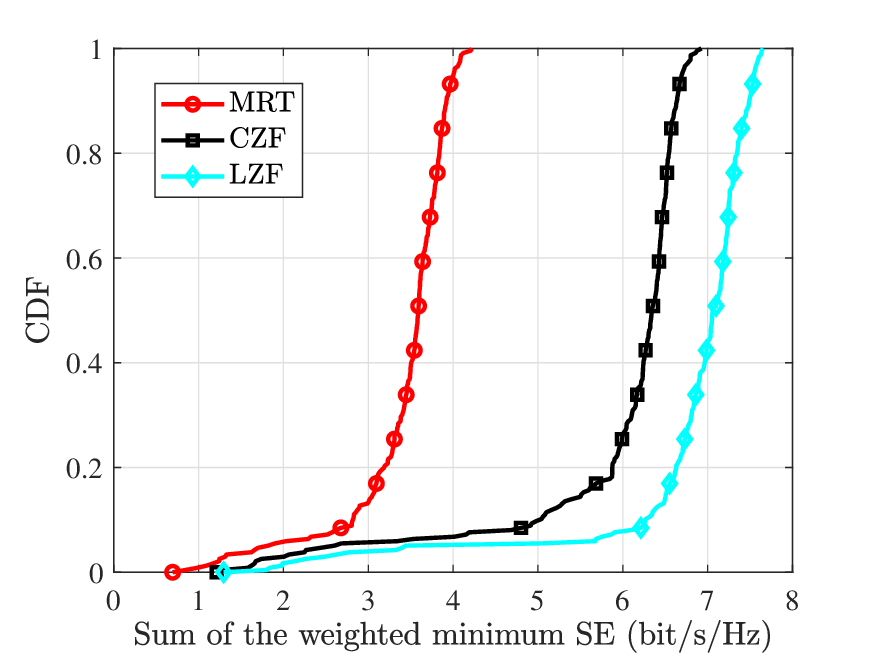}
        \caption{Objective function}
        \label{fig:Sum:MinSEs:Objective:FF}
    \end{subfigure} %
    \hfill
    \begin{subfigure}[b]{0.33\textwidth}
        \centering
        \includegraphics[trim=0 0cm 0cm 0cm,clip,width=\textwidth]{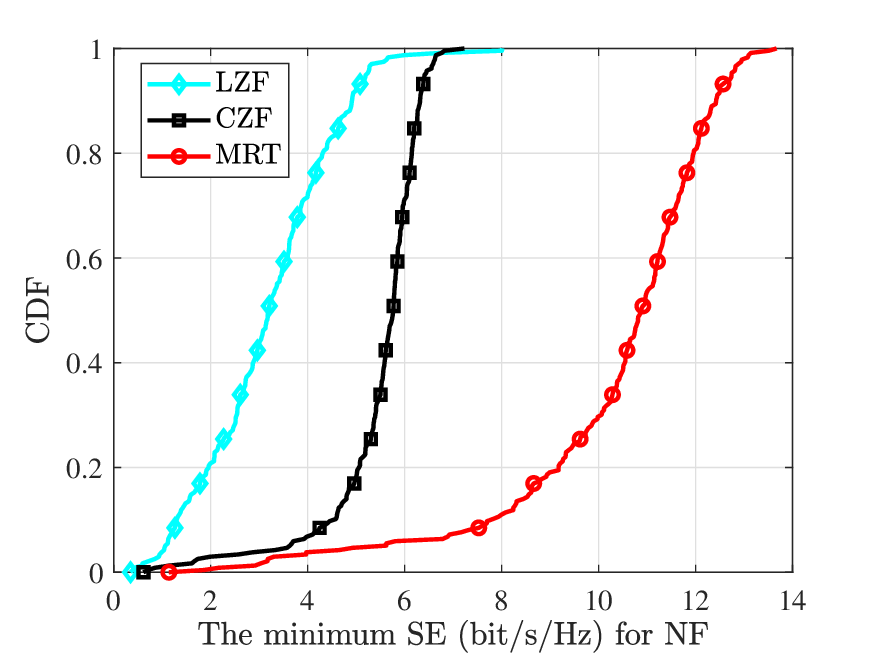}
        \caption{CDF of the minimum SE for NF}
        \label{fig:Sum:MinSE_N:FF}
    \end{subfigure} 
    \hfill
    \begin{subfigure}[b]{0.33\textwidth}
        \centering
        \includegraphics[trim=0 0cm 0cm 0cm,clip,width=\textwidth]{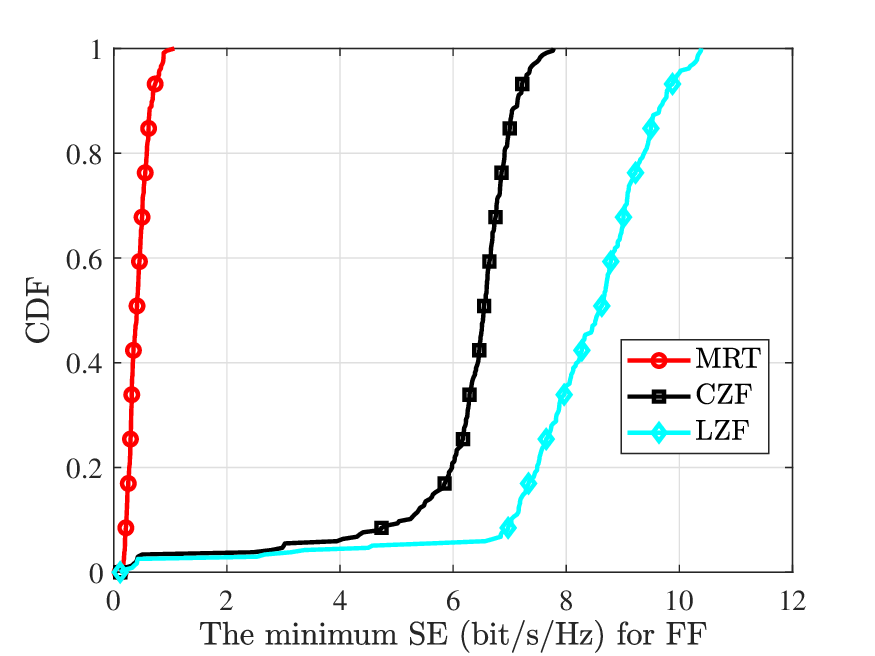}
        \caption{CDF of the minimum SE for FF}
        \label{fig:Sum:MinSE_F:FF}
    \end{subfigure}    
     \vspace{-0.2em}
    \caption{CDFs of the SE for different precoding schemes and for \textbf{OPS-OPC} case ($w_n = 0.3$, $w_f = 0.7$, $M=400$).}
    \label{fig:Different precoding:FF}  
    \vspace{-0.5em}
\end{figure*} 
Next, we compare the CDFs of different precoding schemes under the \textbf{OPS-OPC} design, prioritizing FFUEs with weights $w_n \!=\! 0.3$, $w_f \!=\! 0.7$ and $M\!=\!400$. From Fig. \ref{fig:Different precoding:FF}(\subref{fig:Sum:MinSEs:Objective:FF}), we observe that LZF achieves the highest median sum of the weighted minimum SE among the three precoding schemes. Specifically, LZF improves the median sum of the weighted minimum SE by a factor of $1.11$ compared to CZF and by a factor of $1.97$ compared to MRT. This is expected as LZF can mitigate certain interference components within the same subarrays and offers more flexibility in power control design than CZF, leading to better performance. Furthermore, since MRT cannot suppress interference, the median performance with CZF and LZF is significantly higher than that with MRT. The same insights can be obtained for the CDF of minimum SE for FF, see Fig. \ref{fig:Different precoding:FF}(\subref{fig:Sum:MinSE_F:FF}). In Fig. \ref{fig:Different precoding:FF}(\subref{fig:Sum:MinSE_N:FF}), we observe that MRT achieves the highest median minimum SE for NFUEs, followed by CZF, with LZF performing the worst. This is because, with MRT, nearly all transmit power is allocated to maximize the minimum SE of NFUEs, as the signals transmitted to the FFUEs are weak and MRT lacks interference mitigation capability. As a result, MRT is more effective in enhancing the performance of NFUE. In contrast, CZF provides balanced performance for both NFUEs and FFUEs by effectively eliminating intra-group interference. Meanwhile, with LZF, the priority is given to FFUEs, i.e., less power is allocated to optimize the performance of NFUEs.  
\begin{figure*}[t]
\hspace{-1em}
    \centering
    \begin{subfigure}[b]{0.33\textwidth}
        \centering
        \includegraphics[trim=0 0cm 0cm 0cm,clip,width=\textwidth]{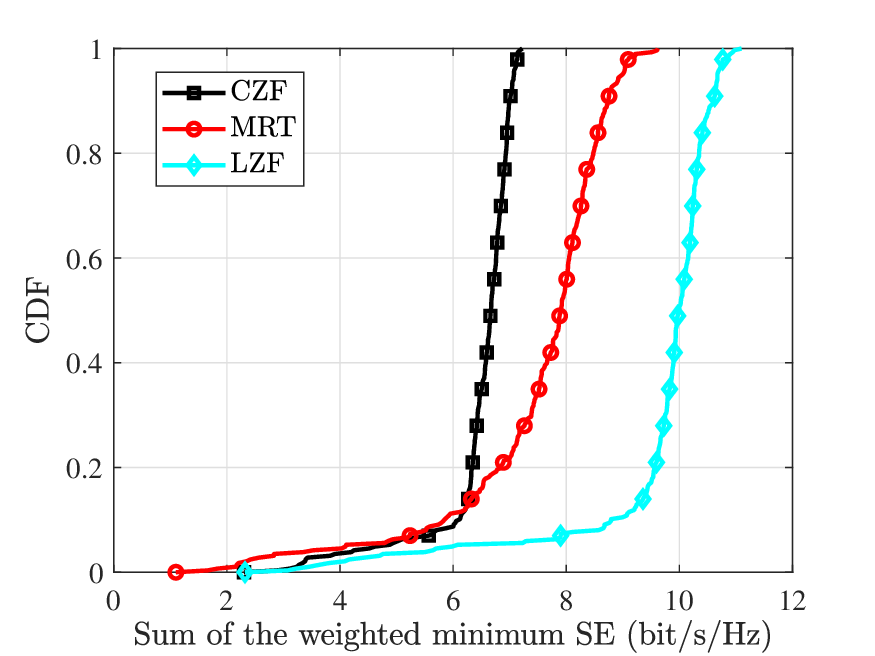}
        \caption{Objective function}
        \label{fig:Sum:MinSEs:Objective:NF}
    \end{subfigure} %
    \hfill
    \begin{subfigure}[b]{0.33\textwidth}
        \centering
        \includegraphics[trim=0 0cm 0cm 0cm,clip,width=\textwidth]{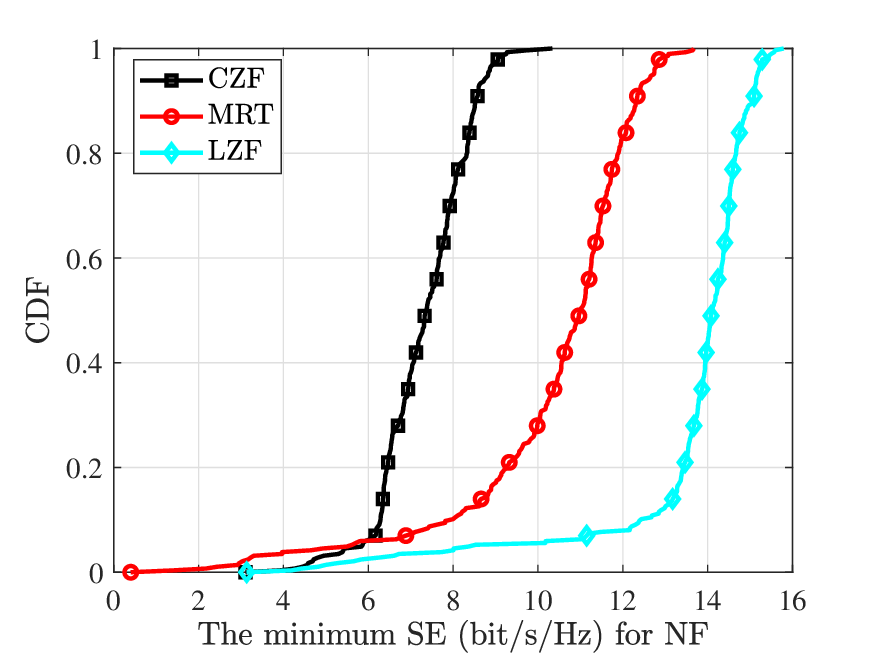}
        \caption{CDF of the minimum SE for NF}
        \label{fig:Sum:MinSE:NF}
    \end{subfigure} 
    \hfill
    \begin{subfigure}[b]{0.32\textwidth}
        \centering
        \includegraphics[trim=0 0cm 0cm 0cm,clip,width=\textwidth]{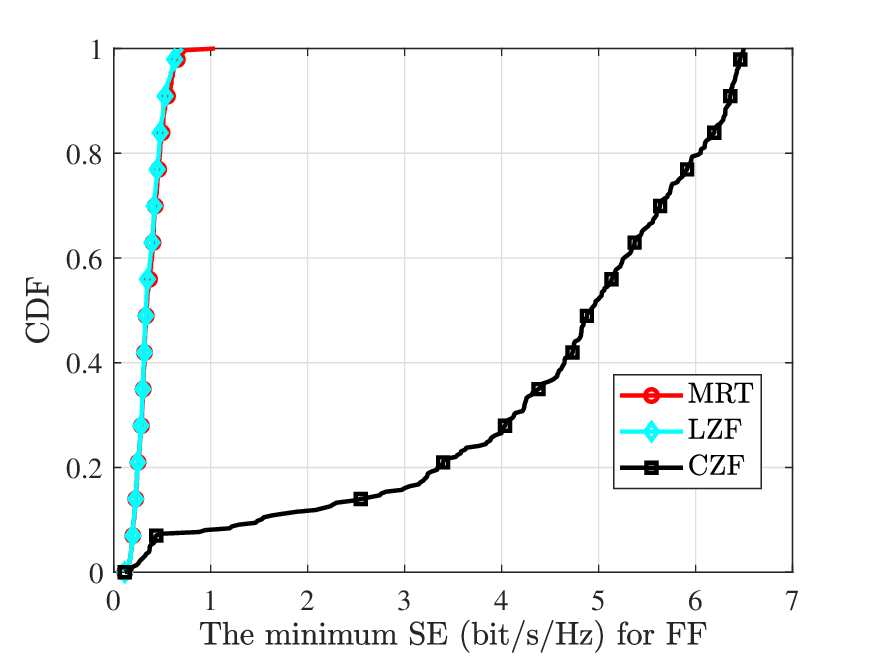}
        \caption{CDF of the minimum SE for FF}
        \label{fig:Sum:MinSE:FF}
    \end{subfigure}
    \vspace{-0.2em}
    \caption{CDFs of the SE for different precoding schemes and for \textbf{OPS-OPC} case ($w_n = 0.7$, $w_f = 0.3$, $M=400$).}
    \label{fig:Different precoding:NF}
    \vspace{-0.7em}
\end{figure*}

Figure \ref{fig:Different precoding:NF} evaluates the performance with CZF, LZF and MRT when prioritizing NFUEs in the \textbf{OPS-OPC} scenario with $w_n\!=\!0.7$, $w_f\!=\!0.3$, and $M\!=\!400$. In Fig.~\ref{fig:Different precoding:NF}(\subref{fig:Sum:MinSE:NF}), LZF demonstrates the best performance among the three precoding schemes, with MRT outperforming CZF. This outcome is primarily attributed to the objective function of the optimization problem, which aims to maximize the weighted sum SE of both NFUEs and FFUEs, rather than prioritizing a specific user group. MRT outperforms CZF since, in aiming to maximize the objective function, it allocates more power to enhance the performance of NFUEs, which consequently results in reduced power allocation for FFUEs. This trade-off is evident in Fig. \ref{fig:Different precoding:NF}(\subref{fig:Sum:MinSE:FF}), where the median of the minimum SE for FFUEs drops to $0.33$ bit/s/Hz under MRT, compared to $4.92$ bit/s/Hz with CZF. On the other hand, LZF achieves the best overall performance due to its flexible power control design—it allows each subarray to independently adjust its transmit power to each user. In contrast, CZF applies a uniform power control coefficient across all subarrays for a given user. Nonetheless, CZF provides a more balanced performance between NFUEs and FFUEs, as reflected in Figs.~\ref{fig:Different precoding:NF}(\subref{fig:Sum:MinSE:NF}) and \ref{fig:Different precoding:NF}(\subref{fig:Sum:MinSE:FF}).

\section{Conclusion} \label{Conclusion}
We investigated the performance of a RIS-assisted downlink XL-MIMO system for both NFUEs and FFUEs with CZF, LZF and MRT precoding schemes, taking into account the effects of VR selection, phase shifts design at the RIS, and power control design at the BS. We proposed a heuristic VR selection design, a two-stage phase shifts and power control design relying on SCA to maximize the sum of the weighted minimum SE of the system under different priority requirements. Our simulation results indicated that phase shifts design and power control can significantly improve the performance of the XL-MIMO assisted by the RIS with three linear precoding schemes. In particular, when giving equal priority to NFUEs and FFUEs, CZF is the best option, while LZF could achieve comparable performance. If we give priority to NFUEs or FFUEs, LZF can obtain the best performance, and CZF strikes a balance between the performance of NFUEs and FFUEs.

CSI acquisition errors or delays can significantly impact the accuracy of beamforming in XL-MIMO systems, leading to degradation in the achievable rate. Our proposed framework relies on SINR expressions that assume perfect channel knowledge; therefore, any inaccuracy in CSI may result in underestimation or overestimation of SINR values. Furthermore, errors in CSI acquisition can lead to incorrect VR selection, which in turn will compromise the subsequent design of RIS phase shifts and power control strategies. Such imperfections may also affect the convergence and effectiveness of the proposed penalty-based iterative algorithm used for optimization of RIS phase shifts. As such, developing efficient CSI acquisition techniques for RIS-assisted XL-MIMO systems remains a critical and promising direction for future research. To support real-time implementation, hardware acceleration using platforms, such as Field-Programmable Gate Arrays (FPGAs) or Graphics Processing Units (GPUs), can be employed to expedite key computations, while machine learning-based methods can also be explored to further reduce latency.
\appendices
\section{Proof of Proposition~\ref{Prop:MRT:NF}}~\label{Prop:MRT:NF:proof}
In order to calculate $\bDSk$ for the MRT precoding scheme, substituting the precoding vector $\bar{\qw}_{k}^{\MRT}$ in ~\eqref{eq:MRT} into~\eqref{eq:FFU:NF}, we have the following results:
\begin{align}   \label{eq:desired signal of NFUE}
   \bDSk &= 
   \sqrt{P_n} \! \sum\nolimits_{s \in \SknMRT}\!\! \sqrt{\betask^{\MRT}} \betakks,
\end{align}
with the fact that $ \bar{\qD}_{sk}\bar{\qw}_{sk}^{\MRT} \!=\! \bar{\qg}_{sk}$ if $s \!\in\! \SknMRT$ and $\mathbf{0}_{M^\ast}$ otherwise.

Similarly, we obtain $\vert\bUIi \vert^2$ as follows:
\begin{align}   \label{eq:simplified near-field interference of near-field scenario}
   \vert\bUIi \vert^2
   &= P_n\Big\vert \sum\nolimits_{s \in \SniMRT}\!\!\sqrt{\betasi^{\MRT}}  \betakis\Big\vert^2.
\end{align}

Now, we derive $\mathbb{E}\big\{\vert\btUIj\vert^2\big\}$, which can be expressed as
\begin{align} \label{eq:UI_J:Near}
   &\mathbb{E}\big\{\vert\btUIj\vert^2\big\}  = P_f \Ex\Big\{ \Big\vert\ {\sum\nolimits_{s \in \Sa}\sqrt{\Tetasj^{\MRT}}  \bar{\qg}_{sk}^{H}\Tilde{\qD}_{sj}\Tilde{\qw}_{sj}^{\MRT}}\Big\vert^2\Big\} 
   \nonumber \\
   &= P_f  \Ex\Big\{\sum\nolimits_{s \in \SjfMRT}\!\!\sum\nolimits_{s' \in \SjfMRT}\!\!\sqrt{\Tetasj^{\MRT} \Tetaspj^{\MRT}}\bar{\qg}_{sk}^{H}\Tilde{\qg}_{sj}
   \Tilde{\qg}_{s'j}^H \bar{\qg}_{s'k} \Big\} \nonumber \\
   &= \!P_f\!  {\sum\nolimits_{s \in \SjfMRT}\!\!\sum\nolimits_{s' \in \SjfMRT}\!\!\sqrt{\Tetasj^{\MRT} \Tetaspj^{\MRT}}\bar{\qg}_{sk}^{H}\Ex\big\{\Tilde{\qg}_{sj} \Tilde{\qg}_{s'j}^H \big\}\bar{\qg}_{s'k} },
\end{align}
where $\Tilde{\qg}_{sj}^H = \qh_j^H\bTeta\qH_{2,s}$, and we have
\begin{align} \label{eq:near-field F_ss}
    &\Ex\big\{\Tilde{\qg}_{sj} \Tilde{\qg}_{s'j}^H \big\} =\Ex\{ (\qh_j^H\bTeta\qH_{2,s})^H (\qh_j^H\bTeta\qH_{2,s'})\} \nonumber \\
    &= \Ex\{(\alpha_2\bar{\qH}_{2,s}^H + \beta_2\Tilde{\qH}_{2,s}^H) \qM_{jj} (\alpha_2\bar{\qH}_{2,s'} + \beta_2\Tilde{\qH}_{2,s'}) \} \nonumber \\
    &= \alpha_2^2 \bar{\qH}_{2,s}^H \qM_{jj} \bar{\qH}_{2,s'} + \beta_2^2 \Ex\{\Tilde{\qH}_{2,s}^H \qM_{jj} \Tilde{\qH}_{2,s'} \}. 
\end{align}
When $s=s'$, we have $\Ex\{\Tilde{\qH}_{2,s}^H \qM_{jj} \Tilde{\qH}_{2,s'} \} = \trace(\qM_{jj}) \qI_{M^{\ast}} = \varsigma_k N\qI_{M^{\ast}}$. Otherwise, $\Ex\{\Tilde{\qH}_{2,s}^H \qM_{jj} \Tilde{\qH}_{2,s'} \} = \boldsymbol{0}_{M^{\ast}\times M^{\ast}}$. Therefore, substituting \eqref{eq:near-field F_ss} into \eqref{eq:UI_J:Near}, we have
\begin{align} \label{eq:final far-field interference expression}
   \Ex\big\{\vert\btUIj\vert^2\big\} 
   = \!P_f \! \sum\nolimits_{s \in \SjfMRT}\!\!\sum\nolimits_{s' \in \SjfMRT}\!\!\sqrt{\Tetasj^{\MRT} \Tetaspj^{\MRT}} \Talphakjssp.     
\end{align}
Substituting~\eqref{eq:desired signal of NFUE}, \eqref{eq:simplified near-field interference of near-field scenario} and \eqref{eq:final far-field interference expression} into \eqref{eq:simplified SINR expression of NFUE}, we obtain the final result given in~\eqref{eq:FSINR:NF:VR}.

\section{Proof of Proposition~\ref{Prop:MRT:FF}}~\label{Prop:MRT:FF:proof}
For FFUE $k$ with the MRT precoding scheme, we substitute the precoding vector $\Tilde{\qw}_{k}^{\MRT}$ in \eqref{eq:MRT} into~\eqref{eq:FFU:UF} to derive $\tDSk$, $\mathbb{E}\{ \vert\tBUk\vert^2\}$, $\mathbb{E}\{ \vert\tUIj\vert^2\}$ and $\mathbb{E}\{\vert\tbUIi\vert^2\}$ as follows: 
\begin{align} \label{eq:simplified desired signal of far-field scenario}
    \tDSk &= \sqrt{P_f}  \mathbb{E}\Big\{\sum\nolimits_{s\in\Set}\!\!\sqrt{\Tilde{\eta}_{sk}^{\MRT}}\Tilde{\qg}_{sk}^H \Tilde{\qD}_{sk}\Tilde{\qw}_{sk}^{\MRT}\Big\} \nonumber \\
    & = \sqrt{P_f}\sum\nolimits_{s \in \SkfMRT} \!\! \sqrt{\Tilde{\eta}_{sk}^{\MRT}} \Tcks,
\end{align}
with the fact that $\Tilde{\qD}_{sk} \Tilde{\qw}_{sk} \!=\! \Tilde{\qg}_{sk} \in \mathbb{C}^{M^{\ast} \times 1} $ if $s \!\in\! \SkfMRT$, and
\begin{align}   \label{eq:E value of far-field scenario}
    \Tcks &\triangleq \Ex\big\{\Tilde{\qg}_{sk}^H \Tilde{\qg}_{sk}\big\}=
    \Ex\{ \qh_k^H\bTeta\qH_{2,s} (\qh_k^H\bTeta\qH_{2,s})^H\} \nonumber \\
    &= \qh_k^H \bTeta \Ex\{  \qH_{2,s} \qH_{2,s}^H \} \bTeta^H \qh_k \nonumber \\
    &= \qh_k^H\bTeta (\alpha_2^2 \bar{\qH}_{2,s} \bar{\qH}_{2,s}^H + \beta_2^2 M^{\ast} \qI_N) \bTeta^H\qh_k.
\end{align}

To derive $\tBUk$, we use the fact that the variance of a sum of independent RVs is equal to the sum of the variances. Therefore, $\Ex\{ \vert\tBUk\vert^2\}$ can be expressed as
\begin{align} \label{eq:beamforming uncertainty of far-field scenario}
    \Ex\{ \vert\tBUk\vert^2\}  
    &\!=\! P_f \! \! \sum\nolimits_{s \in \SkfMRT}\! \! \Tilde{\eta}_{sk}^{\MRT}\Ex\big\{ \vert\Tilde{\qg}_{sk}^H \Tilde{\qg}_{sk}\vert^2\} \!-\! \vert\Ex\{\Tilde{\qg}_{sk}^H \Tilde{\qg}_{sk}\} \vert^2 \big\} \nonumber \\
    &\!=\! P_f \! \! \sum\nolimits_{s \in \SkfMRT}\! \!\Tilde{\eta}_{sk}^{\MRT}\big( T^{s}_k-  (\Tcks)^2 \big),
\end{align}
where $T^{s}_k \triangleq \Ex\big\{ \vert\Tilde{\qg}_{sk}^H \Tilde{\qg}_{sk}\vert^2\}$, which can be obtained as
\begin{align} \label{eq:T_s}
    T^{s}_k 
    &=\Ex\Big\{ \vert\qh_k^H\bTeta\qH_{2,s} (\qh_k^H\bTeta\qH_{2,s})^H\vert^2\Big\} \nonumber \\
    &= \Ex\big\{ (\qh_k^H\bTeta \qH_{2,s} \qH_{2,s}^H \bTeta^H \qh_k) (\qh_k^H\bTeta \qH_{2,s} \qH_{2,s}^H \bTeta^H \qh_k)^H \big \} \nonumber \\ 
    &= \qh_k^H\bTeta \qC_{sk} \bTeta^H \qh_k,  
\end{align}
where $\qC_{sk} \triangleq \Ex\big\{ \qH_{2,s} \qH_{2,s}^H \qM_{kk} \qH_{2,s} \qH_{2,s}^H \big \}$ and $\qC_{sk}$ can be obtained by replacing $\qM_{jj}$ with $\qM_{kk}$ in \eqref{eq:C_sj}.

Substituting \eqref{eq:T_s} and $\Tcks$ into \eqref{eq:beamforming uncertainty of far-field scenario}, we have
\begin{align} \label{eq:final beamforming uncertainty of far-field scenario}
\Ex\{ \vert\tBUk\vert^2\} = P_f \! \sum\nolimits_{s \in \SkfMRT}\! \Tilde{\eta}_{sk}^{\MRT} \Taks.   
\end{align}

Now, we focus on $\Ex\big\{\big\vert\tUIj\big\vert^2\big\}$, which can be expressed as
\begin{align} \label{eq:updated far-field interference of far-field scenario}
    \Ex\big\{\big\vert\tUIj\big\vert^2\big\}  &= P_f  \Ex\Big\{ \Big\vert\sum\nolimits_{s\in \Set}\sqrt{\Tetasj^{\MRT}} \Tilde{\qg}_{sk}^H \Tilde{\qD}_{sj}\Tilde{\qw}_{sj}^{\MRT} \Big\vert^2 \Big\} \nonumber \\ 
    &\hspace{-5em}= P_f \! \sum\nolimits_{s \in \SjfMRT}\!\!\sum\nolimits_{s' \in \SjfMRT}\!\!\sqrt{\Tetasj^{\MRT}\Tetaspj^{\MRT}} \Ex\big\{ \Tilde{\qg}_{sk}^H \Tilde{\qg}_{sj} \Tilde{\qg}_{s'j}^H \Tilde{\qg}_{s'k} \big\}. 
\end{align}
The expectation term in~\eqref{eq:updated far-field interference of far-field scenario} can be rewritten as
\begin{align} \label{eq:L_kk}
    &\Ex\big\{ \Tilde{\qg}_{sk}^H \Tilde{\qg}_{sj} \Tilde{\qg}_{s'j}^H \Tilde{\qg}_{s'k} \big\}  \nonumber \\
    &= \Ex\big\{ \qh_k^H \bTeta \qH_{2,s} \qH_{2,s}^H \bTeta^H \qh_j \qh_j^H \bTeta \qH_{2,s'} \qH_{2,s'}^H \bTeta^H \qh_k \big\}    \nonumber \\
    &= \qh_k^H\bTeta \Ex\big\{  \qH_{2,s} \qH_{2,s}^H \qM_{jj} \qH_{2,s'} \qH_{2,s'}^H  \big\} \bTeta^H \qh_k. 
\end{align}
Note that we have $\Ex\big\{ \Tilde{\qg}_{sk}^H \Tilde{\qg}_{sj} \Tilde{\qg}_{sj}^H \Tilde{\qg}_{sk} \big\} =\qh_k^H\bTeta \qC_{sj}  \bTeta^H \qh_k$
when $s=s'$ based on \eqref{eq:T_s}.  Otherwise, the expectation in~\eqref{eq:L_kk} can be derived as
\begin{align*} 
    &\qD_j^{ss'} \triangleq 
    \Ex\big\{  \qH_{2,s} \qH_{2,s}^H\} \qM_{jj} \Ex\big\{\qH_{2,s'} \qH_{2,s'}^H  \big\}  \nonumber \\
    &= (\alpha_2^2 \bar{\qH}_{2,s} \bar{\qH}_{2,s}^H \!+\! \beta_2^2 M^{\ast} \qI_N) \qM_{jj} (\alpha_2^2 \bar{\qH}_{2,s'} \bar{\qH}_{2,s'}^H  \!+\! \beta_2^2 M^{\ast} \qI_N).
\end{align*}
Substituting \eqref{eq:L_kk} into \eqref{eq:updated far-field interference of far-field scenario}, we obtain
\begin{align} \label{eq:final intra-group interference in far-field scenario}
\Ex\{\big\vert\tUIj\big\vert^2\big\}  = P_f \sum\nolimits_{s \in \SjfMRT}\!\!\sum\nolimits_{s' \in \SjfMRT }\!\!\sqrt{\Tetasj^{\MRT}\Tetaspj^{\MRT}} \Tbkjssp.
\end{align}

Finally, we compute $\Ex\big\{\big\vert\tbUIi\big\vert^2\big\}$, which can be written as
\begin{align} \label{eq:UI_I:Far}
    &\Ex\Big\{\big\vert\tbUIi\big\vert^2\Big\}  = P_n \Ex\Big\{ \big\vert\sum\nolimits_{s \in \SniMRT}\!\!\sqrt{\betasi^{\MRT}}\Tilde{\qg}_{sk}^H \bwsiMRT \big\vert^2 \Big\} \nonumber \\
    &= P_n \Ex\Big\{\sum\nolimits_{s \in \SniMRT}\!\! \sum\nolimits_{s' \in \SniMRT}\!\! \sqrt{\betasi^{\MRT} \betaspi^{\MRT}} \; \Tilde{\qg}_{sk}^H \bar{\qg}_{si} \bar{\qg}_{s'i}^H \Tilde{\qg}_{s'k} \Big\} \nonumber \\
    &= \!P_n\! \sum\nolimits_{s\in \SniMRT}\!\! \sum\nolimits_{s' \in \SniMRT}\!\! \sqrt{\betasi^{\MRT} \betaspi^{\MRT}} \bar{\qg}_{s'i}^H \Ex \{\Tilde{\qg}_{s'k} \Tilde{\qg}_{sk}^H \} \bar{\qg}_{si}, 
\end{align}
where
\begin{align}   
    &\Ex\big\{\Tilde{\qg}_{s'k} \Tilde{\qg}_{sk}^H \big\} 
    = \Ex\{(\qh_k^H\bTeta\qH_{2,s'})^H\qh_k^H\bTeta\qH_{2,s}\} \nonumber \\
    &=\Ex\{(\alpha_2\bar{\qH}_{2,s'}^H + \beta_2\Tilde{\qH}_{2,s'}^H) \qM_{kk}(\alpha_2\bar{\qH}_{2,s} + \beta_2\Tilde{\qH}_{2,s})\} \nonumber \\
    &= \alpha_2^2\bar{\qH}_{2,s'}^H\qM_{kk}\bar{\qH}_{2,s} + \beta_2^2 \Ex\{ \Tilde{\qH}_{2,s'}^H \qM_{kk} \Tilde{\qH}_{2,s})\},
\end{align}
when $s=s'$, we have $\Ex\{\Tilde{\qH}_{2,s'}^H \qM_{kk} \Tilde{\qH}_{2,s})\} = \trace(\qM_{kk})\qI_{M^{\ast}} = \varsigma_k N\qI_{M^{\ast}}$. Otherwise, $\Ex\{\Tilde{\qH}_{2,s'}^H \qM_{kk} \Tilde{\qH}_{2,s})\} = \boldsymbol{0}_{M^{\ast}\times M^{\ast}}$.
Therefore, we can obtain the following result
\begin{align}   \label{eq:near-field interference of far-field scenario}
    \Ex\big\{\big\vert\bUIi\big\vert^2\big\}\! =\! P_n \sum\nolimits_{s\in \SniMRT} \!\!\sum\nolimits_{s'\in \SniMRT}\!\! \sqrt{\betasi^{\MRT} \betaspi^{\MRT}} \balphakissp.
\end{align}
Substituting \eqref{eq:simplified desired signal of far-field scenario}, \eqref{eq:final beamforming uncertainty of far-field scenario}, \eqref{eq:final intra-group interference in far-field scenario}, \eqref{eq:near-field interference of far-field scenario} into \eqref{eq:simplified SINR expression of FFUE}, we obtain the final result \eqref{eq:FSINR:FF:VR}.

\bibliographystyle{IEEEtran}
\bibliography{IEEEabrv,references}

\end{document}